\def\submit{0}   
	\newcommand{\full}[1]{}
	\newcommand{\confer}[1]{#1}
	\newcommand{\full}[1]{#1}
	\newcommand{\confer}[1]{}
\newtheorem{claim}[theorem]{Claim}
\newtheorem{empirical}[theorem]{Empirical Observation}
\newenvironment{myproof}{\par\textsc{Proof.}}{}
\newcommand{\myproofend}{\hfill \qed}
\newcommand{\Real}{\mathbb{R}}
\newcommand{\bT}{{\bf T}}
\newcommand{\cE}{{\cal E}}
\newcommand{\nint}[1]{\lfloor #1\rceil}
\newcommand{\flo}[1]{\lfloor #1\rfloor}
\newcommand{\pr}{{\rm Pr}}
\newcommand{\EX}{\hbox{\bf E}}
\newcommand{\odeg}[1]{\text{\rm deg}(#1)}
\newcommand{\halpha}{\widehat{\alpha}}
\newcommand{\htau}{\widehat{\tau}}
\newcommand{\nuone}{\nu_1}
\newcommand{\nutwo}{\nu_2}
\newcommand{\nuthr}{\nu_3}
\newcommand{\nufour}{\nu_4}
\newcommand{\Sec}[1]{\hyperref[sec:#1]{\S\ref*{sec:#1}}} %
\newcommand{\App}[1]{\hyperref[sec:#1]{Appendix~\ref*{sec:#1}}} %
\newcommand{\Eqn}[1]{\hyperref[eq:#1]{(\ref*{eq:#1})}} %
\newcommand{\Fig}[1]{\hyperref[fig:#1]{Figure\,\ref*{fig:#1}}} %
\newcommand{\Tab}[1]{\hyperref[tab:#1]{Table\,\ref*{tab:#1}}} %
\newcommand{\Thm}[1]{\hyperref[thm:#1]{Theorem\,\ref*{thm:#1}}} %
\newcommand{\Lem}[1]{\hyperref[lem:#1]{Lemma\,\ref*{lem:#1}}} %
\newcommand{\Prop}[1]{\hyperref[prop:#1]{Prop.~\ref*{prop:#1}}} %
\newcommand{\Cor}[1]{\hyperref[cor:#1]{Cor.~\ref*{cor:#1}}} %
\newcommand{\Def}[1]{\hyperref[def:#1]{Defn.~\ref*{def:#1}}} %
\newcommand{\Alg}[1]{\hyperref[alg:#1]{Alg.~\ref*{alg:#1}}} %
\newcommand{\Ex}[1]{\hyperref[ex:#1]{Ex.~\ref*{ex:#1}}} %
\newcommand{\Clm}[1]{\hyperref[clm:#1]{Claim~\ref*{clm:#1}}} %
\newcommand{\M}[1]{{{{\MakeUppercase{#1}}}}} %
\newcommand{\qtext}[1]{\quad\text{#1}\quad} %
\begin{document}

\markboth{C. Seshadhri, A. Pinar, T. G. Kolda}{Improving Stochastic Kronecker Graphs}

\title{An In-Depth Analysis of Stochastic Kronecker Graphs}

\author{C. SESHADHRI
\affil{Sandia National Laboratories} 
ALI PINAR
\affil{Sandia National Laboratories} 
TAMARA G. KOLDA
\affil{Sandia National Laboratories}}

\begin{abstract}
  Graph analysis is playing an increasingly important role in science
  and industry. Due to numerous limitations in sharing real-world
  graphs, models for generating massive graphs are critical for
  developing better algorithms. In this paper, we analyze the
  stochastic Kronecker graph model (SKG), which is the
  foundation of the Graph500 supercomputer benchmark due to its
  favorable properties and easy parallelization. Our goal is to
  provide a deeper understanding of the parameters and properties of this model so
  that its functionality as a benchmark is increased.  We develop a
  rigorous mathematical analysis that shows this model \emph{cannot}
  generate a power-law distribution or even a lognormal distribution.
  However, we formalize an enhanced version of the SKG
  model that uses random noise for smoothing.
  We prove both in theory and in practice that this enhancement leads to a lognormal
  distribution. Additionally, we provide a precise analysis of 
  isolated vertices, showing that the graphs that are produced by SKG might be quite different than
  intended. For example, between 50\% and 75\% of the vertices in the
  Graph500 benchmarks will be isolated.  Finally, we show that this
  model tends to produce extremely small core numbers 
  (compared to most social networks and other real graphs) for common parameter choices. 

\end{abstract}

\category{D.2.8}{Software Engineering}{Metrics}[complexity measures, performance measures]
\category{E.1}{Data}{Data Structures}[Graphs and Networks]
\terms{Algorithms, Theory}

\keywords{graph models, R-MAT, Stochastic Kronecker Graphs (SKG), Graph500}

\begin{bottomstuff}
This work was funded by the applied mathematics program at the United
States Department of Energy and performed at Sandia National
Laboratories, a multiprogram laboratory operated by Sandia
Corporation, a wholly owned subsidiary of Lockheed Martin Corporation,
for the United States Department of Energy's National Nuclear Security
Administration under contract DE-AC04-94AL85000.

Author's addresses:
C. Seshadhri, A. Pinar, T. Kolda, Sandia National Laboratories, Livermore, CA 94551.
\end{bottomstuff}

\maketitle

\section{Introduction}
\label{sec:intro}

The role of graph analysis is  becoming increasingly important in science
and industry because of the prevalence of graphs in diverse scenarios
such as social networks, the Web, power grid networks, and even
scientific collaboration studies. Massive
graphs occur in a variety of situations, and we need to design better and faster 
algorithms in order to study them. However, it can be difficult
to access to 
informative large graphs in order to test our algorithms. 
Companies like Netflix, AOL, and Facebook have vast arrays
of data but cannot share it due to legal or copyright
issues\footnote{For example, Netflix opted not to pursue the Netflix Prize sequel due to concerns about lawsuits; see \url{http://blog.netflix.com/2010/03/this-is-neil-hunt-chief-product-officer.html}}.
Moreover, graphs with billions of vertices cannot be
communicated easily due to their sheer size.

As was noted in \cite{ChFa06}, good \emph{graph models} are extremely important for
the study and algorithmics of real networks. Such a model should be fairly
easy to implement  and have few parameters,  while exhibiting the common properties  of  real networks. 
Furthermore,  models are needed  to test algorithms and architectures designed for large graphs. 
But the theoretical and research benefits are also
obvious:  gaining  insight into the properties and processes that
create real networks.

The \emph{stochastic Kronecker graph} (SKG) \cite{LeFa07,LeChKlFa10}, a generalization
of the \emph{recursive matrix} (R-MAT) model \cite{ChZhFa04}, has been proposed
for these purposes. It has very few parameters and can generate large graphs
quickly. Indeed, it is one of the few models that can generate
graphs  fully in \emph{parallel}. It has been empirically observed to have
interesting real-network-like properties. We stress that this is not
just of theoretical or academic interest---this model has been
chosen to create graphs for the Graph500 supercomputer
benchmark \cite{Graph500}. 

It is important
to know how the parameters of this model affect various properties
of the graphs. 
We stress that a mathematical analysis is important for understanding
the inner working of a model. We quote Mitzenmacher~\cite{Mi06}:
``I would argue, however, that without validating a model it is
not clear that one understands the underlying behavior and therefore how the
behavior might change over time. It is not enough to plot data and demonstrate
a power law, allowing one to say things about current behavior; one wants to
ensure that one can accurately predict future behavior appropriately, and that
requires understanding the correct underlying model."

\subsection{Notation and Background}
\label{sec:rmat}

We explain the SKG model and notation.
Our goal is to generate a directed graph $G = (V, E)$ with $n = |V|$ nodes and $m = |E|$ edges. 
The general form of the SKG model allows for an arbitrary square generator matrix and 
assumes that $n$ is a power of its size. 
Here, we focus on the $2 \times 2$ case (which is equivalent to R-MAT), defining the generating matrix as
\begin{displaymath}
  \M{T} =
  \begin{bmatrix} 
    t_1 & t_2 \\ 
    t_3 & t_4 
  \end{bmatrix}
  \qtext{with}
  t_1 + t_2 + t_3 + t_4 = 1 \ \textrm{and} \ \min_i t_i > 0.
\end{displaymath}
We assume that $n=2^\ell$ for some integer $\ell>0$. 
\emph{For the sake of cleaner formulae, we assume
that $\ell$ is even in our analyses.}
Each edge is inserted according to the probabilities
defined by
\begin{displaymath}
  \M{P} = 
  \underbrace{
    \M{T} \otimes \M{T} \otimes \cdots \otimes \M{T}
  }_{\text{$\ell$ times}},
\end{displaymath}
where $\otimes$ denotes the  Kronecker product operation.
In practice, the matrix $\M{P}$ is never formed explicitly. 
Instead, each edge is inserted as follows.
Divide the adjacency matrix into
four quadrants, and choose one of them with the corresponding probability $t_1, t_2, t_3$, or $t_4$.
Once a quadrant is chosen, repeat this recursively in that quadrant. Each time we iterate,
we end up in a square submatrix whose dimensions are exactly halved. After $\ell$ iterations,
we reach a single cell of the adjacency matrix, and an edge is inserted. 
It should be noted that here we  take a slight liberty in requiring the entries of $T$ to sum to 1. In fact, the SKG model as defined in \cite{LeChKlFa10} works with the matrix $mP$, which is considered the matrix of probabilities for the existence of each individual edge (though it might be more accurate to think of it as an expected value).

Note that all edges can be inserted in parallel. This is one of the major advantages
of the SKG model and why it is appropriate for generating large supercomputer benchmarks.

For convenience, we also define some derivative parameters that will
be useful in subsequent discussions. We let $\Delta = m/n$ denote the
\emph{average degree} and let $\sigma = t_1+t_2-0.5$ denote the
\emph{skew}. 
The parameters of the SKG model are summarized in \Tab{params1}.

\begin{table}[t]
  \caption{Parameters for SKG models}
  \label{tab:params1}
  \centering
  \begin{boxedminipage}{.9\textwidth}
    \textsc{Primary Parameters}
    \begin{itemize}
    \item $T = \begin{bmatrix} t_1 & t_2 \\ t_3 & t_4 \end{bmatrix} = $ 
      generating matrix with $t_1 + t_2 + t_3 + t_4 = 1$     
    \item $\ell = $ number of levels (assumed even for analysis)
    \item $m = $ number of edges
    \end{itemize}
    \textsc{Derivative Parameters} 
    \begin{itemize}
    \item $n = 2^{\ell} = $ number of nodes 
    \item  $\Delta = m/n = $ average degree
    \item  $\sigma = t_1 + t_2 - 0.5 = $ skew %
    \end{itemize}
  \end{boxedminipage}
\end{table}

\subsection{Our Contributions}
\label{sec:contributions}

Our overall contribution is to provide a thorough study of the properties
of SKG and show how the parameters affect these properties. 
We focus on the degree distribution,
the number of (non-isolated nodes), the core sizes, and the trade-offs
in these various goals. We give rigorous mathematical theorems and proofs 
explaining the degree distribution of SKG, a noisy version of SKG, and the number
of isolated vertices.

\begin{asparaenum}
\item {\bf Degree distribution:} We provide a rigorous mathematical analysis of the degree distribution of SKG. The degree distribution has often been claimed to be power-law, 
or sometimes  lognormal \cite{ChZhFa04,LeChKlFa10,KiLe10}. 
Kim and Leskovec \cite{KiLe10} prove that the degree distribution has some lognormal characteristics.
Gro\"er et al.\@ \cite{GrSuPo10} give exact series expansions for the degree distribution, and express it
as a mixture of normal distributions.
This provides a qualitative explanation for the oscillatory behavior 
of the degree distribution (refer to \Fig{noisy_degdist_graph500}). Since the distribution is quite far from being truly lognormal, there has been no simple closed form expression that closely approximates it.
We fill this gap by providing a complete mathematical description. 
We prove that SKG \emph{cannot} generate a power law distribution, or
even a lognormal distribution. It is most accurately characterized as fluctuating between a lognormal distribution 
and an exponential tail. We provide a simple formula that approximates
the degree distribution.

\item {\bf Noisy SKG:} It has been mentioned in passing \cite{ChZhFa04}
that adding noise to SKG at each level 
smoothens the degree distribution, but this
has never been formalized or studied. 
We define a specific noisy version of SKG (NSKG).
We prove theoretically and empirically that NSKG
leads to a lognormal distribution. (We give some experimental
results showing a naive addition of noise does \emph{not} work.)
The lognormal distribution is important since it has been observed in real data \cite{BiFaKo01,PeFlLa+02,Mi03,ClShNe09}.
One of the major benefits of our enhancement 
is that only $\ell$ additional random numbers are needed in total.
Using Graph500 parameters, \Fig{noisy_degdist_graph500} 
plots the degree distribution of a (standard) SKG and NSKG for two levels of (maximum) noise.
We can clearly see that noise dampens the oscillations, leading to
a lognormal distribution. We note that though the modification of NSKG is straightforward, 
the reason why it works is not. It involves an intricate
mathematical analysis, which may be of theoretical interest in itself.

\begin{figure}
  \centering
  \includegraphics[width=.45\textwidth,trim=0 0 0 5,clip]{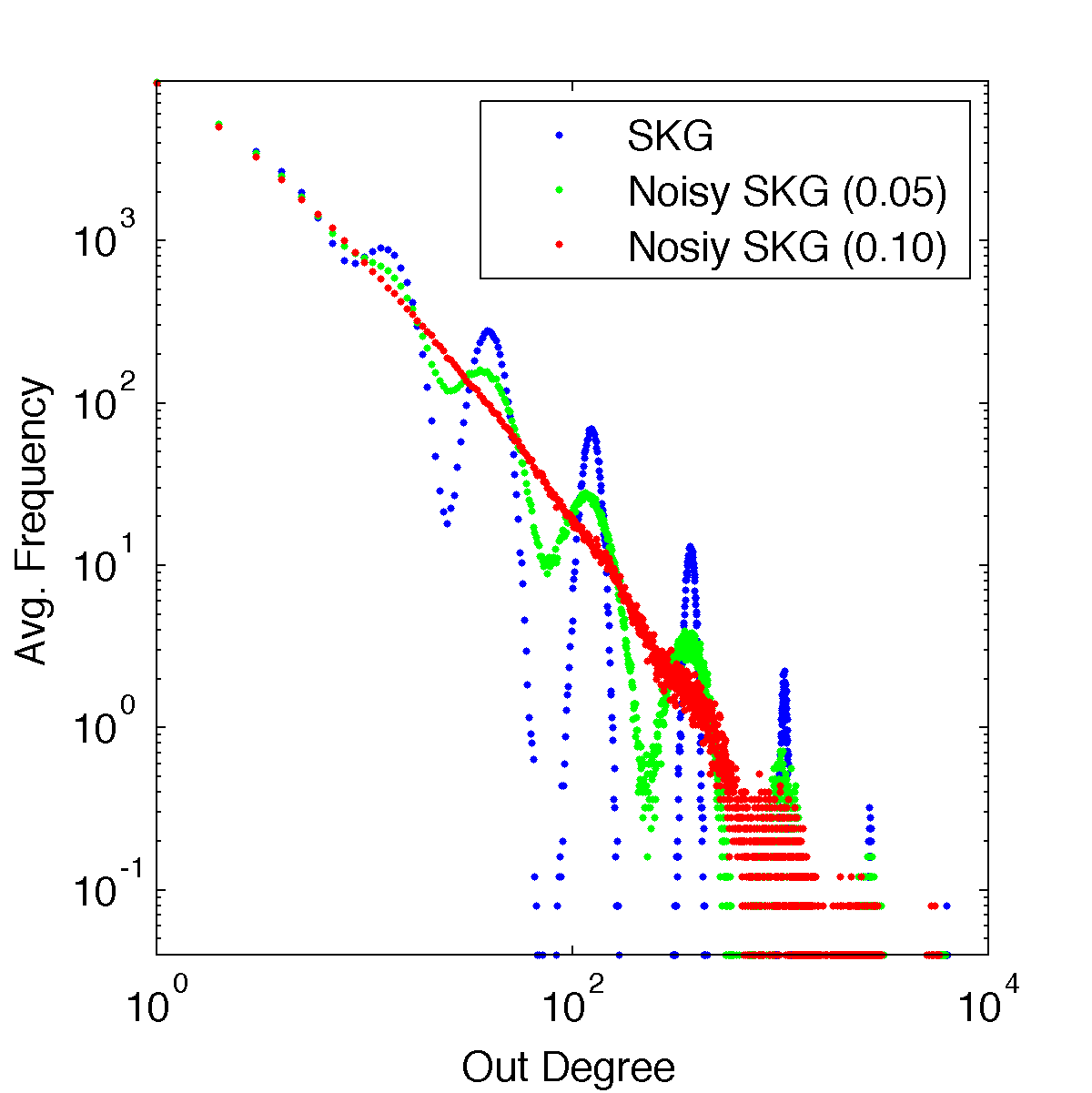}
  \caption{Comparison of degree distributions (averaged over 25
    instances) for SKG and two noisy variations, using the $T$ from the Graph500 Benchmark parameters 
    with $\ell=16$.}
  \label{fig:noisy_degdist_graph500}
\end{figure}

\item {\bf Isolated vertices:} An isolated vertex is one that has no
edges incident to it (and hence is not really part of the output graph).
We provide a formula that accurately estimates the fraction of isolated vertices. 
We discover the surprising result that in the Graph500 benchmark graphs,
50-75\% vertices are isolated; see \Tab{isol-500}. This is a major concern for the
benchmark, since the massive graph generated has a much reduced size. Furthermore,
the average degree is now much higher than expected.

\begin{table}[t] 
  \caption{Expected percentage of isolated vertices and repeat edges, 
    along with average degree of \emph{non-isolated} nodes for the Graph500 benchmark. 
    Excluding the isolated vertices results in a much higher average degree than 
    the value of 16 that is specified by the benchmark.} 
\centering %
\begin{tabular}{cccc} %
\hline 
$\ell$ & \% Isolated Nodes & \% Repeat Edges & Avg.\@ Degree \\  \hline
26 & 51 & 1.2 & 32 \\
29 & 57 & 0.7 & 37 \\
32 & 62 & 0.4 & 41 \\
36 & 67 & 0.2 & 49 \\
39 & 71 & 0.1 & 55 \\
42 & 74 & 0.1 & 62 \\
\hline %
\end{tabular}
\label{tab:isol-500}
\end{table}

\item {\bf Core numbers:} The study of $k$-cores is an important tool used to study the structure of social
networks because it is a mark of the connectivity and special processes that generate these graphs \cite{ChFa06,KuNoTo10,AlDaBa+08,GkMi03,GoDo06,CaHaKi+07,AnCh09}. We 
empirically show how the core numbers have unexpected correlations with SKG parameters. 
We observed that for most of the current SKG parameters used for modeling real graphs, max core numbers
are extremely small (much smaller than most corresponding real graphs). We show how modifying the matrix $T$
affects core numbers.
Most strikingly, we observe that changing $T$ to increase the max core number actually leads to an increase 
in the fraction of isolated vertices.
\end{asparaenum}

\subsection{Influence on Graph500 benchmark} \label{sec:graph500}

Our results have been communicated to the Graph500 steering committee, who have found
them useful in understanding the Graph500 benchmark. The oscillations in the degree distribution
of SKG was a major concern for the committee. 
Our proposed NSKG
model has been implemented in the current Graph500 code\footnote{The file 
generator/graph-generator.c in the most recent version as of July 2012 (2.1.4) has the implementation,
with a variable SPK\_NOISE\_LEVEL controlling the NSKG noise. Available at \url{http://www.graph500.org/sites/default/files/files/graph500-2.1.4.tar.bz2}}.

Our analysis also solves the mystery of isolated vertices and how they are related to the SKG parameters. Members
of the steering committee had observed that the number of isolated vertices varied greatly
with the matrix $T$, but did not have an explanation for this. 

\subsection{Parameters for empirical study}

Throughout the paper, we discuss a few sets of SKG parameters.
The first is the Graph500 benchmark \cite{Graph500}. The other two are 
parameters used in \cite{LeChKlFa10} to model a
co-authorship network (CAHepPh) and a web graph (WEBNotreDame).
We list these parameters here for later reference.

\begin{asparaitem}
	\item \textbf{Graph500}: $T=[0.57, 0.19; 0.19, 0.05]$, $\ell \in \{$26, 29, 32, 36, 39, 42$\}$, and $m = 16 \cdot 2^\ell$.
	\item \textbf{CAHepPh}: $T = [0.42, 0.19; 0.19, 0.20]$, $\ell = 14$, and $m = 237,010$.
	\item \textbf{WEBNotreDame}\footnote{In \cite{LeChKlFa10},
	$\ell$ was 19. We make it even because, for the sake of presentation, we perform experiments
	and derive formulae for even $\ell$.}: $T = [0.48, 0.20; 0.21, 0.11]$, $\ell = 18$, and $m = 1,497,134$.
\end{asparaitem}

\section{Previous Work}
\label{sec:related}

The R-MAT model was
defined by Chakrabarti et al.\@ \cite{ChZhFa04}. The general and more powerful
SKG model was introduced by Leskovec et al.\@ \cite{LeChKlFa05} and fitting
algorithms were proposed by Leskovec and Faloutsos \cite{LeFa07} (combined in \cite{LeChKlFa10}).
This model has generated significant interest and notably was
chosen for the Graph500 benchmark \cite{Graph500}. 
Kim and Leskovec~\cite{KiLe10} defined the Multiplicative Attribute Graph (MAG) model, 
a generalization of SKG where each level may have a different matrix $T$.
They suggest that certain configurations of these matrices could lead
to power-law distributions. 

Since the appearance of the SKG model, there have been analyses of its properties.
The original paper \cite{LeChKlFa10} provides some basic theorems and empirically show
a variety of properties. Mahdian and Xu~\cite{MaXu10} specifically study how
the model \emph{parameters} affect the graph properties. They show phase transition
behavior (asymptotically) for occurrence of a large connected component and shrinking diameter.
They also initiate a study of isolated vertices. 
When the SKG parameters satisfy a certain condition,
the number of isolated vertices approaches $n$; however, their
theorems do not help predict 
the number of isolated vertices for a given setting of SKG.
In the analysis of the MAG model \cite{KiLe10}, it is shown that the SKG degree
distribution has some lognormal characteristics. (Lognormal
distributions have been observed in real data \cite{BiFaKo01,PeFlLa+02,ClShNe09}.
Mitzenmacher~\cite{Mi03} gives a survey of lognormal distributions.)

Sala et al.\@ \cite{SaCaWiZa10} perform an extensive
empirical study of properties of graph models, including SKG.
Miller et al.\@ \cite{MiBlWo10} show that they can detect anomalies embedded in an SKG.
Moreno et al.\@ \cite{MoKiNeVi10} study the distributional properties of families of SKG.

As noted in \cite{ChZhFa04}, the SKG generation procedure may give repeated edges. Hence,
the number of edges in the graph differs slightly from the number
of  insertions (though, in practice, this is barely 1\% for Graph500).
Gro\"er et al.\@ \cite{GrSuPo10} 
prove that the number of vertices of a given degree is asymptotically normally
distributed, and  provide algorithms to compute the expected number of edges
in the graph (as a function of the number of insertions) and the expected
degree distribution.

\section{Degree Distribution}
\label{sec:degdist}

In this section, we analyze the degree distribution of SKG, which are known to 
follow a multinomial distribution.
While an  exact expression for this distribution can be written, this is unfortunately a 
complicated sum of binomial coefficients. Studying the log-log plots
of the degree distribution, one sees a general heavy-tail like behavior,
but there are large oscillations. The degree distribution is not monotonically decreasing. Refer to \Fig{degdist} for some examples of SKG degree distributions (plotted in log-log scale).
Gro\"er et al.\@ \cite{GrSuPo10} show that the degree distribution behaves
like the sum of Gaussians, giving some intuition for the oscillations.
Recent work of Kim and Leskovec~\cite{KiLe10} provide some mathematical analysis
explaining connections to a lognormal distribution. 
But many questions remain. What does the distribution
oscillate between? Is the distribution bounded below by a power law?
Can we approximate the distribution with a simple closed form
function? None of these questions have satisfactory answers.

Our analysis gives a precise explanation
for the SKG degree distribution. We prove that the SKG degree
distribution oscillates between a lognormal and exponential tail.
We provide plots and experimental results to support
more intuition for our theorems.  

The oscillations are a disappointing feature of SKG. Real degree distributions
do not have large oscillations (to the contrary, they are monotonically decreasing), and more importantly, do not have any
exponential tail behavior. This is a major issue both for modeling
and benchmarking purposes since degree distribution is one of the primary 
characteristics that distinguishes real networks.

In order to rectify the oscillations, we apply a certain model of noise and provide both mathematical and empirical evidence that this
``straightens out" the degree distribution. This is discussed
in \Sec{nskg}. Indeed, small amounts of
noise lead to a degree distribution that is predominantly lognormal.
This also shows an appealing aspect of our degree distribution
analysis. We can naturally explain how noise affects the degree
distribution and give explicit bounds on these affects.

We make a caveat here. Technically, the SKG model creates \emph{multigraphs}, since
there can be repeated edges. Our theorems and expressions will deal with degree
distributions of this multigraph. Conventionally, this is reduced to a simple graph
by removing repeated edges. Gro\"er et al.\@ \cite{GrSuPo10} give details expressions
and explanations relating the degree distributions on the multigraph and the induced simple
graph. Our empirical results show that for a variety of parameters (including the Graph 500 setting),
our theorems match the degree distribution of the underlying simple graph. Simple
graphs are used in all empirical studies.

\subsection{Notation}

The $\ell$-bit binary representation of the vertices, numbered 0 to $n-1$, 
provides a straightforward way to partition the vertices. Specifically,
each vertex has a binary representation and therefore corresponds to an element of the
boolean hypercube $\{0,1\}^\ell$. We can partition the vertices into 
\emph{slices}, where each slice consists of vertices whose 
representations have the same number of zeros\footnote{There are usually referred to as the \emph{levels of the boolean hypercube}.
In the SKG literature, levels is used to refer to $\ell$, and hence we use
a different term.}. Recall that we assume $\ell$ is even.
For $r \in [-\ell/2,\ell/2]$, we say that \emph{slice $r$}, denoted $S_r$, consists of all vertices whose binary 
representations have exactly $(\ell/2+r)$ zeros. 

These binary representations and slices are intimately connected with  edge insertions in the SKG model.
For each insertion, we are trying to randomly choose a source-sink pair. First, let
us simply choose the first bit (of the representations) of the source
and the sink. Note that there are 4 possibilities (first bit for source, second for sink): 00, 01, 10,  and 11. 
We choose one of the combinations with probabilities $t_1, t_2, t_3$, and $t_4$ respectively. This
fixes the first bit of the source and sink. We perform this procedure again
to choose the second bit of the source and sink. Repeating $\ell$ times,
we finally decide the source and sink of the edge.
Note that as $|r|$ becomes smaller, a vertex in an $r$-slice tends to have a higher degree. 

For a real number $x$, we use $\nint{x}$ to denote the closest integer to $x$.
There are certain quantities that will be important in our analysis. These are summarized in \Tab{params2}. 

\begin{table}[tbp]
  \caption{Parameters for Analysis of SKG models}
  \label{tab:params2}
  \centering
  \begin{boxedminipage}{.9\textwidth}
    \textsc{General Quantities}
    \begin{itemize}
    \item $\tau = (1+2\sigma)/(1-2\sigma)$
    \item $\lambda = \Delta(1-4\sigma^2)^{\ell/2}$
    \item $r \in \{ -\ell/2,\dots,\ell/2 \} $ denotes a slice index
    \item $d $ denotes a degree (typically assumed $< \sqrt{n}$)
    \item $\odeg{v}$ = outdegree of node $v$
    \item $S_r = $ set of nodes whose binary representation have exactly $\ell/2 + r$ zeros
    \end{itemize}
    \textsc{Quantities Associated with Degree $d$}
    \begin{itemize}
    \item $X_d = $ random variable for the number of vertices of outdegree $d$
    \item $\theta_d = \ln (d/\lambda)/\ln \tau$
    \item $\Gamma_d = \nint{\theta_d}$ (nearest integer to $\theta_d$)
    \item $\gamma_d = |\theta_d - \Gamma_d| \in [0,0.5]$
    \item $r_d = \lfloor \theta_d \rfloor$ (only interesting for $r_d < \ell/2$)
    \item $\delta_d = \theta_d - r_d$
    \end{itemize}
  \end{boxedminipage}
\end{table}

Our results are fundamentally asymptotic in nature, so we explain the assumptions on $T$
and the implicit assumptions of our results. We assume $T$ to be a \emph{fixed} matrix
with the following conditions. All entries are positive and strictly less than $1$.
The number $t_1$ is the largest entry, and $\min(t_1 + t_2, t_1 + t_3) > 1/2$. 
This ensures that $\sigma \in (0,1/2)$, $\tau$ is positive
and finite, and $\lambda$ is non-zero. We want to note that these conditions are satisfied by all SKG parameters that
have been used to generate realistic graph instances, to the best of our knowledge. 
Indeed, when $\sigma = 1/2$, the degree distribution is Poisson.

We fix the matrix $T$ and average degree $\Delta > 1$, and think of $\ell$ as increasing. The asymptotics hold for an increasing $\ell$.
Note that since $n = 2^\ell$, this means that $n$ and $m$ are also increasing. We use $o(1)$ as a shorthand for a quantity that is negligible
as $\ell \rightarrow \infty$. Typically,
this converges to zero rapidly as $\ell$ increases.
Given two quantities or expressions $A$ and $B$, $A = (1 \pm o(1))B$ will
be shorthand for $A \in [(1-o(1))B, (1+o(1))B]$.

As we mentioned earlier, all our results are for the SKG multigraph. For convenience,
we will just refer to this a graph.

\subsection{Explicit formula for degree distribution} \label{sec:deg-an}

\begin{figure*}[ptb]
  \centering
  \subfloat[CAHepPh]{
  \includegraphics[width=.45\textwidth,trim=0 0 0 5,clip]{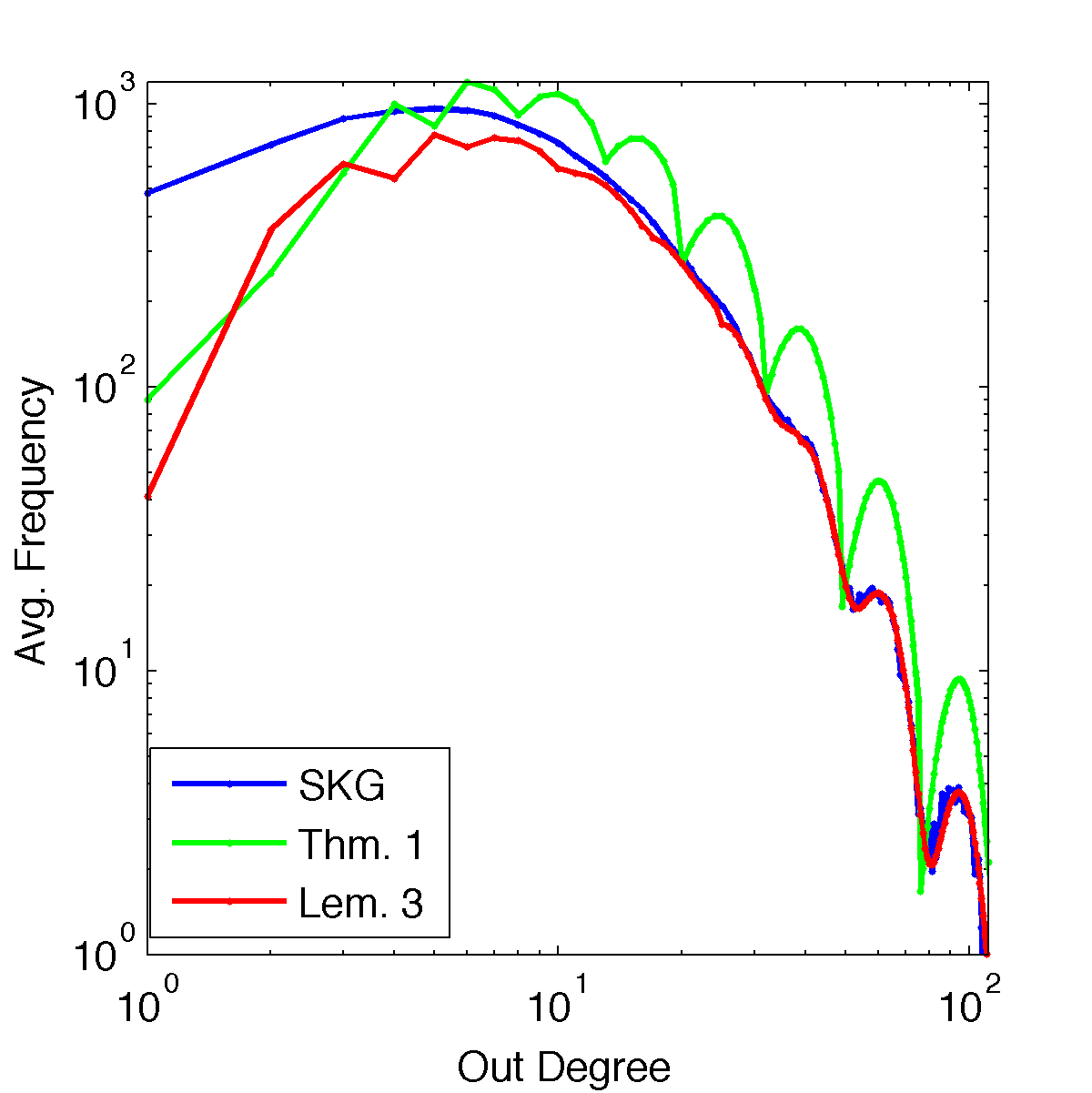}
  }
  \subfloat[WEBNotreDame (with $\ell=18$)]{
  \includegraphics[width=.45\textwidth]{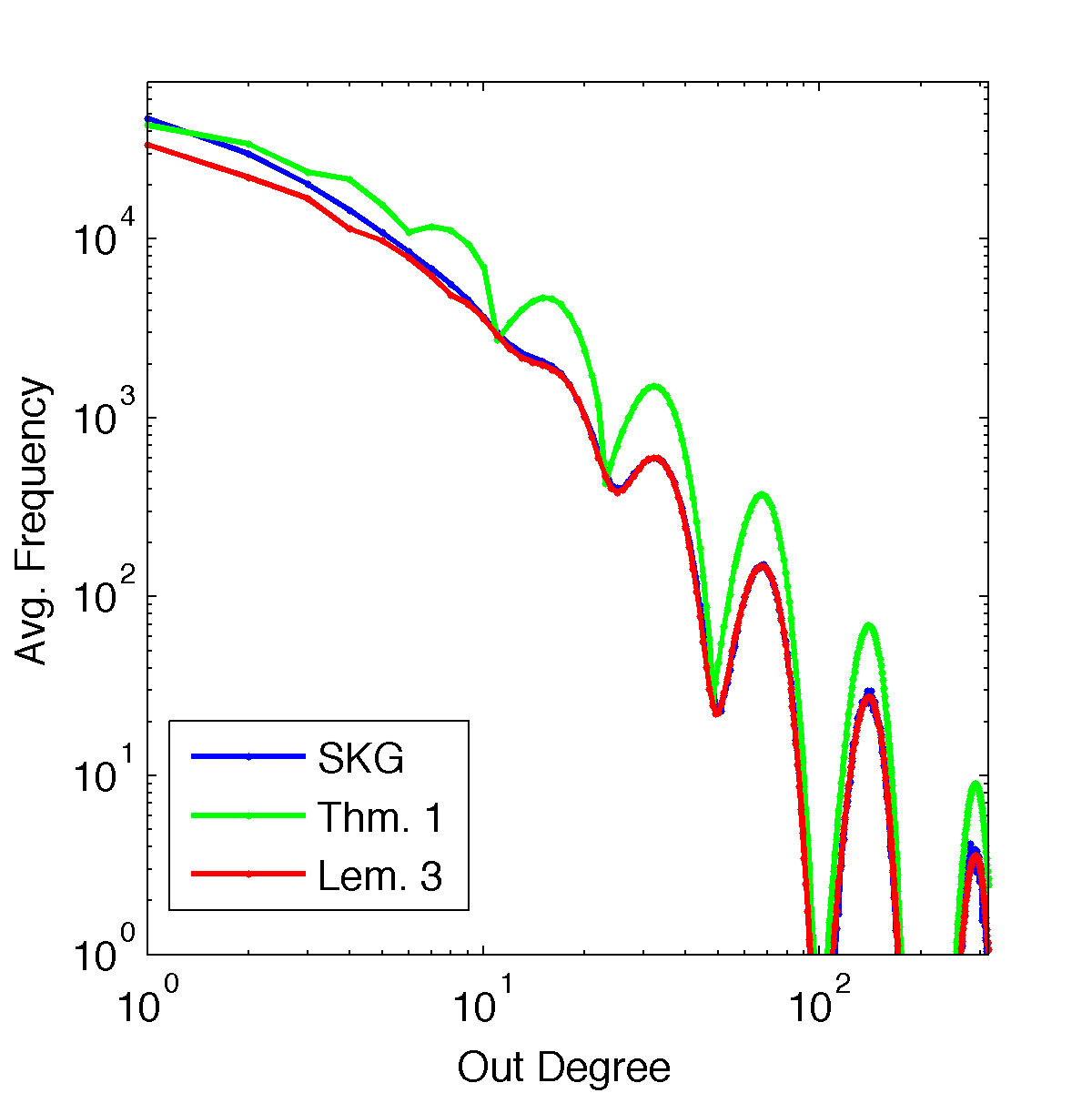}
  }

  \subfloat[Graph500 (with $\ell=16$)]{
  \includegraphics[width=.45\textwidth]{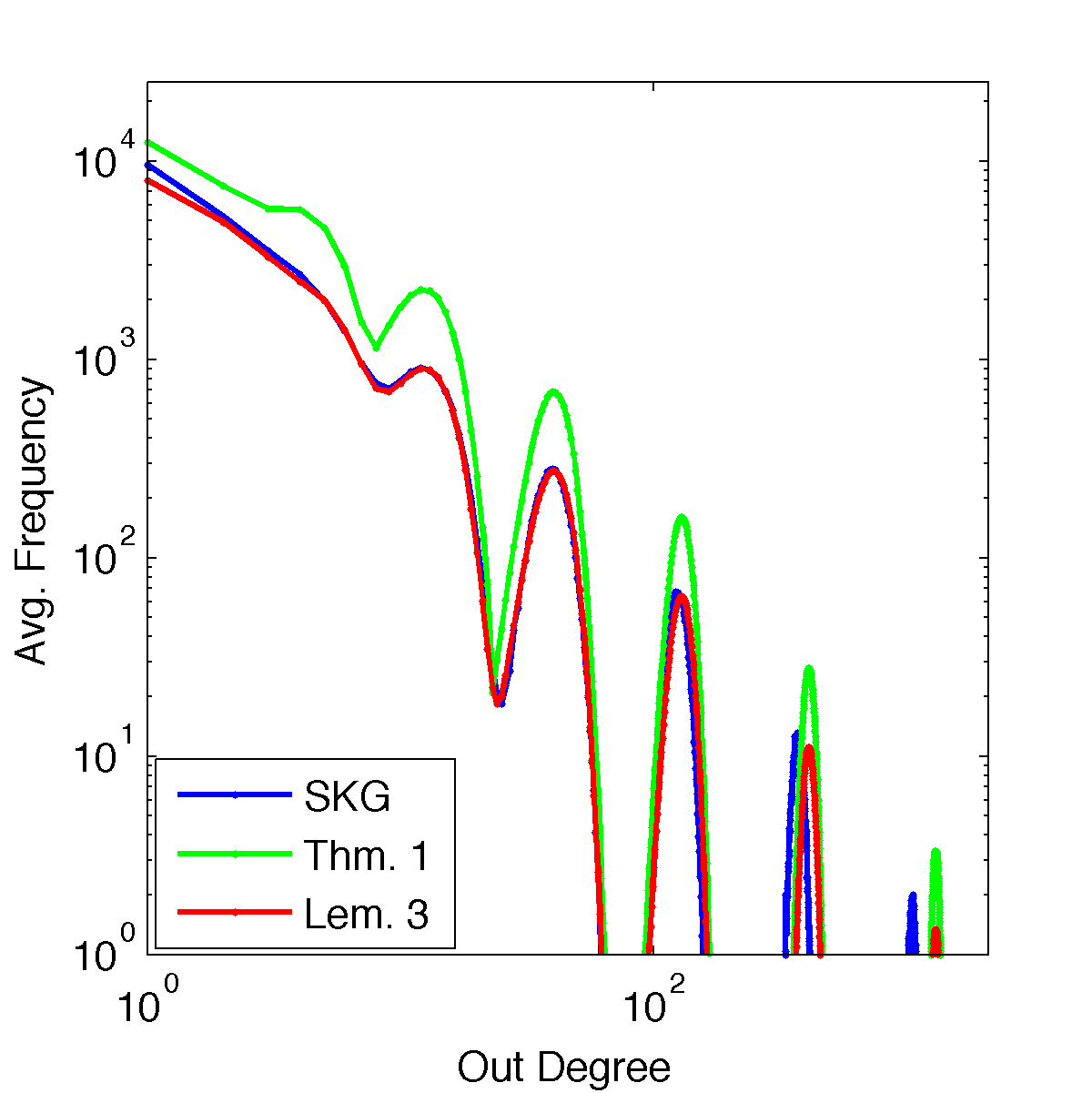}
  }
  \caption{We plot the degree distribution of graphs generated using our three different SKG
  parameter sets. %
  We then plot the respective bounds predicted by \Thm{deg-d-sum} and \Lem{deg-d}.
  Observe how \Thm{deg-d-sum} correctly guesses the peaks and troughs of the degree distribution.
  \Lem{deg-d} is practically an exact match (except when the degree
  is below $2\ell$ or, in Graph500, slight inaccuracies when the degree is too large).
  }
  \label{fig:degdist}
\end{figure*}

We begin by stating and explaining the main result of this section.
To provide clean expressions, we make certain approximations
which are slightly off for certain regions of $d$ and $\ell$ (essentially,
when $d$ is either too small or too large). Our main technical result is \Lem{deg-d},
which gives a tight expression for the degree distribution. A more
interpretable version is given first as \Thm{deg-d-sum}, which is stated 
as an upper bound. The remainder of the section gives a proof for this,
which can be skipped if the reader is only interested in the results.
This theorem expresses the oscillations between the lognormal
and exponential tail.
The lower order error terms in all the following are extremely small.

We focus on outdegrees, but these theorems hold
for indegrees as well. 
To make dependences clear, we remind the reader that the ``free" variables are $T, \Delta, \ell$.
The first two are fixed to constants, and $\ell$ is increasing. Hence, the asymptotics are
over $\ell$. All other parameters are functions of these quantities.

We begin by giving a more digestible form of our main result, stated in \Thm{deg-d-sum}.
The more precise version is given in \Lem{deg-d}. A reader interested in the general message
can skip \Lem{deg-d}.

\begin{theorem} \label{thm:deg-d-sum} 
Assume $d \in [(e\ln 2)\ell , \sqrt{n}\,]$.
If $\Gamma_d \geq \ell/2$, then $\EX[X_d]$ is negligible, i.e., $o(1)$; otherwise, 
if $\Gamma_d < \ell/2$, then (up to an additive exponential tail)
\begin{displaymath} 
\EX[X_d] \leq 
\frac{1}{\sqrt{d}} \exp\left(\frac{-d\gamma^2_d\ln^2\tau}{2}\right) {\ell\choose{\ell/2 + \Gamma_d}}.
\end{displaymath}
\end{theorem}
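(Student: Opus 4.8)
The plan is to rewrite $\EX[X_d]$ as a single sum over the slices $S_r$ and then isolate its dominant term. First I would compute, for a vertex $v \in S_r$, the probability $p_v$ that a single insertion picks $v$ as its source. At each of the $\ell$ levels the source bit is $0$ with probability $t_1+t_2 = 1/2+\sigma$ and $1$ with probability $1/2-\sigma$, so a vertex whose representation has $\ell/2+r$ zeros satisfies $p_v=(1/2+\sigma)^{\ell/2+r}(1/2-\sigma)^{\ell/2-r}$. Using $(1/2+\sigma)(1/2-\sigma)=(1-4\sigma^2)/4$, the identity $(1/2+\sigma)/(1/2-\sigma)=\tau$, and $m=\Delta 2^\ell$ with $2^\ell=4^{\ell/2}$, this gives $\EX[\odeg{v}]=m p_v=\Delta(1-4\sigma^2)^{\ell/2}\tau^r=\lambda\tau^r$. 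Since the $m$ insertions are independent, $\odeg{v}\sim\mathrm{Binomial}(m,p_v)$, which for $d\le\sqrt n$ I would approximate by $\mathrm{Pois}(\lambda\tau^r)$ with negligible error. Because $|S_r|=\binom{\ell}{\ell/2+r}$, this yields
\begin{displaymath}
\EX[X_d]=(1\pm o(1))\sum_{r=-\ell/2}^{\ell/2}\binom{\ell}{\ell/2+r}\Pr[\mathrm{Pois}(\lambda\tau^r)=d].
\end{displaymath}

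Next I would locate the dominant slice. By Stirling, $\Pr[\mathrm{Pois}(\mu)=d]\le\frac{1}{\sqrt{2\pi d}}\exp\bigl(d[\ln(\mu/d)+1-\mu/d]\bigr)$, and the bracketed function of $\mu$ is maximized at $\mu=d$. Writing $\mu(r)=\lambda\tau^r=d\,\tau^{r-\theta_d}$ (using $\lambda\tau^{\theta_d}=d$), the real maximizer is $r=\theta_d$, whose nearest integer is $\Gamma_d$. Setting $w=(\Gamma_d-\theta_d)\ln\tau$ with $|\Gamma_d-\theta_d|=\gamma_d\le 1/2$ and expanding $\ln(\mu/d)+1-\mu/d = w+1-e^{w}=-w^2/2-w^3/6-\cdots=-\tfrac12\gamma_d^2\ln^2\tau\,(1+o(1))$ gives the $r=\Gamma_d$ term as $\binom{\ell}{\ell/2+\Gamma_d}\frac{1}{\sqrt{2\pi d}}\exp(-d\gamma_d^2\ln^2\tau/2)$, which is exactly the claimed bound (the stated $1/\sqrt d$ comfortably absorbs the $1/\sqrt{2\pi}$ and the cubic correction).

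It then remains to show the sum is controlled by this single term up to the advertised tail. Differentiating $\ln f(r)$ for $f(r)=\binom{\ell}{\ell/2+r}\Pr[\mathrm{Pois}(\mu(r))=d]$ gives $-4r/\ell+(d-\mu(r))\ln\tau$ (using $\ln\binom{\ell}{\ell/2+r}\approx\mathrm{const}-2r^2/\ell$); since $d\ge(e\ln2)\ell$ the term $(d-\mu)\ln\tau$ dominates the bounded $4r/\ell$, so $f$ is unimodal with continuous maximizer within $O(1/(d\ln\tau))$ of $\theta_d$, hence integer argmax $\Gamma_d$. The second derivative is $\approx -d\ln^2\tau$, so the terms $r>\Gamma_d$ decay super-exponentially (the extra $e^{-\mu}$ factor) and are negligible, while the terms $r<\Gamma_d$ form a geometrically decreasing series -- this is precisely the additive exponential tail, and it is what makes the distribution look exponential rather than lognormal exactly when $\gamma_d\approx 1/2$ suppresses the main term. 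Finally, when $\Gamma_d\ge\ell/2$ the Poisson peak lies at or beyond the top slice, so every populated slice has $\mu(r)<d$ and too few vertices to supply a degree-$d$ vertex in expectation; a direct estimate then gives $\EX[X_d]=o(1)$.

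The main obstacle is the tail bookkeeping in this last step. Two features make it delicate: the Poisson pmf is heavier on its left ($\ln(\mu/d)+1-\mu/d\ge -w^2/2$ when $\mu<d$), so rounding $\theta_d$ downward can make the true value exceed the clean Gaussian bound, and the binomial coefficients grow as $r$ decreases toward $0$, partially offsetting the Poisson decay. Showing that the net leftward behavior is nonetheless a decaying geometric series -- so that it can be packaged as one additive exponential-tail term valid for \emph{all} $d$ in the stated range -- is exactly where the hypothesis $d\ge(e\ln2)\ell$ is needed and is the crux of the argument; the matching precise accounting is what the tight version \Lem{deg-d} pins down.
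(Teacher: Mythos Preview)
Your proposal is correct and follows essentially the same route as the paper: compute $p_v$ per slice, use the Poisson approximation, apply Stirling to write the summand as $\exp[d(1+g(r)-e^{g(r)})]\binom{\ell}{\ell/2+r}$ with $g(r)=r\ln\tau-\ln(d/\lambda)$, Taylor-expand near the maximizer $\theta_d$, and bound the remaining slices as an additive exponential tail using $d\ge(e\ln 2)\ell$.

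The only organizational difference is that the paper does not argue unimodality via a derivative as you do. Instead it proves directly (\Clm{most}) that all slices with $|g(r)|\ge 1$ contribute at most $1$ in total, using $1+x-e^x\le -1/e$ for $|x|\ge 1$ together with $d>(e\ln 2)\ell$ so that $e^{dh(r)}\le 2^{-\ell}$; then for the finitely many slices with $|g(r)|<1$ it keeps \emph{two} terms, at $r_d=\lfloor\theta_d\rfloor$ and $r_d+1$, yielding the equality in \Lem{deg-d}. \Thm{deg-d-sum} is then obtained in one line by observing that whichever of $\delta_d,1-\delta_d$ exceeds $1/2$ makes its term $\le e^{-d\ln^2\tau/8}\binom{\ell}{\cdot}$, the advertised additive exponential tail. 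Your direct jump to the single slice $\Gamma_d$ reaches the same conclusion; the paper's two-term detour is what buys the tighter \Lem{deg-d}.
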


\emph{Remark:} This means that the expected outdegree distribution
of a SKG is bounded above by a function that oscillates between a lognormal
and an exponential tail.

Note that $\Gamma_d = \nint{\ln(d/\lambda)/\ln \tau} = \Theta(\ln d)$. 
Hence $\ell\choose{\ell/2+\Gamma_d}$ can be thought of as $\ell\choose{\ell/2 + \Theta(\ln d)}$.
The function $\ell\choose{\ell/2+x}$ represents an asymptotically normal distribution of $x$, and therefore $\ell\choose{\ell/2 + \Gamma_d}$ is 
a \emph{lognormal distribution of $d$}. This lognormal term is multiplied by $\exp(-d\gamma^2_d \ln^2\tau/2)$.
By definition, $\gamma_d \in [0,1/2]$. When $\gamma_d$ is close to $0$,
then the exponential term is almost $1$. Hence the product represents a lognormal tail.
On the other hand, when $\gamma_d$ is a constant (say $> 0.2$), then the product
becomes an exponential tail. Observe that $\gamma_d$ oscillates between $0$ and
$1/2$, leading to the characteristic behavior of SKG. As $\theta_d$ becomes
closer to an integer, there are more vertices of degree $d$. As it starts
to have a larger fractional part, the number of  vertices of degree $d$ is bounded above by an exponential tail. 
Note that there are many values of $d$ (a constant fraction) where $\gamma_d > 0.2$.
Hence, for all these $d$, the degrees are bounded above by an exponential
tail. \emph{As a result, the degree distribution cannot be a power law or a lognormal.}

The estimates provided by \Thm{deg-d-sum} 
for our three different SKG parameter sets are shown in \Fig{degdist}. Note
how this simple estimate matches the oscillations of the actual degree distribution
accurately. 

We provide a more complex expression in \Lem{deg-d} that almost
completely explains the degree distribution.
\Thm{deg-d-sum} is a direct corollary of this lemma.
In the following, the expectation is over the random choice of the graph.

\begin{lemma} \label{lem:deg-d}
For SKG, assume  $d \in [(e\ln 2)\ell, \sqrt{n}\,]$.
If $r_d \geq \ell/2$, $\EX[X_d]$ is negligible; otherwise, we have
\begin{multline*}
\EX[X_d] = \frac{1 \pm o(1)}{\sqrt{2\pi d}} 
\left\{
\exp\left(\frac{-d\delta^2_d\ln^2\tau}{2}\right) {\ell\choose{\ell/2 + r_d}}\right.\\
+ \left. \exp\left(\frac{-d(1-\delta_d)^2\ln^2\tau}{2}\right) {\ell\choose{\ell/2 + r_d + 1}}
\right\}.
\end{multline*}
\end{lemma}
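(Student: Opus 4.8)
The plan is to decompose $X_d$ over the slices $S_r$, reduce $\EX[X_d]$ to a one-dimensional sum of binomial point probabilities, approximate each term, and then show the sum concentrates on the two slices adjacent to $\theta_d$. First I would compute, for a fixed vertex $v\in S_r$, the probability $p_r$ that a single insertion picks $v$ as its source. At each of the $\ell$ bit positions the source bit is $0$ with marginal probability $t_1+t_2=1/2+\sigma$ and $1$ with probability $1/2-\sigma$, independently across positions; since $v\in S_r$ has exactly $\ell/2+r$ zeros,
\[
p_r=(1/2+\sigma)^{\ell/2+r}(1/2-\sigma)^{\ell/2-r}=\frac{(1-4\sigma^2)^{\ell/2}}{n}\,\tau^{r}.
\]
Hence the outdegree of $v$ is $\mathrm{Bin}(m,p_r)$ with mean $\mu_r=m p_r=\lambda\tau^{r}$, and since $|S_r|=\binom{\ell}{\ell/2+r}$, linearity of expectation gives $\EX[X_d]=\sum_{r=-\ell/2}^{\ell/2}\binom{\ell}{\ell/2+r}\Pr[\mathrm{Bin}(m,p_r)=d]$. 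The peak of the $r$-th term sits where $\mu_r=d$, i.e.\ at the (generally non-integer) value $r=\theta_d$.

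Second I would obtain a clean closed form for each binomial point probability. Writing $\mu_r=d\,\tau^{\,r-\theta_d}$ (using $\lambda\tau^{\theta_d}=d$) and applying Stirling's formula to $\binom{m}{d}$ together with the expansions of $\ln(1-p_r)$ and $\prod_{j<d}(1-j/m)$, each factor reduces to
\[
\Pr[\mathrm{Bin}(m,p_r)=d]=\frac{1\pm o(1)}{\sqrt{2\pi d}}\exp\!\big(d\,f(r-\theta_d)\big),\qquad f(s)=1-\tau^{s}+s\ln\tau,
\]
up to a correction factor $\exp(-(d-\mu_r)^2/2m)$ that is $1-o(1)$ when $d=o(\sqrt n)$ and only a bounded factor near $d\approx\sqrt n$ (this is the source of the mild inaccuracy at large degree noted for Graph500). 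The function satisfies $f(0)=f'(0)=0$ and $f''(0)=-\ln^2\tau$, so near its maximum $f(s)\approx-\tfrac12\ln^2\tau\,s^2$; thus each slice contributes a Gaussian bump in $s=r-\theta_d$ centered at $\theta_d$.

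Third I would use the fact that $r$ is an integer to force essentially all the mass onto $r_d=\lfloor\theta_d\rfloor$ and $r_d+1$. Substituting $s=-\delta_d$ and $s=1-\delta_d$ into the displayed form reproduces exactly the two terms $\exp(-d\delta_d^2\ln^2\tau/2)\binom{\ell}{\ell/2+r_d}$ and $\exp(-d(1-\delta_d)^2\ln^2\tau/2)\binom{\ell}{\ell/2+r_d+1}$. It remains to bound the tail $\sum_{r\notin\{r_d,r_d+1\}}$. Moving one slice away from this pair multiplies the exponential factor by $\exp(d(f(s\pm1)-f(s)))$: above the pair $f$ falls doubly-exponentially (the $-\tau^{s}$ term), and below it $f(s)-f(s-1)\to\ln\tau$, so each downward step costs a factor $\approx e^{-d\ln\tau}$. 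Against this, the binomial coefficients can only grow toward the center, with total mass at most $2^\ell=e^{\ell\ln 2}$. The hypothesis $d\ge(e\ln 2)\ell$ is what guarantees this per-step decay overwhelms that mass, making the tail $o(1)$ relative to the two central terms; the same estimate handles $r_d\ge\ell/2$, where $d$ already exceeds the maximum mean $\mu_{\ell/2}$ and $\EX[X_d]$ is $o(1)$ outright.

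The main obstacle is this last tail-domination step rather than the single-term Stirling estimate, which is routine. The delicate point is to balance the growth of $\binom{\ell}{\ell/2+r}$ as $r$ moves toward $0$ against the decay of the degree factor, \emph{uniformly} over all slices and over the whole range $(e\ln 2)\ell\le d\le\sqrt n$; fixing the constant in the lower bound on $d$ so that $e^{-\Omega(d)}$ beats the $2^\ell$ vertices, while simultaneously keeping the Stirling and Poissonization errors at the $1\pm o(1)$ level near $d\approx\sqrt n$, is where the real care is needed.
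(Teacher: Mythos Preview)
Your proposal is correct and follows essentially the same route as the paper: your function $f(s)=1-\tau^{s}+s\ln\tau$ is exactly the paper's $h(r)=1+g(r)-e^{g(r)}$ under the reparametrization $g(r)=(r-\theta_d)\ln\tau$, and the slice decomposition, Stirling step, quadratic Taylor expansion at the peak, and two-term concentration on $r_d,r_d+1$ are all the same. The only cosmetic difference is in the tail: rather than a per-step ratio argument, the paper simply observes that $1+x-e^{x}\le -1/e$ whenever $|x|\ge 1$, so every term with $|g(r)|\ge 1$ is at most $e^{-d/e}\le 2^{-\ell}$ once $d\ge(e\ln 2)\ell$---which is precisely where that constant comes from.
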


We plot the bound given by this lemma in \Fig{degdist}. Note how it
completely captures the behavior of the degree distribution (barring a slight
inaccuracy for larger degrees of the Graph500 graph because we start
exceeding the upper bound for $d$ in \Lem{deg-d}). \Thm{deg-d-sum} can be
derived from this lemma, as we show below.

\begin{proof} (of \Thm{deg-d-sum}) Since $\delta_d = \theta_d - \flo{\theta_d}=\theta_d-r_d$, only one of
$\delta_d$ and $(1-\delta_d)$ is at most $1/2$.  In the former case, $\Gamma_d = r_d$ and in the latter
case, $\Gamma_d = r_d + 1$. Suppose that $\Gamma_d = r_d$. Then, 
$$ \exp\left(\frac{-d(1-\delta_d)^2\ln^2\tau}{2}\right) {\ell\choose{\ell/2 + r_d + 1}} \leq 
\exp\left(\frac{-d\ln^2\tau}{8}\right) {\ell\choose{\ell/2 + r_d + 1}} $$
Note that this is a small (additive) exponential term in \Lem{deg-d}. So we just neglect it
(and drop the leading constant of $1/\sqrt{2\pi}$) to get a simple approximation. A similar
argument works when $\Gamma_d = r_d + 1$.
\end{proof}

In the next section, we prove some preliminary claims which are building
blocks in the proof of \Lem{deg-d}. Then, we give a long intuitive explanation of how we prove \Lem{deg-d}. Finally, in \Sec{deg-proof}, we give a complete proof
of \Lem{deg-d}.

\subsection{Preliminaries} \label{sec:prelims}

We will state and prove some simple and known results in our own notation. 
This will give the reader some understanding about the various \emph{slices}
of vertices, and how the degree distribution is related to these slices.
Our first claim computes the probability
that a single edge insertion creates an outedge for node $v$. The probability
depends only on the slice that $v$ is in.

\begin{claim} \label{clm:prob-pr} 
For vertex $v \in S_r$, the probability that a single edge insertion in SKG produces an out-edge at node $v$ is 
\begin{displaymath}
  p_r = \frac{(1-4\sigma^2)^{\ell/2} \tau^r}{n}.
\end{displaymath}
\end{claim}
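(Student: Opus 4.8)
The plan is to observe that the insertion produces an out-edge at $v$ exactly when the randomly chosen \emph{source} of the inserted edge equals $v$, and then to compute that probability directly from the bit-by-bit description of the insertion. First I would note that the $\ell$ per-level choices are independent: at each level the procedure picks one of the four bit-pairs $00,01,10,11$ with probabilities $t_1,t_2,t_3,t_4$, where the first bit of the pair is the source bit and the second is the sink bit. Marginalizing out the sink bit, the source bit at a given level is $0$ exactly for the pairs $00$ and $01$, hence with probability $t_1+t_2$, and is $1$ exactly for the pairs $10$ and $11$, hence with probability $t_3+t_4$.

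Next I would rewrite these marginals using the skew. By definition $\sigma=t_1+t_2-1/2$, so $t_1+t_2=1/2+\sigma$, and since $\sum_i t_i=1$ we also get $t_3+t_4=1/2-\sigma$. A vertex $v\in S_r$ has exactly $\ell/2+r$ zeros and therefore $\ell/2-r$ ones in its $\ell$-bit representation, so by independence across levels,
$$\pr[\,\text{source}=v\,]=\left(\tfrac12+\sigma\right)^{\ell/2+r}\left(\tfrac12-\sigma\right)^{\ell/2-r}.$$

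Finally I would simplify. Pulling out the common factor $2^{-\ell}=1/n$ leaves $(1+2\sigma)^{\ell/2+r}(1-2\sigma)^{\ell/2-r}$, which regroups as $\bigl[(1+2\sigma)(1-2\sigma)\bigr]^{\ell/2}\bigl[(1+2\sigma)/(1-2\sigma)\bigr]^{r}=(1-4\sigma^2)^{\ell/2}\tau^r$, giving the claimed value of $p_r$. There is no genuinely hard step here; the only points that require care are correctly reading off the per-level source-bit marginals as $t_1+t_2$ and $t_3+t_4$ (rather than the individual $t_i$), and tracking the exponents $\ell/2\pm r$ that come from the definition of the slice $S_r$.
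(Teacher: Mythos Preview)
Your proof is correct and follows essentially the same approach as the paper: both compute the per-level probability that the source bit matches the corresponding bit of $v$ (namely $t_1+t_2=1/2+\sigma$ for a zero and $t_3+t_4=1/2-\sigma$ for a one), multiply across the $\ell$ independent levels, and then simplify. Your write-up is slightly more explicit about the marginalization over the sink bit, but the argument is the same.
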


\begin{myproof} We consider a single edge insertion. What is the probability that this leads 
to an outedge of $v$? At every level of the insertion, the edge must go into the half
corresponding to the binary representation of $v$. If the first bit of $v$ is $0$,
then the edge should drop in the top half at the first level, and this happens with probability $(1/2+\sigma)$. 
On the other hand, if this bit is $1$, then the edge should drop in the bottom half,
which happens with probability $(1/2 - \sigma)$.
By performing this argument for every level, we get that 
\begin{displaymath}
p_r  = \left(\frac{1}{2} + \sigma\right)^{\ell/2+r}\left(\frac{1}{2} - \sigma\right)^{\ell/2-r} 
= \frac{(1-4\sigma^2)^{\ell/2}}{2^\ell} \cdot \left(\frac{1/2+\sigma}{1/2-\sigma}\right)^r 
= \frac{(1-4\sigma^2)^{\ell/2} \tau^r}{n}.
\qquad\myproofend
\end{displaymath}
\end{myproof}

Our next lemma bounds the probability that a vertex $v$ at slice $r$ has degree
$d$. Before that, we separately deal with slices where $p_r$ is very large. Essentially, we
show that slices where $p_r \geq 1/\sqrt{m}$ can be ignored. This allows for simpler calculations
later on.

\begin{claim} \label{clm:large-d} Let $R$ be the set $\{r | p_r \geq 1/\sqrt{m}\}$ and
$U = \bigcup_{r \in R} S_r$.
The probability that \emph{any} vertex in $U$ has degree less than $\sqrt{m}/2$ is at most $e^{-\Omega(\sqrt{m})}$.
\end{claim}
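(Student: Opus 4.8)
The plan is to recognize each outdegree as a binomial random variable, apply a Chernoff lower-tail bound to a single vertex, and then take a union bound over the at most $n$ vertices of $U$; the whole argument hinges on the fact that $\sqrt{m}$ is exponentially larger than $\ln n$. First, fix a vertex $v \in S_r$ with $r \in R$. The $m$ edge insertions are mutually independent, and by \Clm{prob-pr} each one produces an out-edge at $v$ with probability exactly $p_r$. Hence the (multigraph) outdegree $\odeg{v}$ is distributed as $\mathrm{Bin}(m, p_r)$ with mean $\mu_r = m p_r$. The defining condition $r \in R$, namely $p_r \geq 1/\sqrt{m}$, gives
$$\mu_r = m p_r \geq m/\sqrt{m} = \sqrt{m}.$$

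Next I would invoke the standard multiplicative Chernoff bound for the lower tail: for a binomial random variable $Y$ with mean $\mu$ and any $\delta \in (0,1)$, $\pr[Y \leq (1-\delta)\mu] \leq \exp(-\delta^2\mu/2)$. Taking $\delta = 1/2$ yields $\pr[\odeg{v} \leq \mu_r/2] \leq \exp(-\mu_r/8)$. Since $\mu_r \geq \sqrt{m}$ we have $\sqrt{m}/2 \leq \mu_r/2$, so the event $\{\odeg{v} < \sqrt{m}/2\}$ is contained in $\{\odeg{v} \leq \mu_r/2\}$, and therefore
$$\pr[\odeg{v} < \sqrt{m}/2] \leq \exp(-\mu_r/8) \leq \exp(-\sqrt{m}/8).$$

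Finally I would union-bound this over all vertices of $U$. Since $|U| \leq |V| = n$,
$$\pr[\exists\, v \in U : \odeg{v} < \sqrt{m}/2] \leq n\, \exp(-\sqrt{m}/8) = \exp\!\left(\ell \ln 2 - \sqrt{m}/8\right).$$
Recalling that $m = \Delta n = \Delta\, 2^\ell$ with $\Delta > 1$ fixed, we get $\sqrt{m} = \sqrt{\Delta}\, 2^{\ell/2}$, which grows exponentially in $\ell$, whereas $\ln n = \ell \ln 2$ grows only linearly. Thus $\ell \ln 2 = o(\sqrt{m})$, the exponent is $-(1 - o(1))\sqrt{m}/8$, and the probability is $e^{-\Omega(\sqrt{m})}$, as claimed.

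The only step requiring any care — the ``main obstacle,'' such as it is — is this last absorption: one must confirm that the union-bound penalty $\ln n$ is dominated by $\sqrt{m}/8$. This is immediate once one observes that $\sqrt{m}$ is exponential in $\ell$ while $\ln n$ is linear, and it is precisely why the claim restricts attention to the heavy slices with $p_r \geq 1/\sqrt{m}$: this threshold is exactly what forces the per-vertex mean up to $\sqrt{m}$, making the per-vertex failure probability small enough to survive the union bound over all $n$ vertices.
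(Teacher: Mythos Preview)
Your proof is correct and follows essentially the same approach as the paper: identify the outdegree as a sum of i.i.d.\ indicators with mean at least $\sqrt{m}$, apply the multiplicative Chernoff bound with $\delta=1/2$ to get a per-vertex failure probability of $e^{-\sqrt{m}/8}$, and union bound over the at most $n$ vertices, absorbing $\ln n$ into $\sqrt{m}$. Your write-up is in fact slightly more explicit than the paper's in justifying the containment $\{\odeg{v} < \sqrt{m}/2\} \subseteq \{\odeg{v} \leq \mu_r/2\}$ and in spelling out why $\ln n = o(\sqrt{m})$.
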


\begin{proof} Consider a fixed $v \in U$. Let $X_i$ be the indicator random variable for the $i$th edge insertion being incident to $v$.
The $X_i$s are i.i.d. with $\EX[X_i] \geq 1/\sqrt{m}$. The out-degree of $v$ is $X = \sum_{i=1}^m X_i$
and $\EX[X] \geq \sqrt{m}$. By a multiplicative Chernoff bound (Theorem 4.2 of~\cite{MR}),
the probability that $X \leq \sqrt{m}/2$ is at most $e^{-\sqrt{m}/8}$. The proof is completed
by taking
a union bound over all vertices in $U$ and noting that $n e^{-\sqrt{m}/8} = e^{-\Omega(\sqrt{m})}$.
\end{proof}

We will set $d = o(\sqrt{n})$. 
Our formula becomes slightly inaccurate when $d$ becomes large, but as our figures show, it is not a major issue in practice. 
The previous claim implies that the expected number of vertices
in $U$ (as defined above) with degree $d$ is vanishingly small. Therefore,
we only need to focus on slices where $p_r \leq 1/\sqrt{m}$.

\begin{lemma} \label{lem:prob-d} Let $v$ be a vertex in slice $r$.
Assume that $p_r \leq 1/\sqrt{m}$ and $d = o(\sqrt{n})$. Then for SKG,
\begin{displaymath}
  \pr[\odeg{v} = d] =
   (1+o(1))\, \frac{\lambda^d}{d!}  \frac{(\tau^r)^d}{\exp(\lambda \tau^r)}. 
\end{displaymath}
\end{lemma}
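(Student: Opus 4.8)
The plan is to recognize the outdegree of $v$ as a binomial random variable and then pass to its Poisson limit. By \Clm{prob-pr}, each of the $m$ edge insertions independently produces an out-edge at $v$ with probability exactly $p_r$, so $\odeg{v} \sim \mathrm{Bin}(m, p_r)$ and $\pr[\odeg{v} = d] = \binom{m}{d} p_r^d (1-p_r)^{m-d}$. The key bookkeeping step is to identify the Poisson mean: using $\lambda = \Delta(1-4\sigma^2)^{\ell/2}$, $\Delta = m/n$, and \Clm{prob-pr}, one gets $\mu := m p_r = \lambda \tau^r$, so that the target expression $\frac{\lambda^d}{d!}\frac{(\tau^r)^d}{\exp(\lambda\tau^r)}$ is precisely the Poisson pmf $\mu^d e^{-\mu}/d!$. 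Thus the whole lemma reduces to showing $\binom{m}{d} p_r^d (1-p_r)^{m-d} = (1+o(1))\,\mu^d e^{-\mu}/d!$.

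I would then split the right-hand side into two factors and estimate each. First, write $\binom{m}{d} p_r^d = \frac{\mu^d}{d!}\prod_{j=0}^{d-1}(1 - j/m)$; taking logarithms, $\sum_{j=0}^{d-1}\ln(1-j/m) = -\frac{d^2}{2m}(1+o(1)) = o(1)$, which is genuinely $o(1)$ because $d = o(\sqrt n)$ together with $m = \Theta(n)$ (recall $m = \Delta n$ with $\Delta$ a fixed constant) forces $d = o(\sqrt m)$, so this factor is $1 + o(1)$. Second, expand $(m-d)\ln(1-p_r) = -m p_r + d p_r - \tfrac{1}{2}m p_r^2 + \cdots = -\mu + o(1)$, giving $(1-p_r)^{m-d} = (1+o(1)) e^{-\mu}$. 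Multiplying the two factors yields the claimed Poisson form.

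The error control in this second factor is where the real work sits, and it is the main obstacle. The term $d p_r$ is harmless, since $d p_r \le d/\sqrt m = o(1)$ under $p_r \le 1/\sqrt m$ and $d = o(\sqrt m)$. The delicate term is $m p_r^2 = \mu^2/m$: the hypothesis $p_r \le 1/\sqrt m$ by itself only yields $\mu \le \sqrt m$, hence $m p_r^2 \le 1$, which is merely $O(1)$ rather than $o(1)$. The resolution is that the estimate is only non-negligible when $d$ is comparable to the mean $\mu$; in that regime $\mu = O(d) = o(\sqrt m)$, so $m p_r^2 = \mu^2/m = o(1)$ and the approximation is truly $(1+o(1))$. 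For the remaining boundary slices — those with $\mu = \Theta(\sqrt m)$ but $d = o(\sqrt m) \ll \mu$ — the binomial lower tail and the Poisson value are both super-polynomially small (of order $e^{-\Omega(\sqrt m)}$, by a Chernoff bound), so their total contribution to the downstream sum $\EX[X_d] = \sum_r |S_r|\,\pr[\odeg{v}=d]$ is negligible, in the same spirit as the high-$p_r$ slices already set aside in \Clm{large-d}. I would therefore carry the clean $(1+o(1))$ argument through in the range $\mu = o(\sqrt m)$ and invoke this negligibility to dispatch the edge case.
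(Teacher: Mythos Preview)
Your approach is essentially identical to the paper's: both start from the binomial pmf $\binom{m}{d}p_r^d(1-p_r)^{m-d}$, replace $\binom{m}{d}$ by $(1\pm o(1))m^d/d!$ using $d = o(\sqrt{n})$, replace $(1-p_r)^{m-d}$ by $(1\pm o(1))e^{-mp_r}$ using $p_r \le 1/\sqrt{m}$, and then identify $mp_r = \lambda\tau^r$. You are in fact more careful than the paper on one point: the paper simply asserts $(1-x)^{m'} = (1\pm o(1))e^{-xm'}$ for $x \le 1/\sqrt{m}$ and $m' \le m$ without noting that the second-order term $m'x^2$ is merely $O(1)$ at the boundary; your observation that this boundary slice contributes negligibly to the downstream sum $\EX[X_d]$ is a legitimate patch that the paper omits.
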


\begin{proof} The probability that $v$ has outdegree $d$ is ${m\choose d} p_r^d (1-p_r)^{m-d}$.
Since $d = o(\sqrt{n})$, we have ${m \choose d} = (1 \pm o(1))m^d/d!$. 
For $x \leq 1/\sqrt{m}$ and $m' \leq m$, we can use the Taylor series approximation,
$(1-x)^{m'} = (1 \pm o(1)) e^{-xm'}$. Using \Clm{prob-pr}, we get
\begin{align*}
{m\choose d} p_r^d (1-p_r)^{m-d} & = (1 \pm o(1)) \frac{m^d}{d!}  \left(\frac{(1-4\sigma^2)^{\ell/2} \tau^r}{n}\right)^d \exp\left(- \frac{(1-4\sigma^2)^{\ell/2} \tau^r(m-d)}{n}\right)  \\
& = (1 \pm o(1))\frac{\left(\Delta (1-4\sigma^2)^{\ell/2}\right)^d \tau^{rd}}{d!}   \exp(- \Delta (1-4\sigma^2)^{\ell/2} \tau^r) \exp(\frac{d(1-4\sigma^2)^{\ell/2} \tau^r}{n}) \\
& = (1 \pm o(1))\frac{\lambda^d}{d!}  \frac{(\tau^r)^d}{\exp(\lambda \tau^r)} \exp(dp_r).
\end{align*}
Since $p_r \leq 1/\sqrt{m}$ and $d = o(\sqrt{n})$, $dp_r = o(1)$, completing
the proof.
\end{proof}

\subsection{Understanding the degree distribution} \label{sec:und-deg}

The following is a verbal explanation of our proof strategy and
captures the essence of the math.

It will be convenient to think
of the parameters having some fixed values. Let $\lambda = 1$
and $\tau = e$. (This can be achieved with a reasonable choice
of $T, \ell, \Delta$.) We begin by looking at the different slices
of vertices. Vertices in a fixed $r$-slice have an identical behavior with
respect to the degree distribution. \Lem{prob-d} uses elementary probability arguments
to argue that the probability that a vertex in slice $r$ has outdegree $d$
is roughly
\begin{eqnarray}
\label{eq:dfac} 
\pr[\odeg{v} = d]
= \frac{\exp(dr - e^r)}{d!} .
\end{eqnarray}
When $r = \Omega(\ln d)$, the numerator will be less than $1$, and the overall
probability is $O(1/d!)$. Therefore, those slices will not have many (or any)
vertices of degree $d$. If $r = O(\ln d)$, the numerator is $o(d!)$
and the probability is still (approximately) at most $1/d!$. Observe
that when $r$ is negative, then this probability is extremely small,
even for fairly small values of $d$. This shows that half of the vertices
(in slices where the number of $1$'s is more than $0$'s) 
have extremely small degrees.

It appears that the ``sweet spot" is around $r \approx \ln d$. 
Applying Taylor approximations to appropriate ranges of $r$,
it can be shown that a suitable approximation 
of the probability of a slice $r$ vertex having degree 
$d$ is roughly $\exp(-d(r-\ln d)^2)$. 
We can now show that the SKG degree distribution is bounded \emph{above}
by a lognormal tail. Only the vertices in slice $r \approx \ln d$
have a good chance of having degree $d$. This means that the expected
number of vertices of degree $d$ is at most ${\ell\choose \ell/2 + \ln d}$.
Since the latter is asymptotically normally distributed as a function of $\ln d$,
it (approximately) represents a lognormal tail. A similar
conclusion was drawn in \cite{KiLe10}, though their approach and
presentation is very different from ours.

This is where we significantly diverge. The crucial observation is that
$r$ is a \emph{discrete} variable, not a continuous one. When
$|r - \ln d| \geq 1/3$ (say), the probability of having
degree $d$ is at most $\exp(-d/9)$. That is an exponential tail, so we can safely 
assume that vertices in those slices have no vertices of degree $d$.
Refer to \Fig{gaussian}.
Since $\ln d$ is not necessarily integral, it could be
that for \emph{all} values of $r$, $|r - \ln d| \geq 1/3$. In that case,
there are (essentially) \emph{no vertices of degree $d$}. For
concreteness, suppose $\ln d = 100/3$. Then, regardless of the value of $r$,
$|r - \ln d| \geq 1/3$. And we can immediately bound the fraction
of vertices that have this degree by the exponential tail,
$\exp(-d/9)$.
When $\ln d$ is close to being integral, then 
for $r = \nint{\ln d}$, the $r$-slice (and only this slice) will contain many vertices of degree
$d$. The quantity $|\ln d - \nint{\ln d}|$ fluctuates between
$0$ and $1/2$, leading to the oscillations in the degree distribution.

\begin{figure}
  \centering
   \includegraphics[width=.45\textwidth]{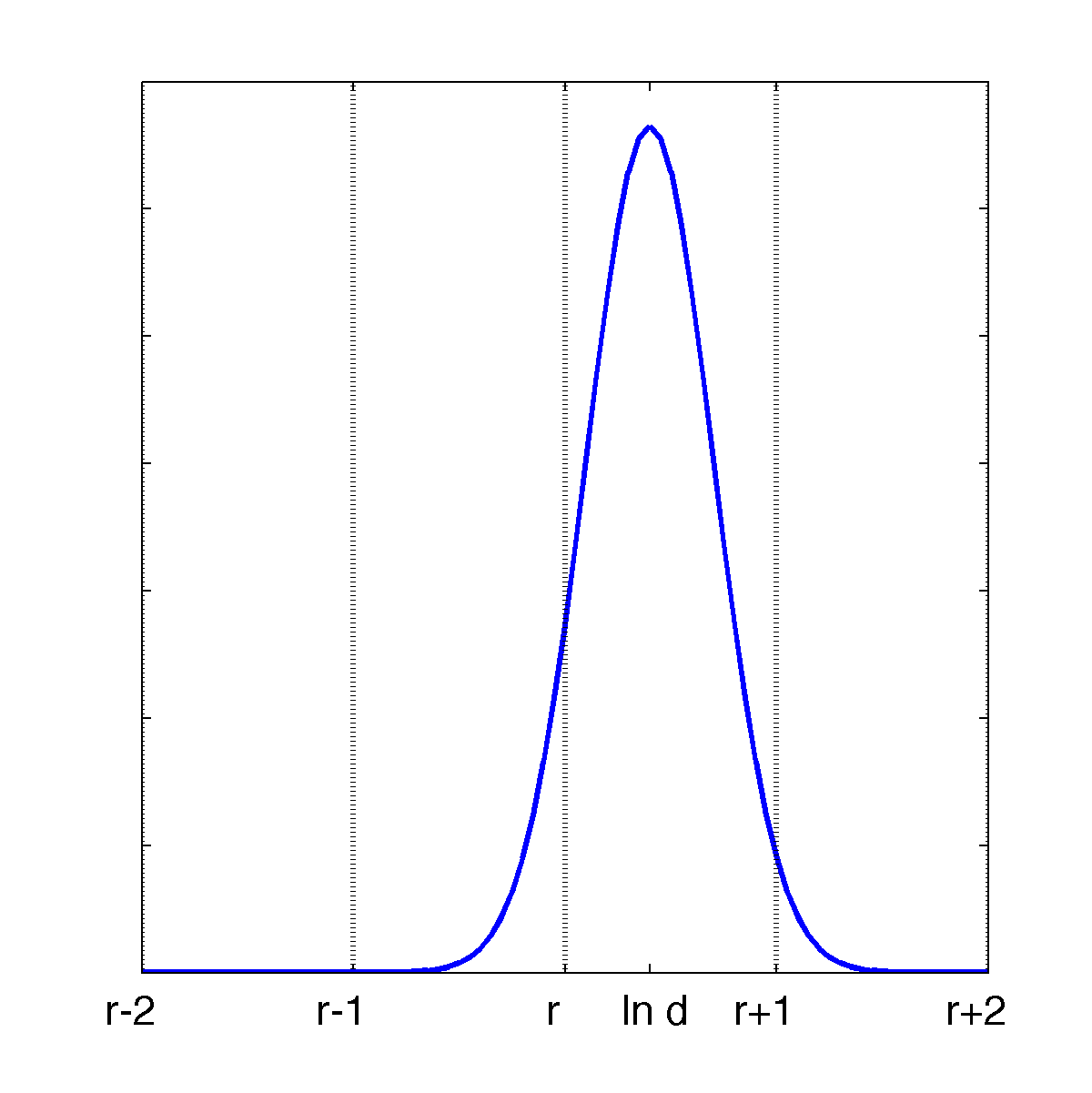}
  \caption{Probability of nodes of degree $d$ for various slices. The probability
  that a vertex of slice $r$ has degree $d$ is Gaussian distribution with a peak
  at $\ln d$. The standard deviation is extremely small. Hence, if $\ln d$ is far from integral, \emph{no slice} will have vertices of degree $d$.}
  \label{fig:gaussian}
\end{figure}

Let $\Gamma_d = \nint{\ln d}$ and $\gamma_d = |\Gamma_d - \ln d|$. 
Putting the arguments above together, we can get a very good estimate
of the number of vertices of degree $d$. 
This quantity is essentially $\exp(-\gamma^2_d d){\ell\choose \ell/2 + \Gamma_d}$,
as stated in \Thm{deg-d-sum}.
A more nuanced argument leads to the bound in \Lem{deg-d}.

\subsection{Proof of \Lem{deg-d}} \label{sec:deg-proof}

We break up the main argument into various claims. The first claim gives
an expression for the expected number of vertices of degree $d$.
This sum will appear to be a somewhat complicated sum of binomial
coefficients. But, as we later show, we can deduce that most
terms in this sum are actually negligible.

\begin{claim} \label{clm:def-g} Define $g(r) = r\ln \tau - \ln(d/\lambda)$.
Then, for SKG,
\begin{displaymath}
  \EX[X_d] = \frac{1 \pm o(1)}{\sqrt{2\pi d}}
  \sum_{r = -\ell/2}^{\ell/2} \exp\left[d(1+g(r)-e^{g(r)})\right] {\ell\choose{\ell/2+r}}.
\end{displaymath}
\end{claim}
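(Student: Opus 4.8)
The plan is to compute $\EX[X_d]$ directly by linearity of expectation over the slices and then convert the Poisson-type probability supplied by \Lem{prob-d} into the stated exponential-of-$g(r)$ form using Stirling's approximation. First I would write $\EX[X_d] = \sum_{r=-\ell/2}^{\ell/2} |S_r|\,\pr[\odeg{v}=d \mid v \in S_r]$, using that every vertex of $S_r$ behaves identically with respect to the degree distribution and that $|S_r| = {\ell\choose{\ell/2+r}}$ (choose which of the $\ell$ bits are the zeros). This already produces a sum of binomial coefficients; the task is to identify the summand.

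Next I would split the sum at the threshold $p_r = 1/\sqrt m$. For the slices with $p_r \le 1/\sqrt m$, \Lem{prob-d} applies and gives $\pr[\odeg{v}=d] = (1+o(1))\lambda^d (\tau^r)^d/(d!\exp(\lambda\tau^r))$. The algebraic heart is then Stirling: writing $1/d! = (1+o(1))(e/d)^d/\sqrt{2\pi d}$ rewrites this probability as $\frac{1+o(1)}{\sqrt{2\pi d}}(e\lambda\tau^r/d)^d e^{-\lambda\tau^r}$. I would record the identity $e^{g(r)} = \lambda\tau^r/d$, which is immediate from $g(r) = r\ln\tau - \ln(d/\lambda)$, so that $(e\lambda\tau^r/d)^d = \exp(d(1+g(r)))$ and $e^{-\lambda\tau^r} = \exp(-d\,e^{g(r)})$; their product is exactly $\exp(d(1+g(r)-e^{g(r)}))$. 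Multiplying by ${\ell\choose{\ell/2+r}}$ and summing reproduces the claimed formula restricted to the low-$p_r$ slices, with a single uniform $(1\pm o(1))$ pulled out front.

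The step needing genuine care — the main obstacle — is showing that the slices with $p_r > 1/\sqrt m$, where \Lem{prob-d} does not apply, contribute negligibly to \emph{both} sides, so the sum may be extended back over all $r$. On the true side, \Clm{large-d} says every vertex of such a slice has degree at least $\sqrt m/2$ except with probability $e^{-\Omega(\sqrt m)}$; since $d = o(\sqrt n) = o(\sqrt m)$, the event $\odeg{v}=d$ forces degree far below $\sqrt m/2$, so the expected number of such vertices of degree $d$ is at most $n\,e^{-\Omega(\sqrt m)} = o(1)$. On the formula side I would verify the corresponding terms are themselves negligible: using $e^{g(r)} = \lambda\tau^r/d = mp_r/d$, a slice with $p_r > 1/\sqrt m$ has $e^{g(r)} > \sqrt m/d = \omega(1)$, so $1 + g(r) - e^{g(r)} \le -\tfrac{1}{2}e^{g(r)}$ and the exponent is at most $-\tfrac{1}{2}d\,e^{g(r)} = -\tfrac{1}{2}mp_r \le -\tfrac{1}{2}\sqrt m$; since ${\ell\choose{\ell/2+r}} \le 2^\ell$ and $2^\ell e^{-\sqrt m/2} = o(1)$ (here $\sqrt m \sim \sqrt{\Delta}\,2^{\ell/2}$ dwarfs $\ell\ln 2$), these terms sum to $o(1)$.

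Finally I would collect the three pieces: the true $\EX[X_d]$ equals its low-$p_r$ contribution up to an additive $o(1)$; that contribution equals the low-$p_r$ part of the claimed sum up to a factor $(1\pm o(1))$; and the omitted high-$p_r$ part of the claimed sum is itself $o(1)$. Absorbing all errors into the leading $(1\pm o(1))/\sqrt{2\pi d}$ yields the claim. The only delicate bookkeeping is that the $o(1)$ coming from \Lem{prob-d} is \emph{uniform} over the retained slices — it depends only on the bounds $p_r \le 1/\sqrt m$ and $d = o(\sqrt n)$ — which is what legitimizes factoring a single $(1\pm o(1))$ out of the sum.
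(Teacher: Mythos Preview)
Your proposal is correct and follows essentially the same route as the paper: linearity of expectation over slices, \Lem{prob-d} for the Poisson approximation, Stirling's formula, and the substitution $e^{g(r)} = \lambda\tau^r/d$ to rewrite the summand. In fact you are more careful than the paper on one point: the paper simply applies \Lem{prob-d} across all $r$ and absorbs the high-$p_r$ slices into the global $(1\pm o(1))$ without comment, whereas you explicitly verify that those slices are negligible on both the true side (via \Clm{large-d}) and the formula side (via the bound $1+g(r)-e^{g(r)} \le -\tfrac12 e^{g(r)}$), which is the rigorous way to close the argument.
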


\begin{myproof} Using \Lem{prob-d} and linearity of expectation, we can
derive a formula for $\EX[X_d]$. We then apply Stirling's approximation and the fact that $|S_r| = {\ell\choose{\ell/2+r}}$.
\begin{eqnarray*} \label{eq:xd} 
\EX[X_d] & = & (1 \pm o(1))\frac{\lambda^d}{d!} 
\sum_{r = -\ell/2}^{\ell/2}  \frac{(\tau^r)^d}{\exp(\lambda \tau^r)} |S_r| \\
& = & (1 \pm o(1))\frac{\lambda^d}{d!} 
\sum_{r = -\ell/2}^{\ell/2}  \frac{(\tau^r)^d}{\exp(\lambda \tau^r)} 
{\ell\choose{\ell/2+r}} \\
& = & \frac{1 \pm o(1)}{\sqrt{2\pi d}}\left(\frac{e\lambda}{d}\right)^d 
\sum_{r = -\ell/2}^{\ell/2}  \frac{(\tau^r)^d}{\exp(\lambda \tau^r)} 
{\ell\choose{\ell/2+r}}.
\end{eqnarray*}
Let us now focus on the quantity 
\begin{displaymath} 
\left(\frac{e\lambda}{d}\right)^d  \frac{(\tau^r)^d}{\exp(\lambda \tau^r)} 
= \exp(d+d\ln \lambda + rd\ln \tau - d\ln d - \lambda \tau^r).
\end{displaymath}
The term inside the exponent can be written as $d + d(r\ln \tau - \ln d + \ln \lambda) - d(d/\lambda)^{-1}\tau^r$.
This is $d(1+g(r)-e^{g(r)})$. Hence
\begin{displaymath}
  \EX[X_d] = \frac{1 \pm o(1)}{\sqrt{2\pi d}}\sum_{r = -\ell/2}^{\ell/2}  
  e^{d(1+g(r)-e^{g(r)})} {\ell\choose{\ell/2+r}}. \qquad \myproofend
\end{displaymath}
\end{myproof}

The key observation is that among the $\ell$ terms in the summation of \Clm{def-g},
few of them are the main contributors. All other terms sum
up to a negligible quantity. We deal with this part in the following claim.
We crucially use the assumption that $d > (e\ln 2)\ell$.
This ensures that the large slices (when $|r|$ is small) do not contribute vertices
of degree $d$.

\begin{claim} \label{clm:most} 
Let $R$ be the set of $r$ such that $|g(r)| \geq 1$. Then, for SKG, 
\begin{displaymath}
  \sum_{r \in R} \exp[d(1+g(r)-e^{g(r)})] {\ell\choose{\ell/2+r}} \leq 1.    
\end{displaymath}
\end{claim}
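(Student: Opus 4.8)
The plan is to show that every term in the sum is individually exponentially small because the exponent $d(1+g(r)-e^{g(r)})$ is strongly negative whenever $|g(r)| \geq 1$, and then to control the number of surviving terms by the trivial fact that the binomial coefficients over all slices sum to $2^\ell$. Recall from \Clm{def-g} that $g(r) = r\ln\tau - \ln(d/\lambda)$.

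First I would study the scalar function $\phi(g) = 1 + g - e^g$ appearing in the exponent. Its derivative $\phi'(g) = 1 - e^g$ vanishes only at $g = 0$, is positive for $g < 0$ and negative for $g > 0$, so $\phi$ increases on $(-\infty,0]$ and decreases on $[0,\infty)$, with global maximum $\phi(0) = 0$. Consequently, on the region $|g| \geq 1$ the supremum of $\phi$ is attained at one of the boundary points $g = \pm 1$. Since $\phi(1) = 2 - e \approx -0.718$ while $\phi(-1) = -e^{-1} \approx -0.368$, the binding (less negative) value is $\phi(-1) = -1/e$. Hence for every $r \in R$, i.e., with $|g(r)| \geq 1$, I obtain $\phi(g(r)) \leq -1/e$, so that $\exp[d(1 + g(r) - e^{g(r)})] \leq \exp(-d/e)$.

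Next I would pull this uniform bound out of the summation and estimate the remaining binomial sum crudely. Since $R \subseteq \{-\ell/2,\dots,\ell/2\}$,
\begin{displaymath}
\sum_{r \in R} \exp[d(1 + g(r) - e^{g(r)})] {\ell\choose{\ell/2 + r}} \leq \exp(-d/e) \sum_{r = -\ell/2}^{\ell/2} {\ell\choose{\ell/2 + r}} = \exp(-d/e)\, 2^\ell.
\end{displaymath}
Writing $2^\ell = \exp(\ell\ln 2)$ and invoking the standing hypothesis $d \geq (e\ln 2)\ell$, the exponent satisfies $-d/e + \ell\ln 2 \leq 0$, so the right-hand side is at most $1$, which is exactly the claimed bound.

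I do not expect a genuine obstacle here; the only point demanding care is recognizing that the extremum of $\phi$ over $|g| \geq 1$ comes from the $g = -1$ side rather than the $g = +1$ side, since that choice yields the constant $1/e$ that makes the threshold $(e\ln 2)\ell$ exactly sufficient. This tight matching between $\phi(-1) = -1/e$ and the hypothesis $d \geq (e\ln 2)\ell$ is precisely why the degree threshold is stated the way it is, and I would verify it explicitly to ensure no slack is lost.
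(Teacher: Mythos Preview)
Your proof is correct and essentially identical to the paper's own argument: both show that $\phi(g)=1+g-e^g$ satisfies $\phi(g)\leq -1/e$ whenever $|g|\geq 1$ (with the binding case at $g=-1$), use the hypothesis $d\geq (e\ln 2)\ell$ to convert this into $\exp(d\phi(g(r)))\leq 2^{-\ell}$, and then bound the binomial sum by $2^\ell$. The only cosmetic difference is that the paper handles $g\geq 1$ and $g\leq -1$ as separate cases, whereas you argue via the global shape of $\phi$; the content is the same.
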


\begin{myproof} For convenience, define $h(r) = 1 + g(r)-e^{g(r)}$. We 
will show (shortly) that when $|g(r)| \geq 1$, $h(r) \leq -1/e$.
We assume $d > (e\ln 2)\ell$, thus
$\exp(d \cdot h(r)) \leq 2^{-\ell}$. Let $R$ be the set of all $r$ such that $|g(r)| \geq 1$.
We can easily bound the contribution of the indices in $R$ to our total sum as
\begin{displaymath}
   \sum_{r \in R} e^{dh(r)} {\ell\choose{\ell/2+r}} \leq 
   2^{-\ell} \sum_{r \in R} {\ell\choose{\ell/2+r}}
   \leq 1. 
\end{displaymath}
It remains to prove the bound on $h(r)$. 
Set $\hat h(x) = 1+x-e^x$, so $h(r) = \hat h(g(r))$. We have two cases.
\begin{itemize}
	\item $g(r) \geq 1$: Since $\hat h(x)$ is decreasing when $x \geq 1$,
	$h(r) \leq \hat h(1) = -(e-2) \leq -1/e$.
	\item $g(r) \leq -1$: Since $\hat h(x)$ is increasing for $x \leq -1$,
	$h(r) \leq \hat h(-1) = -1/e$. $\qquad \myproofend$
\end{itemize}
\end{myproof}

Now for the main technical part. The following claim with the previous ones
complete the proof of \Lem{deg-d}.

\begin{claim} \label{clm:rest} Define $R$ as in \Clm{most}.
Then, for SKG,
\begin{multline*} 
\sum_{r \notin R} \exp\left[d(1+g(r)-e^{g(r)})\right] {\ell\choose{\ell/2+r}} = \\
(1 \pm o(1)) \cdot \left\{ \exp\left(\frac{-d\delta^2_d\ln^2\tau}{2}\right) {\ell\choose{\ell/2 + r_d}} 
+ \exp\left(\frac{-d(1-\delta_d)^2\ln^2\tau}{2}\right) {\ell\choose{\ell/2 + r_d + 1}}
\right\}.
\end{multline*}
\end{claim}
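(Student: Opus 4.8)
The plan is to evaluate the sum $\sum_{r\notin R} \exp[d(1+g(r)-e^{g(r)})]{\ell\choose{\ell/2+r}}$ by showing it collapses onto the two summands centered at $r_d$ and $r_d+1$. Writing $\hat h(x) = 1+x-e^x$ so that the exponent is $d\,\hat h(g(r))$, I would first record the elementary facts that $\hat h$ is unimodal with unique maximum $\hat h(0)=0$, increasing on $(-\infty,0)$ and decreasing on $(0,\infty)$. Since $g(r)=(r-\theta_d)\ln\tau$ is an increasing affine function of $r$, the factor $e^{d\hat h(g(r))}$ is largest at the integers nearest $\theta_d$, namely $r_d=\lfloor\theta_d\rfloor$ and $r_d+1$. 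Moreover, the constraint $r\notin R$, i.e. $|g(r)|<1$, confines $r$ to the window $|r-\theta_d|<1/\ln\tau$, which contains only $O(1)$ integers; so the entire sum has a bounded number of terms, and I only ever need to compare a constant number of summands.

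Second, I would discard every term other than $r_d$ and $r_d+1$. For $r=r_d+k$ with $k\le -1$ or $k\ge 2$ (still in $R^c$), one checks $|g(r)|$ exceeds both $\delta_d\ln\tau=|g(r_d)|$ and $(1-\delta_d)\ln\tau=|g(r_d+1)|$ by a positive margin, so by unimodality $\hat h(g(r))$ is smaller than $\max(\hat h(g(r_d)),\hat h(g(r_d+1)))$ by a constant; multiplying by $d$ makes $e^{d\hat h(g(r))}$ exponentially smaller, while over this $O(1)$-size window the binomial coefficients differ only by bounded factors. For the two survivors I then replace $\hat h(g)$ by its Taylor expansion $\hat h(g)=-g^2/2+O(g^3)$ about $g=0$: at $r=r_d$, $g=-\delta_d\ln\tau$, and at $r=r_d+1$, $g=(1-\delta_d)\ln\tau$, so the quadratic terms produce exactly $\exp(-d\delta_d^2\ln^2\tau/2)$ and $\exp(-d(1-\delta_d)^2\ln^2\tau/2)$, with ${\ell\choose{\ell/2+r_d}}$ and ${\ell\choose{\ell/2+r_d+1}}$ carried verbatim, matching \Clm{rest}. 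Combined with \Clm{def-g} and \Clm{most}, this yields \Lem{deg-d}.

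The hard part will be controlling the error in this last step. The cubic remainder contributes $d\cdot O(g^3)$ to the exponent, which is \emph{not} automatically $o(1)$ since $d\to\infty$: it is negligible precisely when the relevant $g$ is of order the reciprocal Poisson standard deviation $1/\sqrt d$ of the peak — equivalently when $\theta_d$ is near-integral, which is exactly the regime that produces a non-negligible count $\EX[X_d]$. I would therefore split the analysis according to the size of $d\gamma_d^2\ln^2\tau$, using an explicit Stirling remainder to make the Gaussian approximation of the Poisson probability $\frac{\lambda^d}{d!}(\tau^r)^d/\exp(\lambda\tau^r)=\mathrm{Pois}(\lambda\tau^r)[d]$ from \Lem{prob-d} quantitative. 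In the small-$\gamma_d$ branch the Taylor error is genuinely $o(1)$ in the exponent; in the large-$\gamma_d$ branch the surviving term is exponentially small (a trough), and the task reduces to checking that both sides are of the same negligible order so that the discrepancy is absorbed into the stated $(1\pm o(1))$. Getting this trade-off airtight across the whole admissible range $d\in[(e\ln2)\ell,\sqrt n\,]$ is the delicate core of the argument.
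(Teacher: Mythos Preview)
Your plan mirrors the paper's: Taylor-expand $\hat h(g)=1+g-e^g$ about $0$ and isolate the two integers $r_d,\,r_d+1$ nearest $\theta_d$. The paper discards the remaining terms by summing over the two rays $\{r_d+1+s:s\ge0\}$ and $\{r_d-s:s\ge0\}$, playing the bound $\binom{\ell}{\ell/2+r_c+s}\le\ell^s\binom{\ell}{\ell/2+r_c}$ against the decay $\exp(-ds^2\ln^2\tau/2)$; your remark that the window $|g(r)|<1$ holds only $O(1/\ln\tau)=O(1)$ integers is a tidier route to the same conclusion.

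Where you go beyond the paper is in flagging the cubic remainder, and you are right to: the paper simply writes $\exp[d\hat h(g(r))]=(1\pm o(1))\exp[-dg(r)^2/2]$ for each surviving term, which is exactly the unjustified step you worry about. But your proposed patch does not close the gap either. In the large-$\gamma_d$ branch the two sides are \emph{not} of the ``same negligible order'' in the sense $(1\pm o(1))$ demands: the cubic correction in the exponent is $\Theta(d\gamma_d^3)$, so once $\gamma_d\gg d^{-1/3}$ the ratio of the dominant LHS term to its matching RHS term is $\exp(\pm\Theta(d\gamma_d^3))$, not $1\pm o(1)$. Concretely, with $\tau=e$ and $\delta_d=0.3$ one has $d\hat h(-0.3)\approx-0.0408d$ versus $-d(0.3)^2/2=-0.045d$, a ratio of $e^{0.004d}$. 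The stated $(1\pm o(1))$ therefore cannot hold uniformly over $d\in[(e\ln2)\ell,\sqrt n\,]$; the formula is a genuine multiplicative asymptotic only near the peaks, where $\gamma_d=O(1/\sqrt d)$, and at the troughs it should be read as capturing the correct exponential scale. If you want a fully rigorous statement, either leave $\hat h(g(r_d))$ and $\hat h(g(r_d+1))$ unexpanded in the conclusion, or weaken the trough regime to two-sided bounds of the form $\exp(-\Theta(d\gamma_d^2))$.
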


\begin{myproof} Since $|g(r)| < 1$, we can perform an important approximation.
Using the expansion $e^x = 1 + x + x^2/2 + \Theta(x^3)$ for $x \in (0,1)$, 
we bound
$$ h(r) = 1 + g(r) - e^{-g(r)}  = -g(r)^2/2 + \Theta(g(r)^3)$$
We request the reader to 
pause and consider the ramifications of this approximation. The coefficient
multiplying the binomial coefficients in the sum is $\exp(-d (g(r))^2)$,
which is a Gaussian function of $g(r)$. This is what creates the Gaussian-like behavior of the probability of vertices of degree $d$ among the various slices. 
We now need to understand when $g(r)$ is close to $0$, since the corresponding
terms will provide the main contribution to our sum. So for any $d$,
some slices are ``picked out" to have expected degree $d$, whereas
others are not. This depends on what the value of $g(r)$ is.
Now on, it only requires (many) tedious calculations to get the final result.

What are the different possible values of $g(r)$? 
We remind the reader that $g(r) = r\ln \tau - \ln(d/\lambda)$.
Observe that $r_d = \lfloor \ln(d/\lambda)/\ln \tau \rfloor$
minimizes $|g(r)|$ subject to $g(r) < 0$ and $r_d + 1$ (which
is the corresponding ceiling) minimizes $|g(r)|$ subject to $g(r) \geq 0$.
For convenience, denote $r_d$ by $r_f$ (for floor) and $r_d+1$ by $r_c$
(for ceiling).

Consider some $r$ such that $|g(r)| < 1$. It is either of the form
$r = r_c + s$ or $r_f - s$, for integer $s \geq 0$. We will
sum up all the terms corresponding to the each set separately.
For convenience, denote the former set of values of $s$'s 
such that $|g(r_c + s)| < 1$ by $S_1$, and define $S_2$
with respect to $r_f - s$. This allows us to split 
the main sum into two parts, which we deal with separately.

{\bf Case 1 (the sum over $S_1$):} 
\begin{multline*} \sum_{s \in S_1} \exp\left[d(1+g(r)-e^{g(r)})\right]{\ell\choose{\ell/2 + r_c + s}} \hfill\\
 \!= \! (1 \pm o(1)) \exp(\frac{- d(g(r_c)^2)}{2}) {\ell\choose{\ell/2 + r_c}} 
+ (1 \pm o(1)) \sum_{\substack{s \in S_1\\s \neq 0}} \exp(\frac{- d(g(r_c + s)^2)}{2}) {\ell\choose{\ell/2 + r_c + s}}
\end{multline*}
We substitute $g(r_c + s) = g(r_c) + s\ln \tau $ into the second part, and show
that we can bound this whole summation as an error term. Note that  both $s$ and $\ln \tau$ are positive by construction.
\begin{multline*}
\sum_{s \in S_1, s \neq 0} \exp(- d(g(r_c + s)^2)/2) {\ell\choose{\ell/2 + r_c + s}} \\
  \begin{aligned}
    & \leq \sum_{s \in S_1, s \neq 0} \exp[- d(g(r_c)^2 + s^2(\ln \tau)^2)/2] {\ell\choose{\ell/2 + r_c + s}} \\
    & \leq \exp(- d(g(r_c)^2)/2)  \sum_{s > 0} \exp(-ds^2(\ln \tau)^2/2) {\ell\choose{\ell/2 + r_c + s}} \\
    & = o\left(\exp(-d(g(r_c)^2)/2) {\ell\choose{\ell/2 + r_c}}\right).
  \end{aligned}
\end{multline*}
For the last inequality, observe that ${\ell \choose \ell/2 + r_c + s} \leq \ell^s {\ell \choose \ell/2+r_c}$. Since $d \geq \ell$, the exponential decay of $\exp(\Theta(-ds^2))$ completely kills this summation.

{\bf Case 2 (the sum over $S_2$):}
Now, we apply an identical argument for $r = r_f - s$. We have $g(r) = g(r_f) - s\ln \tau$. 
Applying the same calculations as above, 
$$  \sum_{s \in S_2} \exp\left[d(1+g(r)-e^{g(r)})\right] {\ell\choose{\ell/2 + r_f + s}} 
= (1 \pm o(1)) \exp(- d(g(r_f)^2)/2) {\ell\choose{\ell/2 + r_f}} $$

Adding the bounds from both the cases, we conclude
\begin{eqnarray} 
& & \sum_{r \notin R} \exp\left[d(1+g(r)-e^{g(r)})\right] {\ell\choose{\ell/2+r}}  \nonumber \\
& & = (1 \pm o(1)) \cdot \left\{ \exp(-dg(r_f)^2/2) {\ell\choose{\ell/2 + r_f}} + \exp(-dg(r_c)^2/2) {\ell\choose{\ell/2 + r_c}} \label{eq:fin}
\right\} 
\end{eqnarray}
We showed earlier that $r_f = r_d$ and $r_c = r_d + 1$. We remind the reader that $\theta_d = \ln(d/\lambda)/\ln \tau$, $r_d = \flo{\theta_d}$,
and $\delta_d = \theta_d - r_d$. Hence $g(r_f) = g(\theta_d) - \delta_d \ln \tau = -\delta_d \ln \tau$.
Since $r_c = r_f + 1$, $g(r_c) = \ln \tau + g(r_f) = (1-\delta_d)\ln \tau$. We substitute in \Eqn{fin}
to complete the proof. 
\qquad \myproofend
\end{myproof}

\section{Enhancing SKG with Noise: NSKG} \label{sec:nskg}

Let us now focus on a noisy version of SKG that removes the fluctuations
in the degree distribution.  We will refer to  our proposed noisy SKG model as  NSKG. The idea is quite simple. For each level $i \leq \ell$, define
a new matrix $T_i$ in such a way that the expectation
of $T_i$ is just $T$.
At level $i$ in the edge insertion, we use the matrix $T_i$ to choose the appropriate quadrant.

Here is a formal description. For convenience, we will assume that $T$ is symmetric.
It is fairly easy to generalize to general $T$.
Let $b$ be our noise parameter such that $b \leq \min((t_1+t_4)/2, t_2)$. For level $i$, choose $\mu_i$
to be a uniform random number in the range $[-b,+b]$. Set $T_i$ to be 
\begin{displaymath}
  \M{T}_i =
  \begin{bmatrix} 
    t_1 - \frac{2\mu_i t_1}{t_1+t_4} & t_2 + \mu_i \\ 
    t_3 + \mu_i & t_4 - \frac{2\mu_i t_4}{t_1+t_4}
  \end{bmatrix}
\end{displaymath}

Note that $T_i$ is symmetric, its entries sum to $1$, and all entries are positive.
This is by no means the only model of noise, but it is certainly
convenient for analysis. Each level involves only one random number $\mu_i$,
which changes all the entries of $T$ in a linear fashion. Hence,
we only need $\ell$ random numbers in total. For convenience,
we list out the noise parameters of NSKG in \Tab{paramsnoise}.

\begin{table}[tbp]
  \caption{Parameters for NSKG}
  \label{tab:paramsnoise}
  \centering
  \begin{boxedminipage}{.9\textwidth}
    \begin{itemize}
    \item $b = $ noise parameter $\leq \min((t_1+t_4)/2, t_2)$
    \item $\mu_i = $ noise at level $i = 1,\dots,\ell$
    \item $T_i = \begin{bmatrix} 
        t_1 - \frac{2\mu_i t_1}{t_1+t_4} & t_2 + \mu_i \\ 
        t_3 + \mu_i & t_4 - \frac{2\mu_i t_4}{t_1+t_4}
      \end{bmatrix} = $ noisy generating matrix at level $i = 1,\dots,\ell$
    \end{itemize}
  \end{boxedminipage}
\end{table}

In Figures \ref{fig:noisy_degdist_graph500}, \ref{fig:noisy_degdist_CA-HEP-PH},
and \ref{fig:noisy_degdist_WEB-NOTREDAME}, we show the effects of noise.
Observe how even a noise parameter as small as $0.05$ (which is extremely small
compared to the matrix values) significantly reduces the magnitude
of oscillations. A noise of $0.1$ almost removes the oscillations. 
(Even this noise is very small, since the standard deviation
of this noise parameter is at most $0.06$.)
Our proposed method of adding noise 
dampens the undesirable exponential tail behavior of SKG, leading
to a monotonic degree distribution.

\begin{figure}[htb]
  \centering
  \subfloat[CAHepPh]{\label{fig:noisy_degdist_CA-HEP-PH}
  \includegraphics[width=.48\columnwidth,trim=0 0 5 5,clip]{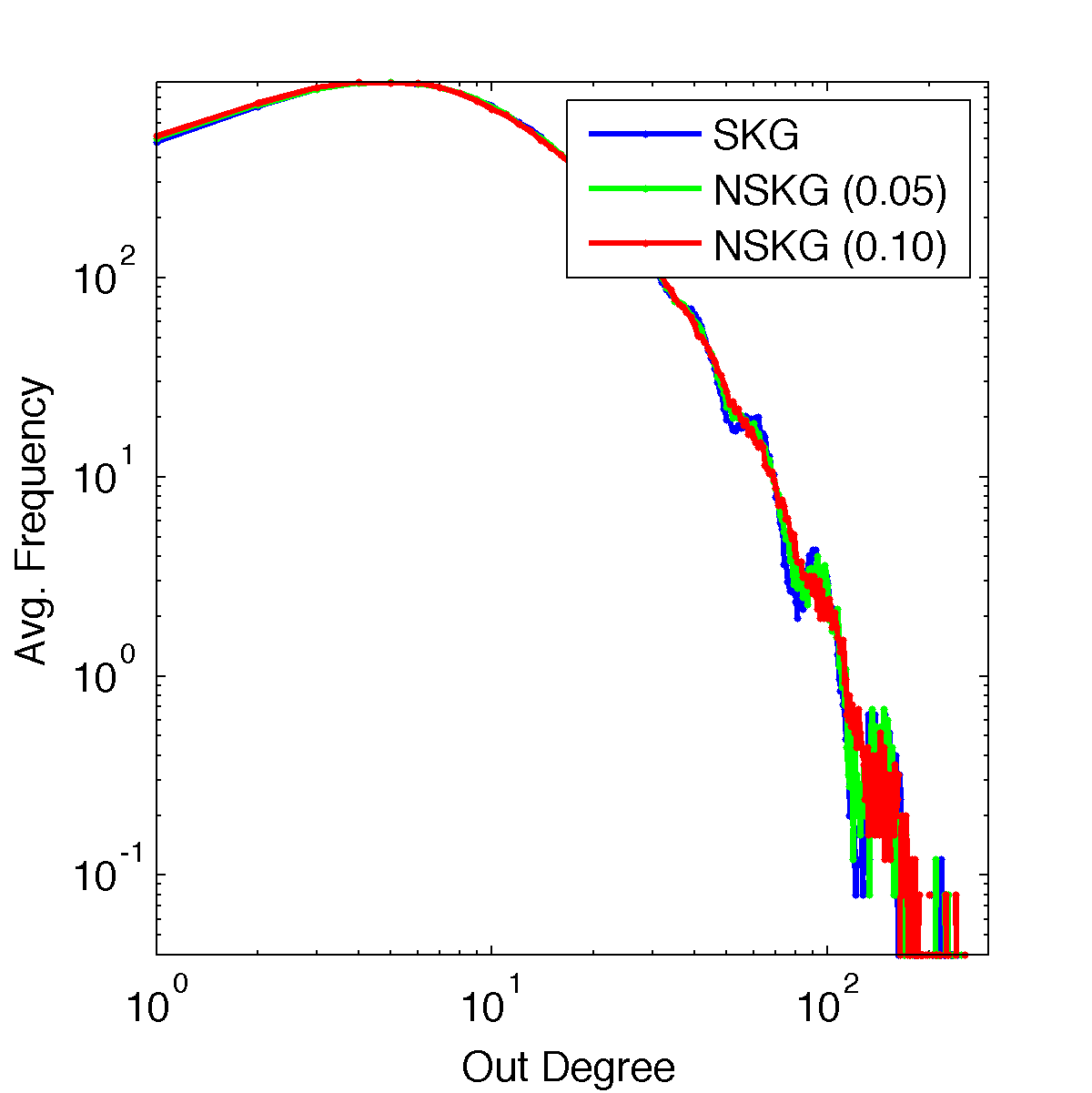}
  }
  \subfloat[WEBNotreDame]{\label{fig:noisy_degdist_WEB-NOTREDAME}
  \includegraphics[width=.48\columnwidth,trim=0 0 5 5,clip]{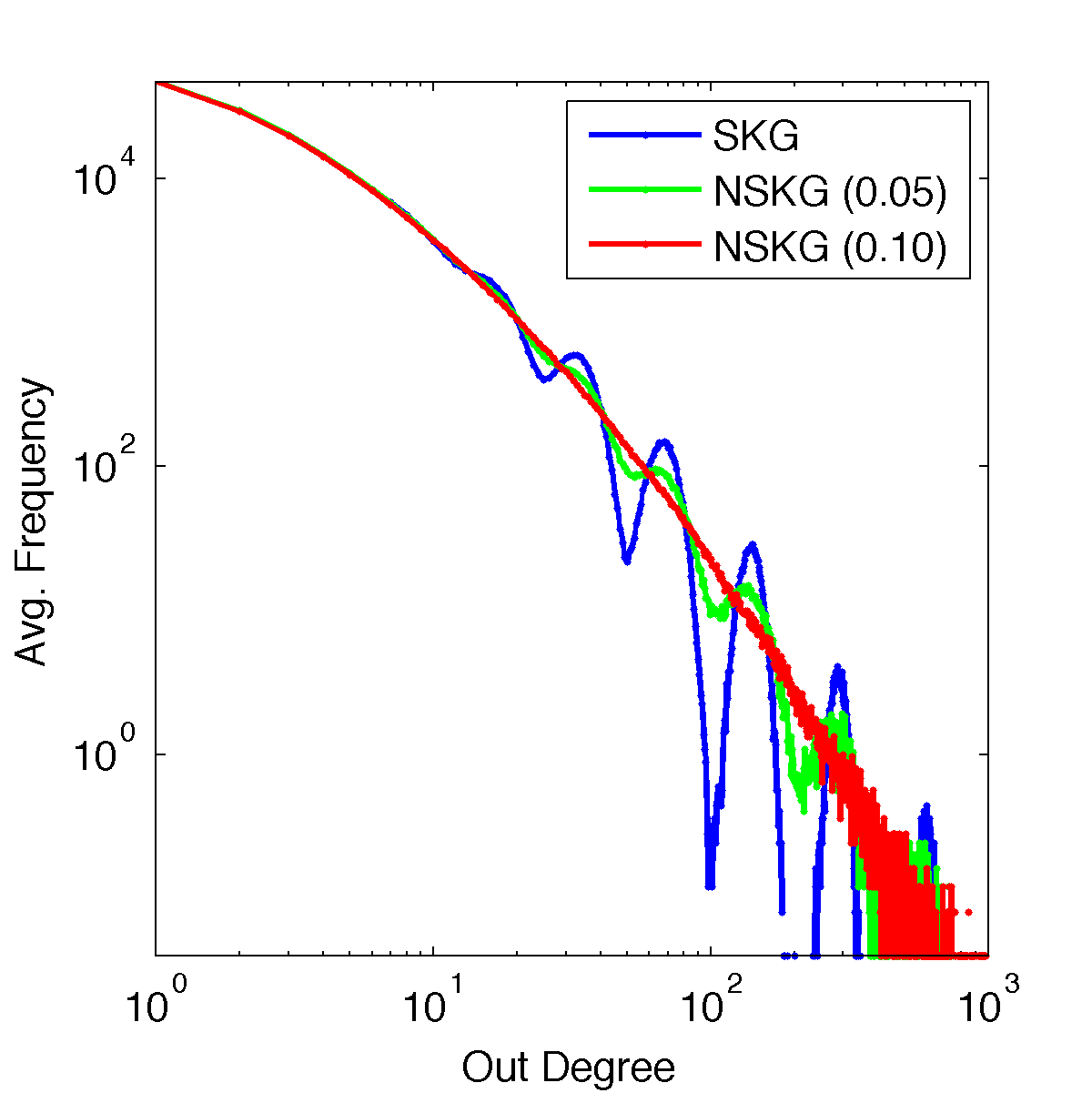}
  }
  \caption{
  The figures show the degree distribution of standard SKG and NSKG as the averages of 25 instances. Notice
  how effectively a noise of $0.1$ straightens the degree distribution.}
  \label{fig:degdist_noisy}
\end{figure}

\par~
\subsection{Why does noise help?} \label{sec-und-noise}

Before we state our formal theorem, let us set some asymptotic notation that will
allow for a more readable theorem. 
We will use the $O(\cdot)$ notation to suppress constant factors, where (for notational convenience) these constants
\emph{may} depend on the constants in the matrix $T$. 
As before, $o(1)$ is a quantity that goes to zero as $\ell$ grows.

Our formal theorem says that when the noise is ``large enough,"  we can show that the degree distribution
has at least a lognormal tail on average. This is a significant change from SKG,
where many degrees are below an exponential tail.

\begin{theorem} \label{thm:lognormal} Let noise $b$ be set to $c/\sqrt{\ell}$ for positive $c$, such that $c/\sqrt{\ell} < \min((t_1+t_4)/2, t_2)$.
Then the expected degree distribution for NSKG is bounded below by a lognormal. Formally,
when $\Gamma_d \leq \ell/2$ and $d \leq \sqrt{n}$,
\begin{displaymath}
  \EX[X_d] \geq \frac{\nu(c)}{d} {\ell\choose \ell/2+\Gamma_d}.
\end{displaymath}
Here $\nu(c)$ is some positive function of $c$. (This is independent of $\ell$, so for constant $c$, $\nu(c)$ is a positive constant.)
\end{theorem}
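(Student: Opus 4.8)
The plan is to lower-bound $\EX[X_d]$ by the contribution of the single slice $S_{\Gamma_d}$, showing that the noise forces \emph{every} vertex of that slice to have out-degree exactly $d$ with probability at least $\nu(c)/d$; since $|S_{\Gamma_d}| = {\ell\choose \ell/2+\Gamma_d}$, summing over the slice yields the bound. First I would condition on the noise $\mu_1,\dots,\mu_\ell$ and redo \Clm{prob-pr} level by level. For symmetric $T$ (and writing $\kappa = (t_1-t_4)/(t_1+t_4)>0$, which holds whenever $t_1$ is \emph{strictly} the largest entry, as in all realistic parameters), the top-row sum of $T_i$ is $\tfrac12+\sigma_i$ with $\sigma_i = \sigma - \kappa\mu_i$, so conditioned on the noise a vertex $v\in S_r$ has single-insertion probability $p_v = \prod_{i:v_i=0}(\tfrac12+\sigma_i)\prod_{i:v_i=1}(\tfrac12-\sigma_i)$, and $mp_v$ plays the role that $\lambda\tau^r$ played in the noiseless case. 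Conditioning on the noise and invoking the binomial-to-Poisson step of \Lem{prob-d} (valid since $mp_v\le\sqrt m$ throughout the relevant range, using $d\le\sqrt n$) gives $\pr[\odeg{v}=d\mid\mu] = (1\pm o(1))\,\mathrm{Poisson}(mp_v;d)$.

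The key random variable is $Y = \ln(mp_v)$. Linearizing $\ln(\tfrac12\pm\sigma_i) = \ln(\tfrac12\pm\sigma)\mp\kappa\mu_i/(\tfrac12\pm\sigma) + O(\mu_i^2)$ and summing over the $\ell$ levels exhibits $Y$ as a sum of $\ell$ independent bounded variables. Two facts, both enabled by the scaling $b=c/\sqrt\ell$, drive the argument: (i) $\EX[Y] = \ln(\lambda\tau^r) + O(c^2)$, so for $r=\Gamma_d$ it lies within a constant $C(c) = \tfrac12\ln\tau + O(c^2)$ of $\ln d$ (using $|\Gamma_d-\theta_d|\le\tfrac12$ from $\Gamma_d=\nint{\theta_d}$); and (ii) each of the $\Theta(\ell)$ summands has variance $\Theta(b^2)=\Theta(c^2/\ell)$, so $\mathrm{Var}[Y]$ converges to a positive constant $s^2(c)=\Theta(c^2)$, bounded above and below uniformly over all slices. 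Crucially, the law of $Y$ depends on $v$ only through its slice, so it is identical for every $v\in S_{\Gamma_d}$.

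Granting that $Y$ has a density $f_Y$ bounded below by a positive constant $\rho(c)$ on the fixed-width window $I=[\ln d - C(c)-1,\ \ln d + C(c)+1]$, the rest is a clean estimate. Writing $\pr[\odeg{v}=d] = (1\pm o(1))\int \mathrm{Poisson}(e^y;d)\,f_Y(y)\,dy$, I restrict to $I$, use $f_Y\ge\rho(c)$ there, and substitute $\Lambda=e^y$ (so $dy=d\Lambda/\Lambda$ and $1/\Lambda\ge 1/(d\,e^{C(c)+1})$ on $I$): this gives $\int_I \mathrm{Poisson}(e^y;d)\,dy \ge \frac{1}{d\,e^{C(c)+1}}\int_{d e^{-C(c)-1}}^{d e^{C(c)+1}}\mathrm{Poisson}(\Lambda;d)\,d\Lambda \ge \frac{1-o(1)}{d\,e^{C(c)+1}}$, since $\mathrm{Poisson}(\Lambda;d)$ is a unit-mass Gamma density in $\Lambda$ concentrating well inside $I$. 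Hence $\pr[\odeg{v}=d]\ge\nu(c)/d$ with $\nu(c)=\rho(c)\,e^{-C(c)-1}$ for every $v\in S_{\Gamma_d}$, and summing over the slice completes the proof.

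The main obstacle is exactly the density lower bound $f_Y\ge\rho(c)$ on $I$: an anti-concentration / local-limit statement for a sum of $\ell$ independent, non-identically-distributed bounded variables, evaluated on a constant-width window that may sit a constant distance $C(c)$ off the mean. The Gaussian heuristic predicts $f_Y(\ln d)\approx \frac{1}{\sqrt{2\pi}\,s(c)}\exp(-C(c)^2/(2s(c)^2))$, which is positive for every fixed $c>0$ and explains why $\nu(c)\to 0$ as $c\to 0$; making it rigorous calls for a local CLT with explicit error control (Edgeworth/Berry--Esseen for densities), or a direct smoothing argument that conditions on all but a block of the $\mu_i$ and uses the remaining uniform noise to spread $Y$ across $I$. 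Everything else---the per-level factorization, the Poisson approximation uniform over the noise range, and the mean/variance computations---is routine Taylor expansion and bookkeeping.
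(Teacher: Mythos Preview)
Your overall strategy matches the paper's: restrict to the single slice $r=\Gamma_d$, show that each vertex there has outdegree $d$ with probability at least $\nu(c)/d$, and sum. Your random variable $Y=\ln(mp_v)$ is the paper's $\ln(\lambda\tau^r)+\ln\rho_v$, and the mean/variance computations you outline are exactly those of \Clm{basic-xi}. Your integration step is a slight variant: the paper first applies the Taylor approximation $1+g_v-e^{g_v}\approx -g_v^2/2$ (\Clm{sum}) to reduce to a Gaussian integral yielding $1/\sqrt{d}$, which combined with the $1/\sqrt{2\pi d}$ from Stirling gives the $1/d$; your Gamma-density trick $\int\mathrm{Poisson}(\Lambda;d)\,d\Lambda=1$ collapses both factors at once and is arguably cleaner.

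The one place where you flag an open obstacle---the pointwise density lower bound $f_Y\ge\rho(c)$ on the constant-width window around $\ln d$---is exactly where the paper does something you did not anticipate. Rather than a local CLT for densities or an ad~hoc smoothing argument, the paper combines two ingredients: (i) the ordinary Berry--Esseen theorem for \emph{CDFs} shows that $\ln\rho_v$ falls in two constant-width intervals flanking the target window each with probability at least $\nu_2(c)$ (\Clm{prob-int}); (ii) since each $\ln\widehat\alpha_i$ has a log-concave density ($h_i(x)=e^x/2b_i$ on an interval), Ibragimov's theorem implies their convolution $f_v$ is \emph{unimodal} (\Clm{uni}), so the density at any point between the two flanking intervals is at least the minimum of the two interval maxima (\Clm{min}), which is bounded below by a constant depending only on~$c$. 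This unimodality-plus-CDF-Berry--Esseen trick sidesteps the need for density-level convergence rates entirely and is the paper's main technical idea in this section; your proposed routes would likely work but are heavier.
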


This bound tells us that as $\ell$ increases,
we need \emph{less} noise to get a lognormal tail. 
From a Graph 500 perspective, if we determine (through experimentation) 
that for some small $\ell$ a certain amount of noise suffices, the \emph{same}
amount of noise is certainly enough for \emph{larger $\ell$}.

We now provide a verbal description of the main ideas.
Let us assume that $\lambda = 1$ and $\tau = e$,
as before. We focus our attention on a vertex $v$ of slice $r$,
and wish to compute the probability that it has degree $d$.
Note the two sources of randomness: one coming from the choice
of the noisy SKG matrices, and the second from
the actual graph generation. We associate a \emph{bias parameter}
$\rho_v$ with every vertex $v$. This can be thought of as some
measure of how far the degree behavior of $v$ deviates from
its noiseless version. Actually, it is the random variable $\ln \rho_v$ 
that we are interested in. \full{Intuitively, 
this can just be thought of as a Gaussian random variable with mean zero.}
\confer{It can be shown that $\ln \rho_v$ is distributed like a Gaussian.}
The \emph{distribution} of $\rho_v$ is identical for all vertices in 
slice $r$. (Though it does not matter for our purposes,
for a given instantiation of the noisy SKG matrices,
vertices in the same slice can have different biases.)

We approximate the probability that $v$ has degree $d$ by \full{(refer to \Clm{nprob-d}) }
\begin{displaymath}
  \pr[\odeg{v}=d] = {\exp(dr + d\ln \rho_v - \rho_v e^r)}/{d!}. 
\end{displaymath}
After some simplifications, this is roughly equal to
$\exp(-d(r-\ln d -\ln \rho_v)^2)$. The additional
$\ln \rho_v$ will act as a \emph{smoothing} term.
Observe that even if $\ln d$ has a large fractional part, we could
still get vertices of degree $d$. Suppose $\ln d = 10.5$,
but $\ln \rho_v$ happened to be close to $0.5$. Then
vertices in slice $\nint{\ln d}$ would have degree $d$
with some nontrivial probability. Contrast this with regular SKG,
where there is almost no chance that degree $d$ vertices exist.

Think of the probability as $\exp(d(r-\ln d - X)^2)$, where
$X$ is a random variable. The expected probability will be
an average over the distribution of $X$. Intuitively, instead
of the probability just being $\exp(d(r-\ln d)^2)$ (in the case of SKG),
it is now the \emph{average} value over some interval. If the standard
deviation of $X$ is sufficiently large, even though $\exp(d(r-\ln d)^2)$
is small, the average of $\exp(d(r-\ln d - X)^2)$ can be large.
Refer to \Fig{gaussian-shaded}.

We know that $X$ is a Gaussian random variable (with some standard deviation
$\sigma$). So we can formally express the (expected) probability that $v$
has degree $d$ as an integral,
\begin{displaymath}
P(\odeg{v}=d \;\vert\; \tau = e, \; \lambda = 1) = 
 \int_{-\infty}^{+\infty} \exp(d(r-\ln d - X)^2) \cdot e^{-X^2/2\sigma^2} dX. 
\end{displaymath}
This definite integral can be evaluated exactly (since it is just a Gaussian). Intuitively, this is roughly
the average value of $\exp(d(r-\ln d - X)^2)$, where $X$ ranges from $-\sigma$
to $+\sigma$. Suppose $\sigma > 1$. Since $r$ ranges over the integers, 
there is always \emph{some} $r$ such that $|r - \ln d| < 1$. For this value
of $r$, the average of $\exp(d(r-\ln d - X)^2)$ over the range $X \in [-1,+1]$
will have a reasonably large value. This ensures that (in expectation) many 
vertices in this slice $r$ have degree $d$. This can be shown
for all degrees $d$, and we can prove that the degree distribution
is at least lognormal. 

This is an intuitive sketch of the proof. The random variable $\ln \rho_v$ is not exactly
Gaussian, and hence we have to account for errors in such an approximation. We do not
finally get a definite integral that can be evaluated exactly, but we can give
good bounds for its value.

\begin{figure}
  \centering
   \includegraphics[width=.45\textwidth]{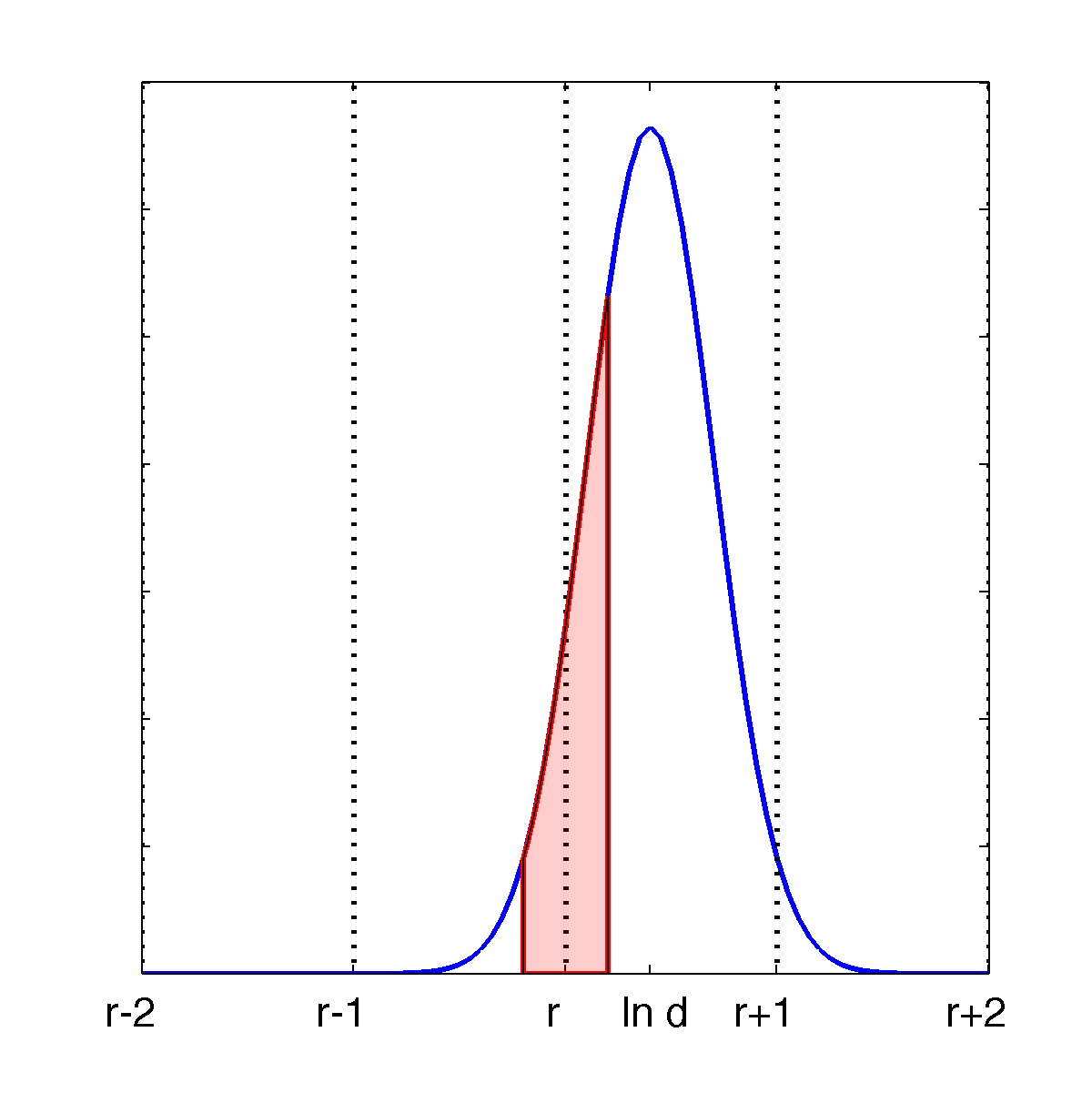}
  \caption{The effect of noise. The underlying Gaussian curve is the same as one
  in \Fig{gaussian}. Adding noise can be thought of as an average over the Gaussian. 
  So the probability that a vertex from slice $r$ has degree is the area of the shaded
  region.}
  \label{fig:gaussian-shaded}
\end{figure}

\subsection{Preliminaries for analysis} \label{sec:app-noisy}

There are many new parameters we need to introduce for our NSKG analysis. Each
of these quantities is a random variable that depends on the choice
of the matrices $T_i$. We list them below. 
\begin{itemize} 
		\item $\sigma_i = t_1 - \frac{2\mu_i t_1}{t_1+t_4} + t_2 + \mu_i - 0.5
			= \sigma + \mu_i(1-\frac{2t_1}{t_1+t_4})$.
		\item $\alpha_i = (1/2+\sigma_i)/(1/2+\sigma)$. It will be convenient to express this
		in terms of $\mu_i$, replacing the dependence on $\sigma_i$.
		$$ \alpha_i = (1/2+\sigma_i)/(t_1 + t_2) = 1 - \mu_i \frac{(t_1 - t_4)}{(t_1 + t_2)(t_1 + t_4)} $$
		\item $\beta_i = (1/2-\sigma_i)/(1/2-\sigma)$. Performing a calculation similar
		to the one above,
		$$ \beta_i = (1/2-\sigma_i)/(t_3 + t_4) = 1 + \mu_i \frac{(t_1 - t_4)}{(t_3 + t_4)(t_1 + t_4)} $$
		\item $b_\alpha, b_\beta$: We set
		$$ b_\alpha = \frac{b(t_1-t_4)}{(t_1+t_2)(t_1+t_4)} = \frac{4b\sigma}{(1+2\sigma)(t_1+t_4)} $$
		Similarly,
		$$ b_\beta = \frac{b(t_1-t_4)}{(t_3+t_4)(t_1+t_4)} = \frac{4b\sigma}{(1-2\sigma)(t_1+t_4)}$$
		Hence, $\alpha_i$ is distributed
		uniformly at random in $[1-b_\alpha,1+b_\alpha]$, and $\beta_i$ is uniformly
		random in $[1-b_\beta,1+b_\beta]$. Note that $b_\alpha, b_\beta = \Theta(c/\sqrt{\ell})$.
		\item $\rho_v$: Let $v$ be represented as a bit vector $(z_1,\ldots,z_k)$. The \emph{bias} for $v$ 
			is $\rho_v = \prod_{i:z_i=0} \alpha_i \prod_{i:z_i=1} \beta_i$. We set
			$\lambda_v = \lambda \rho_v$.
\end{itemize}

\subsection{The behavior of $\ln \rho_v$} \label{sec:rhov}

We need to bound the behavior of $\ln \rho_v$, which is 
$\sum_{i:z_i=0} \ln \alpha_i + \sum_{i:z_i=1} \ln \beta_i$. Observe that
this is a sum of independent random variables. By the Central Limit Theorem,
we expect $\ln \rho_v$ to be distributed as a Gaussian, but we still need to investigate the variance of this distribution. 
Approximately (since $b_\alpha$ and  $b_\beta$ are small), $\ln \alpha_i$
is uniformly random in $[-b_\alpha,b_\alpha]$, so the variance of $\ln \alpha_i$
is $\Theta(b^2_\alpha) = \Theta(1/\ell)$. A similar statement holds for $\ln \beta_i$,
and we bound the variance of $\ln \rho_v$ by $\Theta(1)$.
So the probability density function (pdf) of $\ln \rho_v$ is roughly concentrated in a constant-sized interval
of size $1$ (around $0$). This is what we will formally show in this section. 
We will need a pointwise convergence guarantee for the pdf
of $\ln \rho_v$. 
Throughout this section, we will use various functions of the form $\nu_1(c), \nu_2(c), \ldots$. These are
strictly positive constant functions of $c$ (for $c > 0$), and are a convenient way of tracking dependences on $c$.
The reader should interpret $\nu_a(c)$ to be some constant that depends on $c$ (and $T$ and $\Delta$, which are fixed),
but is independent of $\ell$.
The main lemma of this section is the following.

\begin{lemma} \label{lem:rho-v} Set $\htau = \max(\ln \tau, 2)$. Let $f_v(x)$ be the pdf of $\ln \rho_v$.
For $|x| \leq \htau$, $f_v(x) \geq \nuone(c)$.
\end{lemma}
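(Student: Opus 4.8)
The plan is to establish a local central limit theorem for $\ln\rho_v$: I would show that its density $f_v$ converges pointwise to a Gaussian density whose mean is bounded and whose variance is $\Theta(c^2)$, uniformly over all slices $r$, and then observe that such a Gaussian is bounded below by a positive constant on the fixed window $[-\htau,\htau]$. The first step is to pin down the mean and variance. Since the $\mu_i$ are i.i.d.\ uniform on $[-b,b]$, every $\alpha_i$ is an identical copy of a uniform variable on $[1-b_\alpha,1+b_\alpha]$ and every $\beta_i$ a copy of a uniform on $[1-b_\beta,1+b_\beta]$, with $b_\alpha,b_\beta=\Theta(c/\sqrt{\ell})$. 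Thus for $v\in S_r$, $\ln\rho_v$ is a sum of $\ell/2+r$ i.i.d.\ copies of $\ln\alpha$ plus $\ell/2-r$ i.i.d.\ copies of $\ln\beta$. A second-order expansion of $\ln(1+x)$ gives $\EX[\ln\alpha_i],\EX[\ln\beta_i]=O(b_\alpha^2)=O(c^2/\ell)$ and variances $\Theta(c^2/\ell)$, so summing the $\ell$ terms yields a bounded mean $\mu:=\EX[\ln\rho_v]=O(c^2)$ and variance $s^2:=\mathrm{Var}(\ln\rho_v)=\Theta(c^2)$. Both bounds are uniform in $r$ precisely because the total number of summands is always exactly $\ell$.

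Next I would invert the characteristic function: $f_v(x)=\frac{1}{2\pi}\int \Phi(t)e^{-itx}\,dt$ with $\Phi(t)=\phi_\alpha(t)^{\ell/2+r}\phi_\beta(t)^{\ell/2-r}$, and split the integral at $|t|=T_0:=C'\sqrt{\ell}$. On the main range $|t|\le T_0$, I would Taylor-expand $\log\phi_\alpha$ and $\log\phi_\beta$ to second order; each contributes an $O(|t|^3 b^3)$ cubic error and there are $\ell$ of them, so $\Phi(t)=(1+o(1))\exp(it\mu - s^2t^2/2)$ uniformly on this range, and integrating recovers the Gaussian density $\frac{1}{\sqrt{2\pi}\,s}\exp(-(x-\mu)^2/(2s^2))$ up to a $(1+o(1))$ factor.

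The main obstacle is the tail $|t|>T_0$, where the quadratic expansion breaks down and the classical i.i.d.\ local CLT does not apply directly, since the summands live on intervals of shrinking length $\Theta(1/\sqrt{\ell})$ (a triangular array). Here I would use that at least one of the two groups has $\ge\ell/2$ members, so $|\Phi(t)|\le|\phi(t)|^{\ell/2}$ for the characteristic function $\phi$ of the majority group. Because $\ln\alpha$ and $\ln\beta$ have densities of bounded variation supported on an interval of length $\Theta(b_\alpha)=\Theta(1/\sqrt{\ell})$, integration by parts gives the envelope $|\phi(t)|\le C/(b|t|)$ for all $t$. Choosing $C'$ large enough that $C/(bT_0)<1/2$ makes $|\Phi(t)|\le(C/(b|t|))^{\ell/2}$ integrate to an exponentially small quantity over $|t|>T_0$, so the tail contributes only $o(1)$ to $f_v(x)$.

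Combining the two regimes gives $f_v(x)=(1\pm o(1))\frac{1}{\sqrt{2\pi}\,s}\exp(-(x-\mu)^2/(2s^2))+o(1)$. For $|x|\le\htau$, since $|\mu|=O(c^2)$ is bounded and $s=\Theta(c)$, the Gaussian factor is at least a positive constant depending only on $c$ (and the fixed $T,\Delta$); taking $\nuone(c)$ to be, say, half of this constant absorbs the $o(1)$ errors for all large $\ell$ and yields the claimed lower bound $f_v(x)\ge\nuone(c)$.
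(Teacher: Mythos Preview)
Your local CLT approach is genuinely different from the paper's, and it would ultimately succeed, but there is a real gap in the two-way Fourier split as you have written it. On the main range $|t|\le T_0=C'\sqrt{\ell}$ you claim $\Phi(t)=(1+o(1))\exp(it\mu-s^2t^2/2)$ via the cubic Taylor remainder ``$O(|t|^3 b^3)$ per factor, $\ell$ factors.'' But $\ell\cdot|t|^3 b^3=\Theta(|t|^3 c^3/\sqrt{\ell})$, and at $|t|\sim C'\sqrt{\ell}$ this remainder is $\Theta((C')^3 c^3\ell)$, not $o(1)$. So the approximation does not hold uniformly on your main range, and the subsequent integration step is unjustified. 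A correct Fourier-analytic proof needs a three-way split: Taylor expansion on $|t|\le A_\ell$ with $A_\ell\to\infty$ slowly (e.g.\ $A_\ell=o(\ell^{1/6})$); a sub-Gaussian bound $|\phi_i(t)|\le\exp(-c_0\xi_i^2 t^2)$ on the intermediate range $A_\ell\le|t|\le T_0$; and your integration-by-parts envelope on $|t|>T_0$. You correctly identified the far tail as an obstacle but missed the intermediate range, which is where the triangular-array difficulty actually bites.

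The paper sidesteps the local CLT entirely. Instead of proving density convergence, it proves two structural facts and combines them: (i) by the Berry--Esseen theorem (ordinary CDF convergence, with error $O(1/\sqrt{\ell})$), $\ln\rho_v$ places at least $\nu_2(c)$ mass in $[\htau,C]$ and in $[-C,-\htau]$ for some constant $C>\htau$; and (ii) each summand $\ln\halpha_i$ has a log-concave density $e^x/(2b_i)$, so by Ibragimov's theorem the convolution $f_v$ is unimodal. From (i), the maximum of $f_v$ on each flanking interval is at least $\nu_2(c)/(C-\htau)$; by unimodality, $f_v$ cannot dip below that on the interior $[-\htau,\htau]$. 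This buys you the pointwise density lower bound from a mere integral (CDF) bound, without any of the delicacy your Fourier inversion requires. Your route, once repaired, gives strictly more (an actual asymptotic for $f_v$), but at the cost of a substantially more technical argument than the paper's.
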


We will first prove  \Lem{rho-v} as a  direct result of two claims stated below. 
Then we  will  prove these claims  in the subsequent subsections. The first claim, the more technical of the two, shows
that $\ln \rho_v$ has a sufficiently large probability of attaining a constant value.

\begin{claim} \label{clm:prob-int} There exists a constant $C > \htau$,
such that the probability that $\ln \rho_v$ lies in $[\htau,C]$ is at least $\nutwo(c)$ and that
of lying in $[-C,-\htau]$ is also at least $\nutwo(c)$.
\end{claim}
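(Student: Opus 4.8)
The plan is to recognize $\ln\rho_v = \sum_{i:z_i=0}\ln\alpha_i + \sum_{i:z_i=1}\ln\beta_i$ as a sum of $\ell$ independent, uniformly bounded summands and to pin down its law with a quantitative Central Limit Theorem (Berry--Esseen). Writing $Y_i = \ln\alpha_i$ when $z_i=0$ and $Y_i = \ln\beta_i$ when $z_i=1$, we have $\ln\rho_v = \sum_{i=1}^\ell Y_i$ with independent $Y_i$. Since $\alpha_i \in [1-b_\alpha,1+b_\alpha]$ and $\beta_i\in[1-b_\beta,1+b_\beta]$ with $b_\alpha,b_\beta = \Theta(c/\sqrt\ell)$, each summand is tiny: $|Y_i| = O(c/\sqrt\ell)$. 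Let $\mu = \EX[\ln\rho_v]$, $s^2 = \mathrm{Var}(\ln\rho_v)$, and let $\Phi$ be the standard normal CDF.

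First I would compute the first two moments of a single summand. Using $\ln(1+x) = x - x^2/2 + O(x^3)$ together with the fact that $\alpha_i - 1$ is uniform and symmetric on $[-b_\alpha,b_\alpha]$, one obtains $\EX[\ln\alpha_i] = -\Theta(b_\alpha^2)$ and $\mathrm{Var}(\ln\alpha_i) = b_\alpha^2/3 + O(b_\alpha^4)$, and identically for $\beta_i$ with $b_\beta$. Summing over the $\ell/2+r$ coordinates with $z_i=0$ and the $\ell/2-r$ with $z_i=1$, and using $b_\alpha^2,b_\beta^2 = \Theta(c^2/\ell)$, this gives $\mu = -\Theta(c^2)$ and $s^2 = \Theta(c^2)$. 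The point I would stress is uniformity: $s^2$ is a convex combination of the per-coordinate variances, each $\Theta(c^2)$, so for fixed $c$ it stays bounded above and below by positive constants no matter how the $\ell$ bits of $v$ split into zeros and ones, and $|\mu| = O(c^2)$ likewise; hence the bounds hold simultaneously for every slice $r \in [-\ell/2,\ell/2]$ and every large $\ell$.

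Next I would invoke Berry--Esseen for non-identically-distributed summands. The summed third central moments are $\sum_i \EX|Y_i - \EX Y_i|^3 = O(\ell\cdot (c/\sqrt\ell)^3) = O(c^3/\sqrt\ell)$ while $s^3 = \Theta(c^3)$, so the standardized CDF of $\ln\rho_v$ lies within $O(1/\sqrt\ell) = o(1)$ of $\Phi$ uniformly; thus $\pr[\ln\rho_v \leq t] = \Phi((t-\mu)/s) \pm o(1)$ for every $t$. I would then pick the constant $C = C(c) > \htau$: for a Gaussian of mean $\mu = -\Theta(c^2)$ and deviation $s = \Theta(c)$ the mass $\Phi((C-\mu)/s) - \Phi((\htau-\mu)/s)$ is strictly positive and can be bounded below by a fixed constant $2\nutwo(c)$ once $C$ is large enough, and this bound is uniform over the bounded (in $\ell$) ranges of $\mu$ and $s$. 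The same computation on the left tail yields $\Phi((-\htau-\mu)/s) - \Phi((-C-\mu)/s) \geq 2\nutwo(c) > 0$, using only that $|\mu| = O(c^2)$. Choosing $\ell$ large enough that the $o(1)$ Berry--Esseen error falls below $\nutwo(c)$ then delivers $\pr[\ln\rho_v \in [\htau,C]] \geq \nutwo(c)$ and $\pr[\ln\rho_v \in [-C,-\htau]] \geq \nutwo(c)$, which is the claim.

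The main obstacle is not a single hard inequality but the uniformity bookkeeping: I must guarantee that $s^2$ stays bounded away from $0$ --- so that $\ln\rho_v$ cannot concentrate so tightly that it places no mass beyond $\htau$ --- simultaneously across all slices, including the extreme slices $r = \pm\ell/2$ where all $\ell$ summands are of a single type. This rests entirely on the per-coordinate variance estimate $\mathrm{Var}(\ln\alpha_i) = \Theta(b_\alpha^2)$ and the preliminary fact $b_\alpha,b_\beta = \Theta(c/\sqrt\ell)$; granting those, the remainder is a routine moment estimate fed into a standard quantitative CLT.
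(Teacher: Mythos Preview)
Your proposal is correct and follows essentially the same approach as the paper: compute the per-summand moments to get $|\mu|=O(c^2)$, $s^2=\Theta(c^2)$, and summed third moments $O(c^3/\sqrt\ell)$, then apply the Berry--Esseen theorem to the standardized sum and read off constant mass on $[\htau,C]$ and $[-C,-\htau]$ for a suitable $C=C(c)$. Your added emphasis on uniformity across slices (so the variance lower bound holds even at $r=\pm\ell/2$) is a useful point that the paper leaves implicit.
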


The next claim will be a consequence of the unimodularity of $f_v(x)$.

\begin{claim} \label{clm:min} For any $x \in [x_1, x_2]$, $f_v(x) \geq \min(f_v(x_1),f_v(x_2))$.
\end{claim}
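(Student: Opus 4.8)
The plan is to establish that $f_v$ is log-concave and then observe that a log-concave function is automatically quasi-concave, which is exactly the stated inequality. Recall from \Sec{rhov} that $\ln \rho_v = \sum_{i:z_i=0}\ln\alpha_i + \sum_{i:z_i=1}\ln\beta_i$ is a sum of \emph{independent} random variables, so its density $f_v$ is the convolution of the densities of the individual terms. Since log-concavity of densities is preserved under convolution (a classical fact, following for instance from the Pr\'ekopa--Leindler inequality), it suffices to show that each summand $\ln\alpha_i$ and $\ln\beta_i$ has a log-concave density.

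The log-concavity of the individual terms is an elementary computation, and here I would work with the \emph{exact} (not approximate) distribution, which is cleaner. The variable $\alpha_i$ is uniform on $[1-b_\alpha, 1+b_\alpha]$, so by the change of variables $y = \ln\alpha$ the density of $\ln\alpha_i$ equals $\frac{1}{2b_\alpha}e^{y}$ on the interval $[\ln(1-b_\alpha),\ln(1+b_\alpha)]$ and $0$ elsewhere. Its logarithm is the affine map $y - \ln(2b_\alpha)$ on that interval (and $-\infty$ outside), hence concave, so the density is log-concave; the identical computation applies to $\ln\beta_i$ using $b_\beta$. Convolving these $\ell$ log-concave densities therefore yields a log-concave $f_v$.

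Finally I would convert log-concavity into the claim. If $x = \theta x_1 + (1-\theta)x_2$ with $\theta \in [0,1]$, then log-concavity gives $f_v(x) \geq f_v(x_1)^{\theta} f_v(x_2)^{1-\theta} \geq \min(f_v(x_1),f_v(x_2))$, where the second step uses $a^{\theta}b^{1-\theta} \geq \min(a,b)$ for nonnegative $a,b$; if one endpoint lies outside the support the right side is $0$ and the bound is trivial. Equivalently, a log-concave density is unimodal and monotone on either side of its mode, so on $[x_1,x_2]$ it never drops below the smaller endpoint value. I expect the only nontrivial ingredient to be the preservation of log-concavity under convolution, which I would invoke as a standard result rather than reprove; the log-linear form of the transformed uniform density and the implication from log-concavity to quasi-concavity are both routine.
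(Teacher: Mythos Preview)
Your proof is correct and follows essentially the same skeleton as the paper's: compute the density of each $\ln\alpha_i$ (resp.\ $\ln\beta_i$) as $e^y/(2b)$ on an interval, observe that its logarithm is affine and hence the density is log-concave, and then invoke a convolution closure result. The one difference is in which closure theorem is cited. The paper appeals to Ibragimov's theorem (the convolution of a log-concave unimodal density with a unimodal density is unimodal), applied iteratively, to conclude only that $f_v$ is \emph{unimodal}, and then reads off the claim from unimodality. You instead invoke the Pr\'ekopa--Leindler fact that log-concavity itself is preserved under convolution, obtaining the stronger conclusion that $f_v$ is log-concave, and then deduce quasi-concavity directly via $f_v(x_1)^\theta f_v(x_2)^{1-\theta}\ge\min(f_v(x_1),f_v(x_2))$. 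Both routes are standard; yours is marginally cleaner since log-concavity immediately yields the inequality without the separate case analysis on where the mode sits, while the paper's version makes do with the weaker intermediate property (unimodality) that is all the claim actually needs.
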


Now for the proof of \Lem{rho-v}.

\begin{proof} (of \Lem{rho-v}) By \Clm{prob-int}, the probability that $\ln \rho_v$ lies in $I := [-C,-\htau]$ is at least $\nutwo(c)$.
Therefore, $(C-\htau)\max_{x \in I} f_v(x) \geq \nutwo(c)$. Suppose the maximum is achieved at $x_1$.
This means that there exists $x_1 \in [-C,-\htau]$, $f_v(x_1) = \Omega(\nutwo(c))$. Similarly, there exists some $x_2 \in [\htau,C]$ such that
$f_v(x_2) = \Omega(\nutwo(c))$. Observe that for any $x$ such that $|x| \leq \htau$,
$x \in [x_1, x_2]$. By \Clm{min}, for any such $x$, $f_v(x) =  \Omega(\nutwo(c))$.
Therefore, we can bound $f_v(x) \geq \nuone(c)$, for some positive function $\nuone$.
\end{proof}

\subsubsection{Proving \Clm{prob-int}} \label{sec:proof-prob} We begin
with notational setup. We fix some vertex $v$.
For convenience, define the variables $\halpha_i$ (for all $i \leq \ell$).
If $z_i = 0$, set $\halpha_i = \alpha_i$
and $\halpha_i = \beta_i$ otherwise. We can write $\ln \rho_v = \sum_i \ln \halpha_i$.
The random variable $\halpha_i$ is uniform in $[1-b_i,1+b_i]$, where $b_i$ is either
$b_\alpha$ or $b_\beta$ appropriately. Set the zero mean random variable $X_i = \ln \halpha_i - \EX[\ln \halpha_i]$. We have 
the following series of facts.

\begin{claim} \label{clm:basic-xi} 
\begin{asparaitem}
	\item The pdf of $\ln \halpha_i$, denoted by $h_i(x)$, is given as follows. For $x \in [\ln(1-b_i),\ln(1+b_i)]$, $h_i(x) = e^x/2b_i$,
	and zero otherwise.
	\item $|\EX[\ln \halpha_i]| = O(c^2/\ell)$, $\EX[X^2_i] = \Theta(c^2/\ell)$, and $\EX[|X_i|^3] = O(c\EX[X^2_i]/\sqrt{\ell})$.
\end{asparaitem}
\end{claim}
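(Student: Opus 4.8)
The plan is to handle the two bullets separately: the first by a one-line change of variables, the second by a small-$b_i$ Taylor expansion. Throughout I will use the key fact recorded just above the claim, namely that $b_i \in \{b_\alpha, b_\beta\}$ satisfies $b_i = \Theta(c/\sqrt{\ell})$. This means $b_i \to 0$ as $\ell$ grows, so every error term of order $b_i^4$ is negligible against the order-$b_i^2$ main terms, and substituting $b_i^2 = \Theta(c^2/\ell)$ at the end converts all my bounds into the stated form.

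For the pdf of $\ln \halpha_i$: since $\halpha_i$ is uniform on $[1-b_i,1+b_i]$, its density equals the constant $1/(2b_i)$ there. Writing $Y = \ln \halpha_i$ so that $\halpha_i = e^Y$ and $d\halpha_i/dY = e^Y$, the standard transformation-of-variables formula gives $h_i(y) = \tfrac{1}{2b_i}\,e^y$ on the image interval $[\ln(1-b_i),\ln(1+b_i)]$ and $0$ elsewhere. This is exactly the claimed expression, and no estimates are needed.

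For the moment bounds I would write $\halpha_i = 1 + U_i$ with $U_i$ uniform on $[-b_i,b_i]$ and expand $\ln(1+U_i) = U_i - U_i^2/2 + U_i^3/3 - \cdots$. Since $U_i$ is symmetric about $0$, all odd moments vanish, while $\EX[U_i^2] = b_i^2/3$ and $\EX[U_i^4] = b_i^4/5$. Taking expectations term by term, $\EX[\ln \halpha_i] = -\tfrac12\EX[U_i^2] + O(\EX[U_i^4]) = -b_i^2/6 + O(b_i^4)$, so $|\EX[\ln \halpha_i]| = O(b_i^2) = O(c^2/\ell)$. Squaring the series gives $(\ln(1+U_i))^2 = U_i^2 + O(|U_i|^3)$, hence $\EX[(\ln \halpha_i)^2] = b_i^2/3 + O(b_i^4)$; subtracting the square of the mean, which is only $O(b_i^4)$, yields $\EX[X_i^2] = \mathrm{Var}(\ln \halpha_i) = b_i^2/3 + O(b_i^4) = \Theta(b_i^2) = \Theta(c^2/\ell)$. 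The one subtle point is that this is genuinely $\Theta$ and not merely $O$: the leading coefficient $1/3$ is strictly positive and the correction is a lower-order $b_i^4$ term, so no cancellation occurs.

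For the third absolute moment I would avoid a further expansion and instead use a crude uniform bound. Since $|U_i| \leq b_i$ with $b_i$ small, $|\ln(1+U_i)| = O(b_i)$, and because $|\EX[\ln \halpha_i]| = O(b_i^2)$ is even smaller, we get $|X_i| = O(b_i)$ pointwise. Therefore $\EX[|X_i|^3] \leq \big(\max |X_i|\big)\,\EX[X_i^2] = O(b_i)\,\EX[X_i^2] = O\big(c\,\EX[X_i^2]/\sqrt{\ell}\big)$, which is the claimed bound. The only real obstacle in the whole claim is bookkeeping: keeping the Taylor remainders consistently at the negligible order $b_i^4$ while extracting the exact $\Theta(b_i^2)$ leading behavior of the variance; once that is done, everything reduces to the direct substitution $b_i = \Theta(c/\sqrt{\ell})$.
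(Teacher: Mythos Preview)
Your argument is correct. The pdf computation and the third-moment bound are handled identically to the paper (change of variables for the former; the crude $\EX[|X_i|^3] \leq \max|X_i|\cdot\EX[X_i^2]$ for the latter). Where you differ is in the first and second moments: the paper integrates directly against the density $h_i(x)=e^x/(2b_i)$, using integration by parts for $\EX[\ln\halpha_i]$ and crude interval bounds (of the form $\ln(1\pm b_i)\in[\mp 2b_i,\pm 2b_i]$, $e^x\in[e^{-1},e]$) for $\EX[(\ln\halpha_i)^2]$. Your route---writing $\halpha_i=1+U_i$ and Taylor-expanding $\ln(1+U_i)$, then exploiting the symmetry $\EX[U_i^{2k+1}]=0$---is more elementary and gives sharper constants (e.g.\ $\EX[\ln\halpha_i]=-b_i^2/6+O(b_i^4)$ rather than just $O(b_i^2)$), at the cost of a slightly less self-contained derivation.

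One small bookkeeping slip: from the pointwise expansion $(\ln(1+U_i))^2=U_i^2+O(|U_i|^3)$ you only get $\EX[(\ln\halpha_i)^2]=b_i^2/3+O(b_i^3)$, not $O(b_i^4)$. To reach the $O(b_i^4)$ you wrote you must expand one term further and invoke $\EX[U_i^3]=0$ explicitly. This does not affect the $\Theta(b_i^2)$ conclusion, since $b_i^3=o(b_i^2)$, but it is worth tightening if you want the error orders to be internally consistent.
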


\begin{proof} 
The pdf of $\halpha_i$ is $h_\alpha(x) = 1/2b_\alpha$ for $x \in [1-b_\alpha,1+b_\alpha]$ and zero otherwise. 
For any monotone function $F(x)$, the pdf of $F(\halpha_i)$ is given by $|dF^{-1}(x)/dx|h(x)$.
Setting $F$ as the function $\ln$, the pdf of $\ln \alpha_i$, $h_i(x)$, is given
by $e^x/2b_\alpha$ for $x \in [\ln(1-b_\alpha),\ln(1+b_\alpha)]$ and zero otherwise.
$$ \EX[\ln \halpha_i] = \int^{\ln(1+b_i)}_{\ln(1-b_i)} xh_i(x) dx = (2b_i)^{-1} \int^{\ln(1+b_i)}_{\ln(1-b_i)} xe^x dx $$
Using integration by parts,
\begin{eqnarray*}
\int^{\ln(1+b_i)}_{\ln(1-b_i)} xe^x dx & = & [xe^x]\Big|_{\ln(1-b_i)}^{\ln(1+b_i)} - \int^{\ln(1+b_i)}_{\ln(1-b_i)} e^x dx \\
& = & [(1+b_i)\ln(1+b_i) - (1-b_i)\ln(1-b_i)] - [(1+b_i) - (1-b_i)] \\
& = & b_i \ln(1-b^2_i) + \ln(1+b_i) - \ln(1-b_i) - 2b_i
\end{eqnarray*}
Taking absolute values,
$$\Big| \int^{\ln(1+b_i)}_{\ln(1-b_i)} xe^x dx \Big| \leq | b_i \ln(1-b^2_i)| + |\ln(1+b_i) - \ln(1-b_i) - 2b_i|$$
The first term is at most $2b^3_i$. For the second term, we need a finer Taylor approximation.
\begin{eqnarray*} \ln(1+b_i) - \ln(1-b_i) - 2b_i \leq (b_i - b^2_i/2 + b^3_i) - (-b_i - b^2_i/2) - 2b_i \leq b^3_i \\
\ln(1+b_i) - \ln(1-b_i) - 2b_i \geq (b_i - b^2_i/2) - (-b_i - b^2_i/2 - b^3_i) - 2b_i \geq -b^3_i \\
\end{eqnarray*}
All in all, $|\EX[\ln \halpha_i]| \leq O(b^2_i) = O(c^2/\ell)$.
$$ \EX[X^2_i] = \EX[(\ln \halpha_i)^2] - (\EX[\ln \halpha_i])^2 $$
$$ \EX[(\ln \halpha_i)^2] = (2b_i)^{-1} \int^{\ln(1+b_i)}_{\ln(1-b_i)} x^2e^{x} dx $$
To get an upper bound for this term, we use the following inequalities: $\ln(1+b_\alpha) \leq 2b_\alpha$, $\ln(1-b_\alpha) \geq -2b_\alpha$,
$e^{x} \leq e$. That gives $\EX[(\ln \halpha_i)^2] \leq e(2b_i)^{-1} \int^{2b_i}_{-2b_i} x^2dx$ $=O(b^2_i)$.
For a lower bound, we use: $\ln(1+b_\alpha) \geq b_\alpha/2$, $\ln(1-b_\alpha) \leq -b_\alpha/2$,
$e^{x} \geq 1/e$. Hence, $\EX[(\ln \halpha_i)^2] \geq (2eb_i)^{-1} \int^{b_i/2}_{-b_i/2} x^2dx$ $ = \Omega(b^2_i)$.
Note that $(\EX[\ln \halpha_i])^2 \leq b^4_i$, which is much small than $b^2_i$ for sufficiently small $b_i$.
We conclude that $\EX[X^2_i] = \Theta(b^2_i) = \Theta(c^2/\ell)$.

For the final bound, we use a trivial estimate. We have $\EX[|X_i|^3] \leq \max(|X_i|)\EX[X^2_i] \leq 2b_i\EX[X^2_i]$.
\end{proof}

We now state the Berry-Esseen Theorem \cite{Ber41,Ess42}, a crucial ingredient of our proof. 
This theorem bounds the convergence rate of a sum of independent random variables to a
Gaussian.

\begin{theorem} \label{thm:be} [Berry-Esseen] Let $X_1, X_2, \ldots, X_\ell$ be independent
random variables with $\EX[X_i] = 0$, $\EX[X^2_i] = \xi^2_i$, and $\EX[|X_i|^3] = \iota_i < \infty$.
Let $S$ be the sum $\sum_i X_i / \sqrt{\sum_i \xi^2_i}$. Let $F(x)$ denote the cumulative distribution function (cdf) of $S$
and $\Phi(x)$ be the cdf of the standard normal (the pdf is $(2\pi)^{-1/2} e^{-x^2/2}$). 
Then, for an absolute constant $C_1 > 0$,
$$ \sup_x |F(x) - \Phi(x)| \leq C_1 \Big(\sum_i \xi^2_i\Big)^{-3/2} \sum_i \iota_i. $$
\end{theorem}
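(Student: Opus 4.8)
The plan is to prove this by the classical Fourier-analytic route: pass to characteristic functions, compare the characteristic function of $S$ against $e^{-t^2/2}$, and convert that comparison back into a bound on the cdfs through a smoothing inequality. Write $s = (\sum_i \xi_i^2)^{1/2}$, so that $S = s^{-1}\sum_i X_i$, and set $\Lambda = s^{-3}\sum_i \iota_i$, which is exactly the right-hand side target (up to the constant $C_1$). Since $\sup_x|F(x) - \Phi(x)| \le 1$ always holds, I may assume throughout that $\Lambda$ is smaller than any fixed constant of my choosing; otherwise the claim follows by taking $C_1$ large. I would also record the elementary but crucial consequence of Lyapunov's inequality $\xi_i^3 \le \iota_i$: for every $i$, $\xi_i/s \le \Lambda^{1/3}$, so the normalized summands have uniformly small standard deviations once $\Lambda$ is small. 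The reduction step is Esseen's smoothing lemma: for any cdf $F$, any distribution $G$ with bounded density $\|G'\|_\infty \le M$, and any $T > 0$,
\begin{displaymath}
\sup_x |F(x) - G(x)| \le \frac{1}{\pi}\int_{-T}^{T}\left|\frac{\widehat F(t) - \widehat G(t)}{t}\right|\,dt + \frac{24M}{\pi T}.
\end{displaymath}
Applied with $G = \Phi$ (so $M = (2\pi)^{-1/2}$), this reduces the theorem to estimating $|\widehat F(t) - e^{-t^2/2}|$ on $[-T,T]$ and then choosing the cutoff $T \asymp 1/\Lambda$, which is precisely the value that makes the tail term $24M/(\pi T)$ itself $O(\Lambda)$.

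The heart of the argument is the characteristic-function estimate. By independence $\widehat F(t) = \prod_i \phi_i(t/s)$ with $\phi_i(u) = \EX[e^{iuX_i}]$, and $e^{-t^2/2} = \prod_i e^{-\xi_i^2 t^2/(2s^2)}$. Two ingredients drive the estimate. First, a global modulus bound: from the per-factor Taylor estimate $|\phi_i(u)| \le 1 - \tfrac12\xi_i^2 u^2 + \tfrac16\iota_i|u|^3$ together with $1 + y \le e^y$, one gets $|\widehat F(t)| \le \exp(-t^2/2 + \Lambda|t|^3/6)$, and on the full range $|t| \le T \asymp 1/\Lambda$ the cubic correction is dominated by $t^2/12$, yielding $|\widehat F(t)| \le e^{-t^2/3}$; the same bound trivially holds for $e^{-t^2/2}$. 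Second, a telescoping identity $\widehat F(t) - e^{-t^2/2} = \sum_i (\prod_{k<i}\phi_k)(\phi_i(t/s) - e^{-\xi_i^2 t^2/(2s^2)})(\prod_{k>i} e^{-\xi_k^2 t^2/(2s^2)})$, into which I feed the per-factor difference $|\phi_i(t/s) - e^{-\xi_i^2 t^2/(2s^2)}| \le \tfrac16\iota_i|t|^3/s^3 + \tfrac18\xi_i^4 t^4/s^4$.

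I would then split $[-T,T]$ into an inner range $|t| \le A := \sqrt{6\ln(1/\Lambda)}$ and an outer range $A \le |t| \le T$. On the inner range the flanking products are each bounded by $(1+o(1))e^{-t^2/2}$ (here $\xi_i/s \le \Lambda^{1/3}$ makes $\xi_i^2 t^2/s^2 = o(1)$ uniformly over $|t| \le A$), and the same smallness lets me absorb the quartic per-factor term into the cubic one, since $\xi_i|t|/s \le \Lambda^{1/3}A = o(1)$; summing over $i$ and using $\sum_i\iota_i = \Lambda s^3$ gives $|\widehat F(t) - e^{-t^2/2}| \le O(\Lambda |t|^3)\, e^{-t^2/2}$, whose contribution to the smoothing integral is $O(\Lambda)$ after dividing by $|t|$ and integrating over $\Real$. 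On the outer range I discard the telescoping and simply use $|\widehat F(t) - e^{-t^2/2}| \le 2e^{-t^2/3}$; since $e^{-A^2/3} = \Lambda^2$, its contribution is $O(\Lambda^2\ln(1/\Lambda)) = o(\Lambda)$. Combining the two ranges with the tail term of the smoothing lemma completes the bound.

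The main obstacle is obtaining the sharp linear-in-$\Lambda$ rate rather than a weaker $\Lambda^{1/4}$ or $\Lambda^{2/3}$ rate: a naive telescoping that throws away the Gaussian decay, or a careless treatment of the quartic remainder $\xi_i^4 t^4$, degrades the exponent. The two delicate points are therefore (i) pushing the global modulus bound $|\widehat F(t)| \le e^{-t^2/3}$ all the way out to $|t| \asymp 1/\Lambda$ using only the aggregate third-moment budget $\sum_i\iota_i$, with \emph{no} per-factor range restriction, and (ii) choosing the inner cutoff $A = \Theta(\sqrt{\ln(1/\Lambda)})$ so that, thanks to $\max_i \xi_i/s \le \Lambda^{1/3}$, the quartic term is swallowed by the cubic term precisely on the range where the telescoping estimate is applied. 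Once these two estimates are in place, tracking the absolute constant $C_1$ through the smoothing lemma is routine.
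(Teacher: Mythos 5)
First, note that the paper offers no proof of this statement to compare against: it is the classical Berry--Esseen theorem, cited to Berry (1941) and Esseen (1942) and used as a black box (in the proof of \Clm{prob-int}). Your proposal is therefore a reconstruction of the classical Fourier-analytic argument, and its skeleton---Esseen's smoothing inequality with cutoff $T \asymp 1/\Lambda$, telescoping the product of characteristic functions, the inner/outer split at $A = \Theta(\sqrt{\ln(1/\Lambda)})$, and absorbing the quartic remainder via $\sum_i \xi_i^4/s^4 \le (\max_i \xi_i/s)\sum_i \xi_i^3/s^3 \le \Lambda^{4/3}$---is exactly right. However, there is a genuine gap at the step you yourself flag as delicate point (i). The per-factor inequality $|\phi_i(u)| \le 1 - \tfrac12 \xi_i^2 u^2 + \tfrac16 \iota_i|u|^3$ is false once $\xi_i^2u^2 > 2$: Taylor's theorem only gives $|\phi_i(u) - (1 - \tfrac12\xi_i^2u^2)| \le \tfrac16\iota_i|u|^3$, hence $|\phi_i(u)| \le |1-\tfrac12\xi_i^2u^2| + \tfrac16\iota_i|u|^3$, and the absolute value cannot be dropped. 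Concretely, for a Rademacher variable ($\phi(u)=\cos u$, $\xi=\iota=1$) at $u=2$ your bound asserts $|\cos 2| \le 1/3$, yet $|\cos 2| \approx 0.42$. This is fatal precisely on your outer range: Lyapunov only gives $\max_i \xi_i/s \le \Lambda^{1/3}$, so when the variances are unequal a single factor at $|t| \asymp 1/\Lambda$ is evaluated at an argument with $\xi_i|t|/s$ as large as $\Lambda^{-2/3} \to \infty$, where the bound you invoke is unavailable; hence the global estimate $|\widehat F(t)| \le e^{-t^2/3}$ on all of $|t| \le T$, on which your outer-range contribution relies, does not follow as written. (The inner range is unaffected, since there $\xi_i|t|/s \le \Lambda^{1/3}A = o(1)$.)

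The gap is repairable by the standard symmetrization trick, which is the missing idea. Since $|\phi_i(u)|^2$ is the characteristic function of $X_i - X_i'$ (with $X_i'$ an independent copy), it is real, and the inequality $\cos x \le 1 - x^2/2 + |x|^3/6$, valid for \emph{all} real $x$, gives $|\phi_i(u)|^2 = \EX[\cos(u(X_i - X_i'))] \le 1 - \xi_i^2u^2 + \tfrac43 \iota_i |u|^3 \le \exp(-\xi_i^2u^2 + \tfrac43\iota_i|u|^3)$, using $\EX[(X_i-X_i')^2] = 2\xi_i^2$ and $\EX[|X_i-X_i'|^3] \le 8\iota_i$ (Minkowski). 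Taking square roots and multiplying over $i$ yields $|\widehat F(t)| \le \exp(-t^2/2 + \tfrac23\Lambda|t|^3)$ with no range restriction at all, whence $|\widehat F(t)| \le e^{-t^2/3}$ for $|t| \le 1/(4\Lambda)$; taking $T = 1/(4\Lambda)$ still keeps the smoothing tail term $O(\Lambda)$. With this substitution, and the attendant tracking of constants, the rest of your argument closes as you describe.
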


\begin{proof} (of \Clm{prob-int}) We set $X = \sum_i X_i$ $= (\ln \rho_v - \EX[\ln \rho_v])/\sqrt{\sum_i \EX[X^2_i]}$. 
By \Clm{basic-xi},
$|\EX[\ln \rho_v]| = |\sum_i \EX[\ln \halpha_i]|$ $\leq \sum_i |\EX[\ln \halpha_i]| = O(c^2)$ and $\sum_i \EX[X^2_i] = \Theta(c^2)$. 
Note that $X$ is just an increasing linear function of $\ln \rho_v$. 
Set function $r(x) = (x - \EX[\ln \rho_v])/\sqrt{\sum_i \EX[X^2_i]}$,
so $X = r(\ln \rho_v)$.
For any interval $I = [x_1, x_2]$,
$\Pr[\ln \rho_v \in I] = \Pr[X \in [r(x_1),r(x_2)]]$. 
Since $|r(\htau)|$ is some
constant function of $c$, we can find a constant $C$ such the $r(C)$
is strictly larger than $|r(\htau)|$.
Setting $y_1 = r(\htau)$, $y_2 = r(C)$ and using the notation from \Thm{be},

\begin{eqnarray*} \Pr[X \in [y_1, y_2]] & = & F(y_2) - F(y_1) = \Phi(y_2) - \Phi(y_1) + (F(y_2) - \Phi(y_2)) + (\Phi(y_1) - F(y_1)) \\
& \geq & \Phi(y_2) - \Phi(y_1) - |F(y_2) - \Phi(y_2)| - |F(y_1) - \Phi(y_1)|.
\end{eqnarray*}
Since $y_1 < y_2$ and are constant functions of $c$,
$\Phi(y_2) - \Phi(y_1) \geq \nuthr(c)$. By the Berry-Esseen theorem (\Thm{be}), $|F(x_2) - \Phi(x_2)| + |F(x_1) - \Phi(x_1)|
\leq 2C_1(\sum_i \xi^2_i)^{-3/2}\sum_i \iota_i$. By \Clm{basic-xi} $\iota_i = O(c\xi^2_i/\sqrt{\ell})$
and $\sum_i \xi^2_i = \Theta(c^2)$. So the Berry-Esseen bound is at most $2C_1c(\sum_i \ell\xi^2_i)^{-1/2} = O(1/\sqrt{\ell})$. By setting $C$ to be a large enough constant, 
we can ensure that $\Phi(y_2) - \Phi(y_1) > 2C_1c(\sum_i \ell\xi^2_i)^{-1/2}$.

We deduce that $\Pr[X \in [x_1, x_2]] \geq \nutwo(c)$, for some positive function $\nutwo$. A similar proof holds for $[-C,-\htau]$.
\end{proof} 

\subsubsection{Proving \Clm{min}} 
We state some technical
definitions and results about convolutions of unimodal functions.

\begin{definition} A pdf $f(x)$ is \emph{unimodal} if there exists an $a \in \Real$ such
that $f$ is non-decreasing on $(-\infty,a)$ and non-increasing on $(a,\infty)$.

A pdf $f(x)$ is \emph{log-concave} if $Q := \{x: f(x) > 0\}$ is an interval and 
$\ln f(x)$ is a concave function (on the interval $Q$).
\end{definition}

A theorem of Ibragimov \cite{Ibr56} gives some convolution properties of unimodal log-concave
functions. 

\begin{theorem} \label{thm:ibra} [Ibragimov] Let $f(x)$ be a unimodal log-concave pdf and $g(x)$ be
a unimodal pdf. The convolution $f * g$ is also unimodal.
\end{theorem}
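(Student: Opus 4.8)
The plan is to prove unimodality of $h := f * g$ by showing that its derivative changes sign at most once, from positive to negative. Write $m_g$ for the mode of $g$, normalized so that $g$ is non-decreasing on $(-\infty,m_g]$ and non-increasing on $[m_g,\infty)$; then the distributional derivative splits as $dg = \gamma_+ - \gamma_-$, where $\gamma_+$ is a positive measure supported on $(-\infty,m_g]$ and $\gamma_-$ a positive measure supported on $[m_g,\infty)$. Differentiating under the integral and integrating by parts (the boundary terms vanish because $f,g$ are densities) gives
\[
h'(x) = \int f(x-y)\,dg(y) = A(x) - B(x),
\]
where $A(x) = \int_{y\le m_g} f(x-y)\,\gamma_+(dy) \ge 0$ and $B(x) = \int_{y\ge m_g} f(x-y)\,\gamma_-(dy) \ge 0$. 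Thus $h'(x) \le 0$ exactly when $A(x) \le B(x)$, and it suffices to prove that the ratio $A(x)/B(x)$ is non-increasing in $x$: then $\{x : h'(x) > 0\}$ is a left ray, so $h$ increases and then decreases, which (together with $h\ge 0$ and $h\to 0$ at $\pm\infty$) is exactly unimodality.

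The monotonicity of $A/B$ is where log-concavity of $f$ enters, and it reduces to a single pointwise inequality. For $t>0$, establishing $A(x+t)B(x) \le A(x)B(x+t)$ amounts, after expanding both sides as double integrals against the product measure $\gamma_+ \times \gamma_-$, to verifying for every pair $y_1 \le m_g \le y_2$ that
\[
f(x+t-y_1)\,f(x-y_2) \;\le\; f(x-y_1)\,f(x+t-y_2).
\]
Setting $a = x-y_1$ and $c = x-y_2$ we have $a \ge c$ (this is exactly where unimodality of $g$ is used: it forces $\gamma_+$ and $\gamma_-$ onto opposite sides of $m_g$, so the arguments coming from $\gamma_+$ dominate those from $\gamma_-$), and the inequality becomes $f(a+t)f(c) \le f(a)f(c+t)$ for $a \ge c$, $t>0$. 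This is precisely the statement that $u \mapsto f(u+t)/f(u)$ is non-increasing, i.e.\ that $\ln f$ is concave --- our hypothesis that $f$ is log-concave. Integrating this pointwise inequality against the positive product measure yields $A(x+t)B(x) \le A(x)B(x+t)$, completing the argument.

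The main conceptual point --- and the reason the theorem is asymmetric in its hypotheses --- is that log-concavity of $f$ is exactly what the pointwise inequality needs, while unimodality of $g$ is exactly what orders the supports of $\gamma_+,\gamma_-$ so that $a \ge c$ always holds; convolution does not preserve unimodality without such a hypothesis. The technical obstacles I expect to handle carefully are: (i) the zeros of $f$ --- since $f$ is positive only on an interval, the ratio form of log-concavity must be replaced by the equivalent four-point inequality $f(a+t)f(c) \le f(a)f(c+t)$, which I would verify directly from concavity of $\ln f$ by observing that the pairs $\{c,a+t\}$ and $\{c+t,a\}$ share the midpoint $(a+c+t)/2$ while the former is the wider pair, so the concave $\ln f$ gives a smaller sum on it; and (ii) the possible lack of smoothness of $f$ or $g$, which I would handle by reading $dg$ as the signed Stieltjes measure above, or equivalently by mollifying both densities with a narrow Gaussian, proving unimodality for the smooth approximants, and passing to the limit, using that a pointwise limit of unimodal densities is unimodal. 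The degenerate cases $A\equiv 0$ or $B\equiv 0$ (when $g$ is monotone), and isolated points where $B(x)=0$, are handled directly, since there $h'$ has a constant sign.
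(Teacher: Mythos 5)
The paper contains no proof of this statement to compare against: it is quoted as Ibragimov's theorem and used as a black box, with a citation to Ibragimov (1956). So your proposal must stand on its own, and on its merits it is essentially correct. What you have reconstructed is the classical total-positivity argument: your four-point inequality $f(a+t)f(c)\le f(a)f(c+t)$ for $a\ge c$, $t>0$ is precisely the statement that the kernel $K(x,y)=f(x-y)$ is totally positive of order $2$ (equivalently, that $f$ is a P\'olya frequency function of order $2$), and your monotone-ratio step for $A/B$ is the variation-diminishing argument showing that convolution with such a kernel cannot increase the number of sign changes of $dg$ --- which has exactly one change, from $+$ to $-$, precisely because $g$ is unimodal. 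The details check out: the shared-midpoint argument correctly derives the four-point inequality from concavity of $\ln f$ and handles the zeros of $f$ (if $f(a+t)f(c)>0$ then all four points lie in the support interval); and because you prove the cross-multiplied form $A(x+t)B(x)\le A(x)B(x+t)$, no division is ever needed, so the degenerate cases go through cleanly: if $h'(x_0)<0$ then $B(x_0)>0$, and the inequality forces $h'(x)\le 0$ for all $x>x_0$, which is unimodality.

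Two technical cautions on your regularization plan. First, the fallback of Gaussian-mollifying \emph{both} densities is subtly circular: asserting that $g * G_\epsilon$ is unimodal is itself an instance of the theorem being proved (with the Gaussian as the log-concave factor). The non-circular route is to mollify only $f$ --- log-concavity is preserved since a convolution of log-concave functions is log-concave (Pr\'ekopa--Leindler) --- while keeping $g$ as a Stieltjes measure as in your primary route, or to approximate $g$ by smooth unimodal functions built by smoothing each monotone piece separately rather than by convolution. Second, a unimodal density may be unbounded at its mode, in which case $\gamma_+$ and $\gamma_-$ have infinite mass near $m_g$ and $A$, $B$ can both be infinite; this is repaired by truncating $g$ at height $M$ (truncation preserves unimodality), running the argument, and passing to the limit, using that quasiconcavity --- equivalently unimodality --- survives pointwise limits. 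Neither issue affects the core of the argument.
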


\begin{claim} \label{clm:uni} The pdf $f_v(x)$ is unimodal.
\end{claim}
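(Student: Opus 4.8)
The plan is to express $f_v$ as a convolution of the densities of the individual summands and then invoke Ibragimov's theorem repeatedly. Recall from \Sec{rhov} that $\ln \rho_v = \sum_i \ln \halpha_i$ is a sum of $\ell$ independent random variables, so its density $f_v$ is the convolution $f_v = h_1 * h_2 * \cdots * h_\ell$, where $h_i$ is the pdf of $\ln \halpha_i$ given in \Clm{basic-xi}. The key structural fact to exploit is that each factor in this convolution is log-concave, which is exactly the hypothesis \Thm{ibra} needs on one of its two arguments.

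First I would verify that each factor $h_i$ is unimodal and log-concave. By \Clm{basic-xi}, $h_i(x) = e^x/(2b_i)$ on the interval $Q_i = [\ln(1-b_i), \ln(1+b_i)]$ and is zero outside. Its support $Q_i$ is an interval, and on $Q_i$ we have $\ln h_i(x) = x - \ln(2b_i)$, which is affine and hence concave; thus $h_i$ is log-concave. Since $h_i$ is moreover monotone (increasing) on $Q_i$ and zero outside it, it is in particular unimodal, with mode at the right endpoint $a = \ln(1+b_i)$ (it is non-decreasing on $(-\infty,a)$ and non-increasing on $(a,\infty)$, matching the definition used here).

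Next I would run an induction on the number of factors. Let $g_1 = h_1$ and $g_{k+1} = g_k * h_{k+1}$, so that $g_\ell = f_v$. The base case $g_1 = h_1$ is unimodal by the previous step. For the inductive step, assume $g_k$ is unimodal. Since convolution is commutative, $g_{k+1} = h_{k+1} * g_k$, where $h_{k+1}$ is a unimodal \emph{log-concave} pdf and $g_k$ is unimodal; \Thm{ibra} then gives that $g_{k+1}$ is unimodal. After $\ell-1$ applications we conclude that $f_v = g_\ell$ is unimodal.

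The argument is essentially bookkeeping, and there is no real obstacle: the one point that makes it work is that \emph{every} factor $h_i$ is log-concave, so at each stage we can supply the freshly convolved $h_{k+1}$ in the role of the log-concave argument required by Ibragimov, while only unimodality is ever needed of the accumulated partial convolution $g_k$. The only mild detail to confirm is that log-concavity of $h_i$ implies the unimodality used here, which is immediate since $\ln h_i$ affine forces $h_i$ to be monotone on its interval of support.
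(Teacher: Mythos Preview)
Your proof is correct and follows essentially the same approach as the paper: verify that each $h_i$ is unimodal and log-concave (via $\ln h_i(x)=x-\ln(2b_i)$ being affine on its interval support), then apply Ibragimov's theorem repeatedly to the convolution $h_1*\cdots*h_\ell$. Your write-up is simply more explicit about the induction, spelling out that at each step the fresh factor $h_{k+1}$ supplies the log-concave hypothesis while only unimodality is required of the partial convolution $g_k$.
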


\begin{proof} We have $\ln \rho_v = \sum_i \ln \halpha_i$.
By \Clm{basic-xi}, the pdf of $\ln \halpha_i$ is $h_i(x) = e^x/2b_i$. 
Note that $h_i(x)$ is unimodal. Furthermore, $\ln h_i(x) = x - \ln 2b_i$,
which is concave. Since $\ln \rho_v$ is the sum of independent random variables, the pdf $f_v(x)$
is the convolution of the individual pdfs.
Repeated applications of Ibragimov's theorem (\Thm{ibra}) tells us that $f_v(x)$ is unimodal.
\end{proof}

\begin{proof} (of \Clm{min}) By the unimodality of $f_v$, $f_v$ is either non-decreasing, non-increasing,
or non-decreasing and then non-increasing in the interval $[x_1, x_2]$. Regardless of which case, for any $y \in [x_1, x_2]$,
$f(y) \geq \min(f(x_1),f(x_2))$.
\end{proof}

\subsection{Basic claims for NSKG} \label{sec:nskg-basic}

We now reprove some of the basic claims for NSKG. Note that
when we look at $\EX[X_d]$, the expectation is over both the randomness in $T$
and the edge insertions. We use $\bT$ to denote the set of matrices $T_1, T_2, \ldots, T_\ell$.
Conditioning on $\bT$ simply means conditioning on a fixed choice of the noise.

\begin{claim} \label{clm:nprob-pr} Let vertex $v \in S_r$. Choose the noise for NSKG
at random, and let $\hat p_v$ be the probability (conditioned on $\bT$) that a single edge insertion
produces an out-edge at $v$. (Note that $\hat p_v$ is itself a random variable, where the dependence
on $\bT$ is given by $\rho_v$.)
\begin{displaymath}
  \hat p_v = \frac{(1-4\sigma^2)^{\ell/2} \tau^r\rho_v}{n}.
\end{displaymath}
\end{claim}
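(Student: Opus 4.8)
The plan is to run the level-by-level argument of \Clm{prob-pr} essentially verbatim, the only change being that the skew now varies with the level. The key point is that once we condition on the noise $\bT$, each level of the edge-insertion process is again an independent choice, so the probability of landing at $v$ factors as a product over levels.

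Concretely, I would fix $v \in S_r$ with bit representation $(z_1,\ldots,z_\ell)$ and condition on $\bT = (T_1,\ldots,T_\ell)$. At level $i$ the insertion must fall into the half of the current submatrix indicated by $z_i$. Since $T_i$ has top-row sum $1/2 + \sigma_i$ and bottom-row sum $1/2 - \sigma_i$, this happens with probability $1/2 + \sigma_i$ when $z_i = 0$ and $1/2 - \sigma_i$ when $z_i = 1$. Because the levels are independent given $\bT$, multiplying over all $\ell$ levels gives
\begin{displaymath}
\hat p_v = \prod_{i: z_i = 0}\left(\tfrac12 + \sigma_i\right)\prod_{i: z_i = 1}\left(\tfrac12 - \sigma_i\right).
\end{displaymath}

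The remaining step is to separate the noise from the noiseless contribution. By the definitions in \Sec{app-noisy}, we have $\tfrac12 + \sigma_i = \alpha_i(\tfrac12 + \sigma)$ and $\tfrac12 - \sigma_i = \beta_i(\tfrac12 - \sigma)$. Substituting and regrouping, the $\alpha_i$ and $\beta_i$ factors collect into exactly $\rho_v = \prod_{i:z_i=0}\alpha_i\prod_{i:z_i=1}\beta_i$, while the residual factors reassemble the noiseless product $(\tfrac12+\sigma)^{\ell/2+r}(\tfrac12-\sigma)^{\ell/2-r}$, using that $v$ has $\ell/2+r$ zeros and $\ell/2-r$ ones. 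By the computation already carried out in \Clm{prob-pr}, this residual product equals $p_r = (1-4\sigma^2)^{\ell/2}\tau^r/n$. Hence $\hat p_v = \rho_v\, p_r = (1-4\sigma^2)^{\ell/2}\tau^r\rho_v/n$, as claimed.

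There is no genuine obstacle here: the statement is a direct adaptation of \Clm{prob-pr}, and the definitions of $\alpha_i$, $\beta_i$, and $\rho_v$ were chosen precisely so that the noise factors out cleanly as a single multiplicative bias $\rho_v$. The only thing to be careful about is the conditioning—one must record that the factorization over levels is valid only after fixing $\bT$, which is exactly why $\hat p_v$ is itself a random variable (through its dependence on $\rho_v$) rather than a deterministic quantity.
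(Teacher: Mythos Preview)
Your proposal is correct and follows essentially the same approach as the paper: both argue level by level, factor the noisy row sums as $\tfrac12+\sigma_i=\alpha_i(\tfrac12+\sigma)$ and $\tfrac12-\sigma_i=\beta_i(\tfrac12-\sigma)$, collect the $\alpha_i,\beta_i$ into $\rho_v$, and then invoke the computation from \Clm{prob-pr} for the residual noiseless product.
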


\begin{myproof} This is identical to the proof of \Clm{prob-pr}. Consider a single edge insertion. 
For an edge insertion to be incident to $v$, the edge must go into the half
corresponding to the binary representation of $v$. If the $i$th bit of $v$ is $0$,
then the edge should drop in the top half at this level, and this happens with probability $(1/2+\sigma_i)$. 
On the other hand, if this bit is $1$, then the edge should drop in the bottom half,
which happens with probability $(1/2 - \sigma_i)$.
Let the bit representation of $v$ be $(z_1,z_2,\ldots,z_\ell)$. Then,
\begin{eqnarray*}
\hat p_v  & = & \prod_{i: z_i = 0} \left(\frac{1}{2} + \sigma_i \right) \prod_{i: z_i = 1} \left(\frac{1}{2} - \sigma_i \right)
= \prod_{i: z_i = 0} \alpha_i\left(\frac{1}{2} + \sigma \right) \prod_{i: z_i = 1} \beta_i\left(\frac{1}{2} - \sigma \right) \\
& = & \rho_v \left(\frac{1}{2} + \sigma\right)^{\ell/2+r}\left(\frac{1}{2} - \sigma\right)^{\ell/2-r} 
= \frac{\rho_v (1-4\sigma^2)^{\ell/2}}{2^\ell} \cdot \left(\frac{1/2+\sigma}{1/2-\sigma}\right)^r 
= \frac{(1-4\sigma^2)^{\ell/2} \tau^r \rho_v}{n}.
\end{eqnarray*}
\qquad\myproofend
\end{myproof}

As before, we will assume that $\hat p_v = o(1/\sqrt{m})$ and $d = o(\sqrt{n})$. Even though
$\hat p_v$ is a random variable, the probability that it is larger than $1/\sqrt{m}$ can be 
neglected. (This was discussed in more detail before \Lem{prob-d}). We stress that in the following,
the probability that $v$ has outdegree $d$ is itself a random variable. 

\begin{claim} \label{clm:nprob-d} Let $v$ be a vertex in slice $r$, $d = o(\sqrt{n})$, and $\hat p_v = o(1/\sqrt{m})$. 
Then for NSKG, we have
\begin{displaymath}
  \pr[\odeg{v} = d | \bT] = (1 \pm o(1))
   \frac{(\lambda_v)^d}{d!} \cdot \frac{(\tau^r)^d}{\exp(\lambda_v \tau^r)} 
\end{displaymath}
\end{claim}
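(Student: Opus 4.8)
The plan is to mirror the proof of \Lem{prob-d} exactly, treating the noise as already fixed. Conditioned on $\bT$, the noise is frozen, so each of the $m$ edge insertions is incident to $v$ independently with probability $\hat p_v$, and the outdegree of $v$ is a Binomial$(m,\hat p_v)$ random variable. Thus $\pr[\odeg{v}=d\mid \bT] = {m\choose d}\hat p_v^d (1-\hat p_v)^{m-d}$, and the entire calculation reduces to the one already carried out in \Lem{prob-d}, with $\hat p_v$ playing the role of $p_r$. The only substantive difference is that \Clm{nprob-pr} gives $\hat p_v = (1-4\sigma^2)^{\ell/2}\tau^r\rho_v/n$, which differs from the noiseless $p_r$ precisely by the extra factor $\rho_v$.

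The key steps, in order, are as follows. First I would invoke the assumption $d = o(\sqrt n)$ to replace ${m\choose d}$ by $(1\pm o(1)) m^d/d!$, exactly as in \Lem{prob-d}. Next, using the hypothesis $\hat p_v = o(1/\sqrt m)$ (which, as noted in the text before this claim, holds except on a negligible-probability event that we may discard), I would apply the Taylor approximation $(1-\hat p_v)^{m-d} = (1\pm o(1))\exp(-\hat p_v(m-d))$. Substituting the expression for $\hat p_v$ from \Clm{nprob-pr} and collecting terms, the $m^d/d!$ combines with $\hat p_v^d$ to produce $(\Delta(1-4\sigma^2)^{\ell/2}\rho_v)^d \tau^{rd}/d! = \lambda_v^d \tau^{rd}/d!$, where I use $\lambda_v = \lambda\rho_v$ and $\lambda = \Delta(1-4\sigma^2)^{\ell/2}$. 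The exponential factor becomes $\exp(-\hat p_v m) = \exp(-\lambda_v\tau^r)$ up to the stray factor $\exp(d\hat p_v)$, and since $d\hat p_v = o(1)$ under our assumptions this absorbs into the $(1\pm o(1))$. This yields exactly the claimed formula $(1\pm o(1))\,(\lambda_v)^d\tau^{rd}/(d!\exp(\lambda_v\tau^r))$.

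There is genuinely no hard part here, and that is by design: the claim is stated as a conditional probability precisely so that the randomness from the noise is separated out. Once we condition on $\bT$, the bias $\rho_v$ is a fixed constant and the analysis is identical in structure to the noiseless \Lem{prob-d}. The one point requiring a sentence of care is the status of the assumption $\hat p_v = o(1/\sqrt m)$: unlike the deterministic $p_r$, the quantity $\hat p_v$ is a random variable, so strictly we are asserting the formula on the event that $\hat p_v$ is small. As the text observes just before the claim, the complementary event carries negligible probability (the same union-bound argument as \Clm{large-d} applies, since $\rho_v$ perturbs $p_r$ by only a bounded multiplicative factor), so it can be safely ignored. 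The real mathematical content—showing that averaging this conditional expression over the distribution of $\rho_v$ smooths the oscillations—is deferred to the subsequent lemmas and does not belong to this claim.
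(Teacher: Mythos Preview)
Your proposal is correct and follows essentially the same approach as the paper: both explicitly mirror the proof of \Lem{prob-d}, approximating ${m\choose d}$ by $m^d/d!$ and $(1-\hat p_v)^{m-d}$ by $e^{-\hat p_v m}$ using the hypotheses $d=o(\sqrt n)$ and $\hat p_v = o(1/\sqrt m)$, then substituting the expression for $\hat p_v$ from \Clm{nprob-pr} and collecting terms into $\lambda_v$ and $\tau^r$. Your added remark about the random nature of the assumption on $\hat p_v$ matches the paper's comment preceding the claim.
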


\begin{myproof} We follow the proof of \Lem{prob-d}. 
We approximate ${m\choose d}$ by $m^d/d!$ and 
$(1-x)^{m-d}$ by $e^{-xm}$, for $x = o(1/\sqrt{m})$ and $d = o(\sqrt{n})$.
This approximation is performed in the first step below.
We remind the reader that $\lambda_v = \lambda \rho_v$.
By \Clm{nprob-pr} and the above approximations,
\begin{align*}
{m\choose d} \hat p_v^d (1-\hat p_v)^{m-d} 
& = {m\choose d} \left(\frac{(1-4\sigma^2)^{\ell/2} \tau^r\rho_v}{n}\right)^d \left(1 - \frac{(1-4\sigma^2)^{\ell/2} \tau^r\rho_v}{n}\right)^{m-d} \\
& = (1\pm o(1)) \frac{m^d}{d!} \cdot \left(\frac{(1-4\sigma^2)^{\ell/2} \tau^r\rho_v}{n}\right)^d \\
& \phantom{\geq} \cdot \exp\left(- \frac{(1-4\sigma^2)^{\ell/2} \tau^r\rho_vm}{n} \right) \\
& = (1 \pm o(1)) \frac{[\Delta (1-4\sigma^2)^{\ell/2}\rho_v]^d \tau^{rd}}{d!} \exp(- \Delta (1-4\sigma^2)^{\ell/2}\rho_v \tau^r) \\
& = (1 \pm o(1))  \frac{(\lambda\rho_v)^d}{d!} \cdot \frac{(\tau^r)^d}{\exp(\lambda \rho_v \tau^r)}. \qquad\myproofend
\end{align*}
\end{myproof}

\subsection{Bounds for degree distribution} \label{sec:formula-nskg}

We complete the proof of \Thm{lognormal}.
We break it down into some smaller claims. By and large,
the flow of the proof is similar to that for the standard SKG. The main
difference comes because the probabilities discussed in \Clm{nprob-d} are
random variables depending on the noise. The following claim is fairly
straightforward, given the previous analysis of standard SKG. This
is where we apply the Taylor approximations to show the Gaussian behavior
depicted in \Fig{gaussian}.

\begin{claim} \label{clm:sum} Consider some setting of the NSKG noise. Define $g_v(r) = r\ln \tau - \ln (d/\lambda_v)$. 
The expected number of vertices of degree $d$ conditioned on $\bT$ is
$$ \EX[X_d | \bT] =  \frac{1 \pm o(1)}{\sqrt{2\pi d}} \sum_{r = -\ell/2}^{\ell/2} \sum_{v \in S_r} \exp\left[- dg_v(r)^2/2\right] $$
\end{claim}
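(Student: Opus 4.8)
The plan is to mirror the computation in \Clm{def-g}, but now working conditionally on a fixed choice of noise $\bT$, so that the per-vertex probability from \Clm{nprob-d} plays the role that \Lem{prob-d} played in the noiseless case. First I would fix $\bT$ and apply linearity of expectation over all vertices, grouping them by slice. Since $\pr[\odeg{v}=d \mid \bT]$ depends on $v$ only through $\rho_v$ (equivalently through $\lambda_v = \lambda\rho_v$), and since vertices in the same slice $r$ can have \emph{different} biases under a fixed noise instantiation, I cannot simply multiply by $|S_r| = {\ell \choose \ell/2+r}$ as before; instead the sum must remain an inner sum $\sum_{v \in S_r}$ over the vertices of each slice. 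This is the one structural difference from \Clm{def-g} and the reason the statement carries a double sum. Thus from \Clm{nprob-d},
\begin{displaymath}
\EX[X_d \mid \bT] = (1\pm o(1)) \sum_{r=-\ell/2}^{\ell/2} \sum_{v \in S_r} \frac{\lambda_v^d}{d!}\cdot\frac{(\tau^r)^d}{\exp(\lambda_v\tau^r)}.
\end{displaymath}

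Next I would apply Stirling's approximation to the factorial, $d! = (1\pm o(1))\sqrt{2\pi d}\,(d/e)^d$, pulling out the prefactor $1/\sqrt{2\pi d}$ and the factor $(e/d)^d$ exactly as in the third line of \Clm{def-g}. Then I would isolate the per-vertex term
\begin{displaymath}
\left(\frac{e}{d}\right)^d \lambda_v^d \frac{(\tau^r)^d}{\exp(\lambda_v\tau^r)} = \exp\left(d + d\ln\lambda_v + rd\ln\tau - d\ln d - \lambda_v\tau^r\right),
\end{displaymath}
and rewrite the exponent, using $g_v(r) = r\ln\tau - \ln(d/\lambda_v)$ together with $\lambda_v\tau^r = d\,e^{g_v(r)}$, as $d\bigl(1 + g_v(r) - e^{g_v(r)}\bigr)$. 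This is verbatim the algebra of \Clm{def-g} with $\lambda$ replaced by $\lambda_v$ and $g$ by $g_v$, so it introduces nothing new.

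The final step is the Taylor approximation $1 + g_v(r) - e^{g_v(r)} = -g_v(r)^2/2 + \Theta(g_v(r)^3)$, valid when $|g_v(r)|<1$, which converts each surviving term into the Gaussian form $\exp(-d\,g_v(r)^2/2)$ and yields the claimed expression. To justify dropping the cubic correction and the terms with $|g_v(r)| \geq 1$, I would invoke the same reasoning as \Clm{most} and \Clm{rest}: the large-$|g_v(r)|$ terms contribute an exponentially small total (using $d > (e\ln 2)\ell$ and the binomial sum bound), and the cubic term contributes only a $(1\pm o(1))$ factor since $d\,g_v(r)^3 = o(d\,g_v(r)^2)$ in the relevant range. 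The one point requiring care---and the main obstacle---is that $\rho_v$, and hence $g_v(r)$ and $\lambda_v$, is now a random variable, so I must verify that these error estimates hold \emph{uniformly} over the support of the noise (in particular that $\hat p_v = o(1/\sqrt m)$ with overwhelming probability, as noted before \Clm{nprob-d}, so that the approximations of \Clm{nprob-d} remain valid). Once uniformity is in hand, the derivation is a direct transcription of the noiseless computation, and the double sum over slices and vertices follows immediately.
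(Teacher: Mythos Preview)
Your proposal is correct and follows essentially the same approach as the paper: apply \Clm{nprob-d} and linearity of expectation (keeping the inner sum over $v \in S_r$ since biases differ within a slice), use Stirling, rewrite the exponent as $d(1+g_v(r)-e^{g_v(r)})$, then Taylor-expand for $|g_v(r)|<1$ and dispose of the $|g_v(r)|\geq 1$ terms via the argument of \Clm{most}. Your added remark about checking uniformity of the error estimates over the noise support is a valid point of care that the paper leaves implicit.
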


\begin{proof} By fixing some $\bT$, the $\lambda_v$s are fixed.
We use \Clm{nprob-d}, linearity of expectation, and Stirling's approximation in the following.
\begin{eqnarray*} \label{eq:xd-n} 
\EX_G[X_d] & = & \sum_{r = -\ell/2}^{\ell/2} \sum_{v \in S_r} (1 \pm o(1))\frac{{\lambda^d_v}}{d!} 
  \frac{(\tau^r)^d}{\exp(\lambda_v \tau^r)} \\
& = &  \frac{1 \pm o(1)}{\sqrt{2\pi d}} \sum_{r = -\ell/2}^{\ell/2} \sum_{v \in S_r} \left(\frac{e\lambda_v}{d}\right)^d 
\frac{(\tau^r)^d}{\exp(\lambda_v \tau^r)} 
\end{eqnarray*}
Choose a $v \in S_r$. 
\begin{displaymath} 
\left(\frac{e\lambda_v}{d}\right)^d  \frac{(\tau^r)^d}{\exp(\lambda_v \tau^r)} 
= \exp(d+d\ln \lambda_v + rd\ln \tau - d\ln d - \lambda_v \tau^r).
\end{displaymath}
Define $f_v(r) = rd\ln \tau - \lambda_v \tau^r - d\ln d + d\ln \lambda_v + d$,
where $r$ is an integer. 
We have $r = (\ln d - \ln \lambda_v + g_v(r))/\ln \tau$. 
\begin{eqnarray*}
f_v(r) & = & d\ln d - d\ln \lambda_v + dg_v(r) - e^{g_v(r)} d - d\ln d + d\ln \lambda_v + d \\
& = & d(1+g_v(r) - e^{g_v(r)}). 
\end{eqnarray*}
If $|g_v(r)| < 1$, then we can approximate $f_v(r) = -d[g_v(r)^2/2 + \Theta(g_v(r)^3)]$, and
get $\exp(f_v(r)) = (1 \pm o(1)) \exp(-dg_v(r)^2/2)$. This is analogous to the beginning of the proof of \Clm{rest}.
Suppose $|g_v(r)| \geq 1$. Then, arguing as in the proof of \Clm{most}, we deduce that $\exp(f_v(r)) \leq 2^{-\ell}$.
The sum of all these terms over $v$ is just a lower order term. So, we can substitute this by $\exp(-dg_v(r)^2/2)$.
Hence, we can bound
$$ \EX[X_d | \bT] =  \frac{1 \pm o(1)}{\sqrt{2\pi d}} \sum_{r = -\ell/2}^{\ell/2} \sum_{v \in S_r} \exp\left[- dg_v(r)^2/2\right] \qquad \myproofend $$
\end{proof}

We now reach the main challenge of this proof. The quantity $\EX[\exp(- dg_v(r)^2/2)]$
is evaluated by averaging over all noise. 
Note that the actual graph has no effect on this quantity.

\begin{lemma} \label{lem:norm} Consider $r = \Gamma_d = \nint{\theta_d}$.  
$$ \EX[\exp(- dg_v(r)^2/2)] 
\geq \frac{\nufour(c)}{\sqrt{d}} $$
\end{lemma}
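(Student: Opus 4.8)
The plan is to recognize that the expectation over the noise is exactly a smoothing (convolution) of the sharp Gaussian $\exp(-dg(r)^2/2)$ against the density of $\ln\rho_v$, and then to invoke the pointwise lower bound on that density from \Lem{rho-v}. First I would unfold $g_v(r)$. Since $\lambda_v = \lambda\rho_v$, we have $\ln(d/\lambda_v) = \ln(d/\lambda) - \ln\rho_v$, so $g_v(r) = g(r) + \ln\rho_v$, where $g(r) = r\ln\tau - \ln(d/\lambda)$ is the noiseless quantity from \Clm{def-g}. Evaluating at $r = \Gamma_d = \nint{\theta_d}$ gives $g(\Gamma_d) = (\Gamma_d - \theta_d)\ln\tau$, hence $|g(\Gamma_d)| = \gamma_d \ln\tau \leq (\ln\tau)/2$, using $\gamma_d \in [0,1/2]$. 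Writing $a := g(\Gamma_d)$ and letting $f_v$ be the pdf of $\ln\rho_v$, the expectation (over the noise) becomes, after the substitution $y = a + x$,
$$\EX[\exp(-dg_v(\Gamma_d)^2/2)] = \int_{-\infty}^{\infty} \exp\!\left(-\tfrac{d(a+x)^2}{2}\right) f_v(x)\,dx = \int_{-\infty}^{\infty} \exp\!\left(-\tfrac{dy^2}{2}\right) f_v(y - a)\,dy.$$

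Next I would restrict the integral to the window $|y| \leq \htau/2$, where the Gaussian carries the bulk of its mass. On this window, $|y - a| \leq |y| + |a| \leq \htau/2 + (\ln\tau)/2 \leq \htau$, where the last step uses $\ln\tau \leq \htau = \max(\ln\tau, 2)$; hence \Lem{rho-v} applies and $f_v(y-a) \geq \nuone(c)$ throughout. This yields, with $u = y\sqrt{d}$,
$$\EX[\exp(-dg_v(\Gamma_d)^2/2)] \geq \nuone(c)\int_{-\htau/2}^{\htau/2}\exp\!\left(-\tfrac{dy^2}{2}\right)dy = \frac{\nuone(c)}{\sqrt{d}}\int_{-\htau\sqrt{d}/2}^{\htau\sqrt{d}/2}e^{-u^2/2}\,du.$$
Since $d \geq 1$, the last integral dominates the fixed positive constant $\int_{-\htau/2}^{\htau/2} e^{-u^2/2}\,du$ (which depends only on $\tau$, hence on $T$). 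Therefore the right-hand side is at least $\nufour(c)/\sqrt{d}$ with $\nufour(c) := \nuone(c)\int_{-\htau/2}^{\htau/2}e^{-u^2/2}\,du$, a positive function of $c$, which is precisely the claim.

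The substance of this lemma lives entirely in \Lem{rho-v}: the role of the noise is to spread $\ln\rho_v$ so that its density is bounded below by a constant on a fixed-length interval around the origin, converting the pointwise-sharp Gaussian into an interval average of width $\Theta(1)$ that necessarily overlaps the relevant window. Given that density bound, what remains is an elementary truncated-Gaussian estimate, so I do not expect any genuine obstacle here. The one point requiring care is bookkeeping: matching the half-width $(\ln\tau)/2$ of $|g(\Gamma_d)|$ against the threshold $\htau$ so that the effective Gaussian window lands inside the region where \Lem{rho-v} supplies its lower bound — which is exactly why $\htau$ was defined as $\max(\ln\tau,2)$ rather than just $\ln\tau$.
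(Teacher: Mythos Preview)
Your argument is correct and follows essentially the same route as the paper: write $g_v(\Gamma_d)=g(\Gamma_d)+\ln\rho_v$ with $|g(\Gamma_d)|\le(\ln\tau)/2$, express the expectation as the integral $\int e^{-d(x+a)^2/2}f_v(x)\,dx$, apply the pointwise lower bound on $f_v$ from \Lem{rho-v} over the window $|x|\le\htau$, and finish with an elementary truncated-Gaussian estimate yielding the $1/\sqrt{d}$ factor. The only cosmetic difference is that the paper restricts the $x$-range to $[-\htau,\htau]$ and then bounds the shifted Gaussian via its tails (using $\htau\pm\xi_{r,d}\in[\pm1,\pm\infty)$), whereas you restrict the shifted variable $y$ to $[-\htau/2,\htau/2]$ and bound directly; both give the same conclusion.
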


\begin{proof} Define $\xi_{r,d} = (r-\theta_d)\ln \tau$.
Since $\theta_d = \ln(d/\lambda)/\ln \tau$,
$$ g_v(r) = r\ln \tau - \ln(d/\lambda_v) = r\ln \tau - \ln(d/\lambda) + \ln \rho_v
= \xi_{r,d} + \ln \rho_v $$
Hence,
$$ \EX[\exp(- dg_v(r)^2/2)] = \EX[\exp[-d(\ln \rho_v + \xi_{r,d})^2/2]] $$
Since we set $r = \nint{{\theta_d}}$, $|\xi_{r,d}| \leq (\ln \tau)/2$. 
Let us now evaluate the expectation. The pdf of $\ln \rho_v$
is denoted by $f_v$. The expectation is given by an integral.
To distinguish the $d$ referring to degree, and the $d$ referring to the infinitesimal, we shall use
$(d	)$ in parenthesis for the infinitesimal. We hope this slight abuse of notation will not create a problem, since our
integrals are not too confusing. 
By \Lem{rho-v}, $f_v(x) \geq \nuone(c)$ for $|x| \leq \htau$.
\begin{eqnarray*}
\EX[\exp(- dg_v(r)^2/2)] & = & \int^{+\infty}_{-\infty} \exp[-d(x+\xi_{r,d})^2/2] f_v(x) (dx) \\
& \geq & \nuone(c) \int^{\widehat{\tau}}_{-\htau} \exp[-d(x+\xi_{r,d})^2/2] (dx) \\
& = & \nuone(c) \int^{\htau + \xi_{r,d}}_{-\htau + \xi_{r,d}} \exp[-dx^2/2] (dx) \\
& = & \nuone(c) \Big[ \int^{+\infty}_{-\infty} \exp[-dx^2/2] (dx) - \int^{+\infty}_{\htau + \xi_{r,d}} \exp[-dx^2/2] (dx) \\
& & - \int^{-\htau + \xi_{r,d}}_{-\infty} \exp[-dx^2/2] \Big] (dx) 
\end{eqnarray*}
We have $|\xi_{r,d}| \leq (\ln \tau)/2$ and $\htau = \max(2,\ln \tau)$.
Hence, $\htau + \xi_{r,d} \geq 1$ and $-\htau + \xi_{r,d} \leq -1$.	
\begin{eqnarray*} 
\EX[\exp(- dg_v(r)^2/2)] & \geq & \nuone(c) \Big[ \int^{+\infty}_{-\infty} \exp[-dx^2/2] (dx) - \int^{+\infty}_{1} \exp[-dx^2/2] (dx) \\
& & - \int^{-1}_{-\infty} \exp[-dx^2/2] \Big] (dx) \\
& = & (\nuone(c)/\sqrt{d}) \Big[ \int^{+\infty}_{-\infty} e^{-x^2/2}dx - 2\int^{+\infty}_{\sqrt{d}} e^{-x^2/2}dx \Big]
\end{eqnarray*}
The first integral is just $\sqrt{2\pi}$. The second is a tail probability of the standard Gaussian,
bounded by $\int^{+\infty}_{y} e^{-x^2/2}dx < e^{-y^2/2}/y$ (Lemma 2, pg. 175 of~\cite{Fel50}). The second
term is at most $2e^{-d^2/2}/\sqrt{d} < \sqrt{\pi}$ (for sufficiently large $d$). Therefore,
we can set function $\nufour(c)$ such that $\EX[\exp(- dg_v(r)^2/2)] \geq \nufour(c)/\sqrt{d}$.
\end{proof}

\begin{proof}[of \Thm{lognormal}] This is a direct consequence of the previous claims.
Set $r = \Gamma_d$.
By \Clm{sum} and linearity of expectation, $\EX[X_d] = \EX[\EX[X_d | \bT]] \geq ((1-o(1))/\sqrt{2\pi d}) \sum_{v \in S_r} \EX[\exp(-d g_v(r)^2/2)]$.
\Lem{norm} tells us that $\EX[\exp(-d g_v(r)^2/2)] \geq \nufour(c)/\sqrt{d}$. Hence, $\EX[X_d] \geq \frac{\nu(c)}{d} {\ell \choose {\ell/2 + \Gamma_d}}$.
\end{proof}

\subsection{Subtleties in adding noise}
One might ask why we add noise in this particular fashion, and whether other ways
of adding noise are equally effective. Since we only need $\ell$ random numbers,
it seems intuitive that adding ``more noise" could only help. For example,
we might add noise on a per edge basis, i.e., at each level $i$ of \emph{every} edge insertion, we choose a new random perturbation $T_i$ of $T$. Interestingly, this
version of noise does not smooth out the degree distribution, as shown in \Fig{badnoise}.  In this figure, the red curve corresponds to the degree distribution of the graph generated by NSKG  with Graph500 parameters, $\ell=26$, and $b=0.1$.  The blue curve corresponds to generation  by adding noise per edge. 
As seen in this figure, adding noise per edge has hardly any effect on the
oscillations, while NSKG provides a smooth degree distribution curve. (These results are fairly consistent over different parameter choices.) It is crucial that we use the \emph{same} noisy $T_1,\ldots,T_\ell$
for every edge insertion.

\begin{figure}
  \centering
  \includegraphics[width=.5\columnwidth,trim=0 0 0 0,clip]{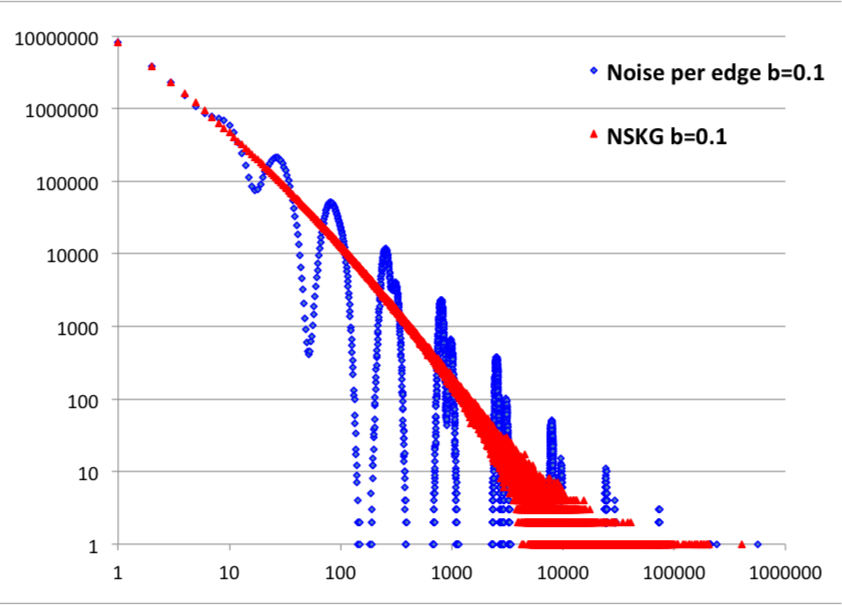}
  \caption{ Comparison of degree distribution of graphs generated by  NSKG  and by adding noise per edge for Graph500 parameters and $\ell=26$.  }
  \label{fig:badnoise}
\end{figure}

\section{Expected Number of Isolated Vertices}
\label{sec:vertices}

In this section, we give a simple formula for the number of isolated vertices in SKG.
We focus on the symmetric case,
where $t_2 = t_3$  in the  matrix $T$. We assume that $\ell$ is even
in the following, but the formula can be  extended  for 
$\ell$ being odd.
The real  contribution here is 
a clearer understanding of  how  many vertices  SKG leaves  isolated and how the SKG parameters affects this number.

\begin{theorem} \label{thm:est-isol} Consider SKG with $T$ symmetric and let $I$ denote the
  number of isolated vertices. With probability $1-o(1)$,
  \begin{equation}
    \label{eq:isovert}
    I = (1 \pm o(1))\sum_{r = -\ell/2}^{r = \ell/2} {\ell\choose \ell/2+r} \exp(-2\lambda\tau^r) .
  \end{equation}
\end{theorem}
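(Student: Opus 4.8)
The plan is to compute $\EX[I]$ by summing a per-vertex isolation probability over all vertices, and then to upgrade the expectation to a high-probability statement via concentration. Throughout I use the slice decomposition: $v$ ranges over $S_r$, and $|S_r| = \binom{\ell}{\ell/2+r}$.

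\textbf{Per-vertex isolation probability.} First I would fix $v \in S_r$ and compute the probability $q_v$ that a single edge insertion is incident to $v$ (as source or as sink). By \Clm{prob-pr}, the probability the insertion makes $v$ a source is $p_r = (1-4\sigma^2)^{\ell/2}\tau^r/n$; since $T$ is symmetric, the probability it makes $v$ a sink is the same $p_r$. These two events overlap exactly on the self-loop $(v,v)$, whose probability is $s_r = t_1^{\ell/2+r} t_4^{\ell/2-r}$, so by inclusion--exclusion $q_v = 2p_r - s_r$. Because the $m$ insertions are independent, $\pr[v \text{ isolated}] = (1-q_v)^m$. Using $m p_r = \lambda\tau^r$ and $(1-q_v)^m = (1\pm o(1))\exp(-mq_v)$, the target is $(1-q_v)^m = (1\pm o(1))\exp(-2\lambda\tau^r)$, which amounts to showing $m s_r = o(1)$ and $m q_v^2 = o(1)$ in the range of $r$ that matters.

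\textbf{Controlling the error terms and summing.} I would split slices by the size of $\lambda\tau^r$. In the relevant range $\lambda\tau^r \le \ln n$: since $t_1 < t_1+t_2 = 1/2+\sigma$ and $t_4 < t_3+t_4 = 1/2-\sigma$, the ratio $s_r/p_r = (t_1/(1/2+\sigma))^{\ell/2+r}(t_4/(1/2-\sigma))^{\ell/2-r}$ is exponentially small in $\ell$, hence $m s_r = \lambda\tau^r (s_r/p_r) = o(1)$; likewise $m q_v^2 = O((\lambda\tau^r)^2/m) = O(\ln^2 n/m) = o(1)$. This gives $\pr[v\text{ isolated}] = (1\pm o(1))\exp(-2\lambda\tau^r)$ uniformly over this range. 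For the remaining slices I only need crude bounds: $s_r \le p_r$ forces $q_v \ge p_r$, so $\pr[v\text{ isolated}] \le e^{-mp_r} = e^{-\lambda\tau^r} \le n^{-1}$ once $\lambda\tau^r \ge \ln n$, and the matching terms $\binom{\ell}{\ell/2+r}\exp(-2\lambda\tau^r)$ are also $\le n^{-1}$; summed over $O(\ell)$ such slices both sides lose only an additive $o(1)$. (The slices with $p_r \ge 1/\sqrt m$ are dispatched even more cheaply by \Clm{large-d}.) Linearity of expectation with $|S_r| = \binom{\ell}{\ell/2+r}$ then yields $\EX[I] = (1\pm o(1))\sum_{r=-\ell/2}^{\ell/2}\binom{\ell}{\ell/2+r}\exp(-2\lambda\tau^r)$, and the same estimates show this right-hand side is $\Theta(n)$, since a constant fraction of slices (all sufficiently negative $r$) have $\lambda\tau^r = O(1)$ and thus $\exp(-2\lambda\tau^r) = \Omega(1)$.

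\textbf{Concentration.} The main obstacle is that the isolation events for different vertices are correlated through the shared edge insertions, so the expectation alone does not give the claimed high-probability statement. I would view $I$ as a function of the $m$ independent insertions and observe that re-sampling a single insertion changes the isolation status of at most four vertices (the old and new source and the old and new sink), so $I$ has bounded differences equal to $4$. McDiarmid's inequality then gives $\pr[\,|I - \EX[I]| \ge t\,] \le 2\exp(-t^2/(8m))$. Taking $t = \epsilon\,\EX[I]$ with $\epsilon \to 0$ slowly (say $\epsilon = n^{-1/4}$) and using $\EX[I] = \Theta(n) \gg \sqrt{m} = \sqrt{\Delta n}$, the bound becomes $\exp(-\Omega(n^{1/2})) = o(1)$. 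Hence $I = (1\pm o(1))\EX[I] = (1\pm o(1))\sum_{r}\binom{\ell}{\ell/2+r}\exp(-2\lambda\tau^r)$ with probability $1-o(1)$. A second-moment argument bounding the covariance of pairwise isolation events would be an alternative route to the same concentration conclusion.
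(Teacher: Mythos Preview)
Your proof is correct and follows essentially the same approach as the paper: compute the per-vertex isolation probability via inclusion--exclusion on the in-/out-edge events, approximate $(1-q_v)^m$ by $\exp(-2\lambda\tau^r)$, sum by linearity of expectation, and then apply McDiarmid's bounded-differences inequality for concentration. The only differences are cosmetic: the paper takes Lipschitz constant $c=2$ (a single re-sampled insertion can increase $I$ by at most $2$ via the removed endpoints and decrease it by at most $2$ via the added endpoints, so the net change is in $[-2,2]$), and it lower-bounds $\EX[I]$ by the single term $\binom{\ell}{\ell/2}\exp(-2\lambda)$ rather than arguing $\EX[I]=\Theta(n)$.
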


\begin{claim} \label{clm:prob-inc-edge} %
  Let $q_r$ be the probability that a single edge insertion produces an
  in-edge or out-edge incident to $v \in S_r$. Then, for SKG with $T$ symmetric, 
  \begin{displaymath}
    q_r = (1 \pm o(1))\frac{2(1-4\sigma^2)^{\ell/2} \tau^r}{n}.
  \end{displaymath}
\end{claim}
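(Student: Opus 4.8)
The plan is to reduce the whole computation to \Clm{prob-pr} via inclusion--exclusion on the two ways a single insertion can touch $v$. Write $A$ for the event that the inserted edge has $v$ as its \emph{source} (an out-edge) and $B$ for the event that $v$ is its \emph{sink} (an in-edge); the inserted edge is incident to $v$ exactly when $A \cup B$ occurs, so $q_r = \Pr[A] + \Pr[B] - \Pr[A \cap B]$. I would read off $\Pr[A] = p_r = (1-4\sigma^2)^{\ell/2}\tau^r/n$ directly from \Clm{prob-pr}. For $\Pr[B]$, I would rerun that argument tracking the sink instead of the source: at each level a sink bit $0$ forces the insertion into the left half, with probability $t_1 + t_3$, and a sink bit $1$ into the right half, with probability $t_2 + t_4$. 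Because $T$ is symmetric ($t_2 = t_3$), these equal $t_1 + t_2 = \tfrac12 + \sigma$ and $t_3 + t_4 = \tfrac12 - \sigma$ respectively---exactly the per-level probabilities from the out-edge computation---so the same product yields $\Pr[B] = p_r$.

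The substantive step is to show the overlap $\Pr[A \cap B]$ is negligible next to $p_r$. The event $A \cap B$ is precisely that the insertion is the self-loop $(v,v)$: at every level both the source and the sink bit must match $v$, i.e.\ the top-left quadrant (probability $t_1$) is chosen when the bit is $0$ and the bottom-right quadrant (probability $t_4$) when it is $1$. Since a vertex in $S_r$ has $\ell/2 + r$ zeros and $\ell/2 - r$ ones, this gives $\Pr[A \cap B] = t_1^{\ell/2+r} t_4^{\ell/2-r}$. I would then bound the ratio
\[
\frac{\Pr[A\cap B]}{p_r} = \left(\frac{t_1}{1/2+\sigma}\right)^{\ell/2+r}\left(\frac{t_4}{1/2-\sigma}\right)^{\ell/2-r},
\]
using $\tfrac12 + \sigma = t_1 + t_2$ and $\tfrac12 - \sigma = t_3 + t_4$, so that both bases are strictly below $1$ because $t_2, t_3 > 0$. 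Setting $c = \max\bigl(t_1/(t_1+t_2),\, t_4/(t_3+t_4)\bigr) < 1$ and using that both exponents are nonnegative and sum to $\ell$, the ratio is at most $c^\ell = o(1)$, and crucially this bound is \emph{uniform} over all $r \in [-\ell/2, \ell/2]$.

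Combining the pieces, $q_r = 2p_r - \Pr[A\cap B] = 2p_r\bigl(1 - O(c^\ell)\bigr) = (1 \pm o(1))\,2(1-4\sigma^2)^{\ell/2}\tau^r/n$, which is the claim. The only point requiring a little care is the uniformity of the self-loop bound in $r$ (it is what justifies absorbing the overlap into the $(1\pm o(1))$ factor for every slice simultaneously); everything else is a direct reuse of \Clm{prob-pr} together with the symmetry observation. I do not expect a genuine obstacle here---the argument is essentially bookkeeping once the inclusion--exclusion decomposition and the in-edge/out-edge symmetry are set up.
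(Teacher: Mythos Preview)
Your proposal is correct and follows essentially the same approach as the paper: inclusion--exclusion into out-edge, in-edge, and self-loop events, with symmetry of $T$ giving $\Pr[B]=p_r$ and the self-loop term shown to be $o(p_r)$. Your treatment of the self-loop ratio is in fact more explicit and more careful than the paper's (which simply asserts the ratio is at most $\sigma^\ell$), and your remark on uniformity in $r$ is a nice addition.
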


\begin{proof} Let $\cE_o$ (resp. $\cE_i$) be the event that a single edge insertion is an in-edge (resp. out-edge)
of $v$. We have $q_r = \Pr(\cE_o) + \Pr(\cE_i) - \Pr(\cE_o \cup \cE_i)$.
By \Clm{prob-pr} and the symmetry to $T$, the first two probabilities are $\frac{(1-4\sigma^2)^{\ell/2} \tau^r}{n}$.
The last is the probability that the edge insertion leads to a self-loop at $v$.
This is at most $\sigma^\ell \Pr(\cE_o)$. Since $\sigma < 1$, this is $o(\Pr(\cE_o))$.
\end{proof}

As before, we can assume that $q_r \leq 1/\sqrt{m}$. By \Clm{large-d}, if $q_r \geq p_r \geq 1/\sqrt{m}$,
then with probability tending to $1$, vertices in slice $r$ are not isolated. Hence, we
can ignore such vertices when computing estimates for $I$.

\begin{claim} \label{clm:prob-isol} %
  Let $v \in S_r$ and assume $q_r \leq 1/\sqrt{m}$. Then, for SKG with
  $T$ symmetric,
  \begin{displaymath}
    \pr[\text{\rm $v$ is isolated}] =
    (1 \pm o(1))\exp(-2\lambda\tau^r).
  \end{displaymath}
\end{claim}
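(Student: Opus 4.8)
The plan is to reproduce the computation behind \Lem{prob-d}, but for the ``$d=0$'' event measured against edges incident to $v$ in \emph{either} direction. First I would use the fact that the $m$ edge insertions in SKG are mutually independent, so that $v$ is isolated precisely when none of the $m$ insertions produces an edge incident to $v$. Since \Clm{prob-inc-edge} has already combined the in-edge and out-edge events (and corrected for the self-loop overlap by inclusion-exclusion) into the single per-insertion incidence probability $q_r$, the isolation probability factorizes cleanly:
\begin{displaymath}
  \pr[\text{$v$ is isolated}] = (1-q_r)^m.
\end{displaymath}

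Next I would turn this power into an exponential and identify the exponent. Because we are assuming $q_r \leq 1/\sqrt{m}$, the Taylor estimate already invoked in \Lem{prob-d}, namely $(1-x)^{m} = (1\pm o(1))e^{-xm}$ for $x \leq 1/\sqrt{m}$, applies with $x = q_r$ and gives $(1-q_r)^m = (1\pm o(1))\exp(-q_r m)$. Substituting the value of $q_r$ from \Clm{prob-inc-edge} and recalling that $\lambda = \Delta(1-4\sigma^2)^{\ell/2}$ with $\Delta = m/n$,
\begin{displaymath}
  q_r m = (1\pm o(1))\,\frac{2(1-4\sigma^2)^{\ell/2}\tau^r}{n}\,m = (1\pm o(1))\,2\lambda\tau^r,
\end{displaymath}
and plugging this exponent back in produces the claimed $(1\pm o(1))\exp(-2\lambda\tau^r)$.

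The step I expect to require the most care is controlling the error terms inside the exponent rather than the algebra itself. Passing from $q_r m = (1\pm o(1))2\lambda\tau^r$ to $\exp(-q_r m) = (1\pm o(1))\exp(-2\lambda\tau^r)$ needs the additive slack $o(2\lambda\tau^r)$, together with the quadratic Taylor correction of order $m q_r^2$, to be $o(1)$. This is automatic in the range that governs \Thm{est-isol}: by \Clm{large-d} the slices with $q_r \geq 1/\sqrt{m}$ can be discarded (their vertices are non-isolated with probability $1-e^{-\Omega(\sqrt m)}$), and the slices that actually contribute to $I$ are those where $2\lambda\tau^r$ is small, so that $m q_r = \Theta(2\lambda\tau^r)$ is bounded and both corrections are genuinely $o(1)$. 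I would therefore state the per-vertex estimate under the hypothesis $q_r \leq 1/\sqrt m$ and note that it is used only on this clean regime when the sum for $I$ is assembled.
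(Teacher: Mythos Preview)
Your proposal is correct and follows essentially the same route as the paper: write $\pr[v\text{ isolated}]=(1-q_r)^m$, apply the Taylor estimate $(1-x)^m=(1\pm o(1))e^{-xm}$ under $q_r\le 1/\sqrt{m}$, and substitute the expression for $q_r$ from \Clm{prob-inc-edge} to obtain $(1\pm o(1))\exp(-2\lambda\tau^r)$. Your added discussion of when the $o(1)$ inside the exponent can legitimately be pulled outside is more careful than the paper, which simply asserts convergence ``for large $\ell$''; your observation that the claim is only invoked on slices where $2\lambda\tau^r$ stays bounded is exactly the right justification.
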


\begin{myproof} %
  Using \Clm{prob-inc-edge} and $(1 - x)^m = (1 \pm o(1))e^{-xm}$, for $|x| \leq 1/\sqrt{m}$,
  \begin{displaymath}
    (1-q_r)^m = (1 \pm o(1))\exp(-2(1 \pm o(1))\Delta(1-4\sigma^2)^{\ell/2} \tau^r) = 
    (1 \pm o(1))\exp(-2(1 \pm o(1))\lambda \tau^r).
  \end{displaymath}
  For large $\ell$, this converges to $\exp(-2\lambda\tau^r)$.
\quad\myproofend
\end{myproof}

\begin{proof}[of \Thm{est-isol}] %
By \Clm{prob-isol}
and linearity of expectation, the expected number of isolated vertices
is 
\begin{displaymath}
  (1 \pm o(1)) \sum_{r = -\ell/2}^{r = \ell/2} {\ell\choose \ell/2+r} \exp(-2\lambda\tau^r).
\end{displaymath}
To bound that actual number of isolated vertices, we use concentration inequalities
for functions of independent random variables. Let $Y$ denote the number of
isolated vertices, and $X_1, X_2, \ldots, X_m$ be the labels of the $m$
edge insertions. Note that all the $X_i$'s are independent, and $Y$
is some fixed function of $X_1, X_2, \ldots, X_m$. Suppose we fix all the
edge insertions and just modify one insertion. Then, the number of
isolated vertices can change by at most $c = 2$. Hence, the function
defining $Y$ satisfies a Lipschitz condition. 
This means that changing a single argument of $Y$ (some $X_i$) modifies
the value of $Y$ by at most a constant ($c$).
By McDiarmid's inequality~\cite{Mc89},
\begin{displaymath}
  \pr[|Y - \EX[Y]| > \epsilon] < 2\exp\left(-\frac{2\epsilon^2}{c^2m}\right).
\end{displaymath}
Setting $\epsilon = \sqrt{m \log m}$, we get the probability that $Y$ deviates
from its expectation by more than $\sqrt{m\log m}$ is $o(1)$. 
The expected number of vertices is at least ${\ell \choose \ell/2} \exp(-2\lambda)$,
and $\sqrt{m \log m}$ is a lower order term with respect to this quantity. This
completes the proof.
\end{proof}

The fraction of isolated vertices in a slice $r$ is essentially
$\exp(-\lambda \tau^r)$. Note that $\tau$ is larger than $1$.
Hence, this is a decreasing function of $r$. This is quite natural,
since if a vertex $v$ has many zeros in its representation (higher slice),
then it is likely to have a larger degree (and less likely to be isolated).
This function is doubly exponential in $r$, and therefore decreases  quickly with $r$.
The fraction of isolates rapidly goes to $0$ (resp. $1$) as $r$ is 
positive (resp. negative).

\subsection{Effect of noise on isolated vertices}

The introduction of noise was quite successful in correcting the degree distribution but has little effect
on the number of isolated vertices. This is not surprising, considering the noise affects fat tail behavior
of the degree distribution. The number of isolated vertices is a  different aspect of the degree distribution.
The data  presented in \Tab{noise:iso}  clearly shows that the number of isolated vertices is quite resistant to noise.  While there is some decrease in the  number of isolated vertices,  this quantity is very small compared to the total number of isolated vertices.  We have observed similar results on the other parameter settings. 
\begin{table}[h]
\centering 
\caption{\label{tab:noise:iso}  Percentage of isolated vertices with different noise levels for the GRAPH500 parameters and $\ell=26$} 
\begin{tabular}{c|c}
\hline 
Max. noise level ($b$) &  \% isolated vertices \\\hline 
0 & 	51.12 \\
0.05	& 49.26 \\
0.06	& 49.12 \\
0.07	& 49.06 \\ 
0.08	& 49.07 \\
0.09	& 49.16 \\
0.1	& 49.34 \\\hline
\end{tabular}
\end{table}

In addition to this empirical study, we can also give some mathematical intuition behind these observations. The equivalent statement of \Clm{prob-isol}
for NSKG is 
$$ \pr[\text{\rm $v$ is isolated}] \geq (1-o(1))\exp(-2\lambda\tau^r) = (1-o(1))[\exp(-2\lambda\tau^r)]^{\rho_v} $$
The noiseless version of this probability is $[\exp(-2\lambda\tau_r)]$.
Note that the probability now is a random variable that depends on $T$, since $\rho_v$ depends
on the noise. \Lem{norm} tells us that $\ln \rho_v$ lies mostly in the range $[1-c'/\sqrt{\ell},1+c'/\sqrt{\ell}]$
(for constant $c'$), and is concentrated close to $1$.

We are mainly interested in the case when the probability that $v$ is isolated
is \emph{not} vanishingly small (is at least, say $0.01$). As $\ell$ grows, $\rho_v$ is close
to being $1$, and deviations are quite small. So, when we take the noiseless probability to the $\rho_v$th
power, we get almost the same value.

\subsection{Relation of SKG parameters to the number of isolated vertices:} 
When $\lambda$ decreases, the number
of isolated vertices increases. Suppose we fix the SKG matrix and average
degree $\Delta$, and start increasing $\ell$. Note that this is done
in the Graph500 benchmark, to construct larger and larger graphs. The value
of $\lambda$ decreases exponentially in $\ell$, so the number
of isolated vertices will increase. 
Our formula suggests ways of counteracting
this problem. The value of $\Delta$ could be increased, or the value $\sigma$
could be decreased. But, in general, this will be a problem for generating
large sparse graphs using a fixed SKG matrix.

When $\sigma$ increases, then $\lambda$ decreases and $\tau$ increases.
Nonetheless, the effect of $\lambda$ is much stronger than that of $\tau$.
Hence, the number of isolated vertices will increase as $\sigma$ increases.
In \Tab{isol-500}, we compute the estimated number of isolated vertices
in graphs for  the Graph500 parameters. Observe how the
fraction of isolated vertices consistently increases as $\ell$ is
increased. For the largest setting of $k=42$, only one fourth
of the vertices are not isolated.

\section{$k$-cores in SKG}
\label{sec:cores}

Structures of $k$-cores are an important part 
of social network analysis \cite{CaHaKi+07,AlDaBa+08,KuNoTo10}, as they are a manifestation of the
community structure and high connectivity of these graphs.

\begin{definition} \label{def:core} Given an undirected graph $G = (V,E)$, 
the \emph{subgraph induced by set $S \subseteq V$},
is denoted by $G|_S := (S,E')$, where $E'$ contains every edge of $E$ that is completely
contained in $S$. 
For an undirected graph, the \emph{$k$-core} of $G$ the largest induced subgraph of minimum degree $k$. 
The \emph{max core number} of $G$ is the largest $k$ such that $G$ contains a (non-empty)
$k$-core. (These can be extended to directed versions: a $k$-out-core is a subgraph with
min out-degree $k$.)
\end{definition}

A bipartite core is an induced subgraph with every vertex
has \emph{either} a high in-degree or out-degree. The former
are called \emph{authorities} and the latter are \emph{hubs}.
Large bipartite cores are present in web graphs and
are an important structural component \cite{GiKlRa98,Kl99}.
Note that if we make the directed graph undirected (by simply
removing the directions), then a bipartite
core becomes a normal core. Hence, it is useful to compute cores 
in a directed graph by making it undirected.

We begin by comparing the sizes of $k$-cores in real graphs, and their models
using SKG \cite{LeChKlFa10}. Refer to \Fig{core-examples}. We plot the size
of the maximum $k$-core with $k$. The $k$ at which the curve ends is the
max core number. (For CAHepPh, we look at undirected cores, since this is
an undirected graph. For WEBNotreDame, a directed graph, we look at
out-cores. But the empirical observations we make holds for all
other core versions.)
For both our
examples, we see how drastically different the curves are.
By far the most important difference is that the curve for the SKG versions are
extremely short.
This means that the max core number is \emph{much smaller} for SKG modeled graphs
compared to their real counterparts. 
For the web graph WEBNotreDame,
we see the presence of large cores, probably an indication of some community structure.
The maximum core number of the SKG version is an \emph{order of magnitude} smaller.
Minor modifications (like increasing degree,
or slight variation of parameters) to these graphs do not increase
the core sizes or max cores numbers much. This is a problem,
since this is strongly suggesting that SKG do not exhibit localized
density like real web graphs or social networks.

\begin{figure}
  \centering
  \includegraphics[width=.45\columnwidth,trim=0 0 0 0,clip]{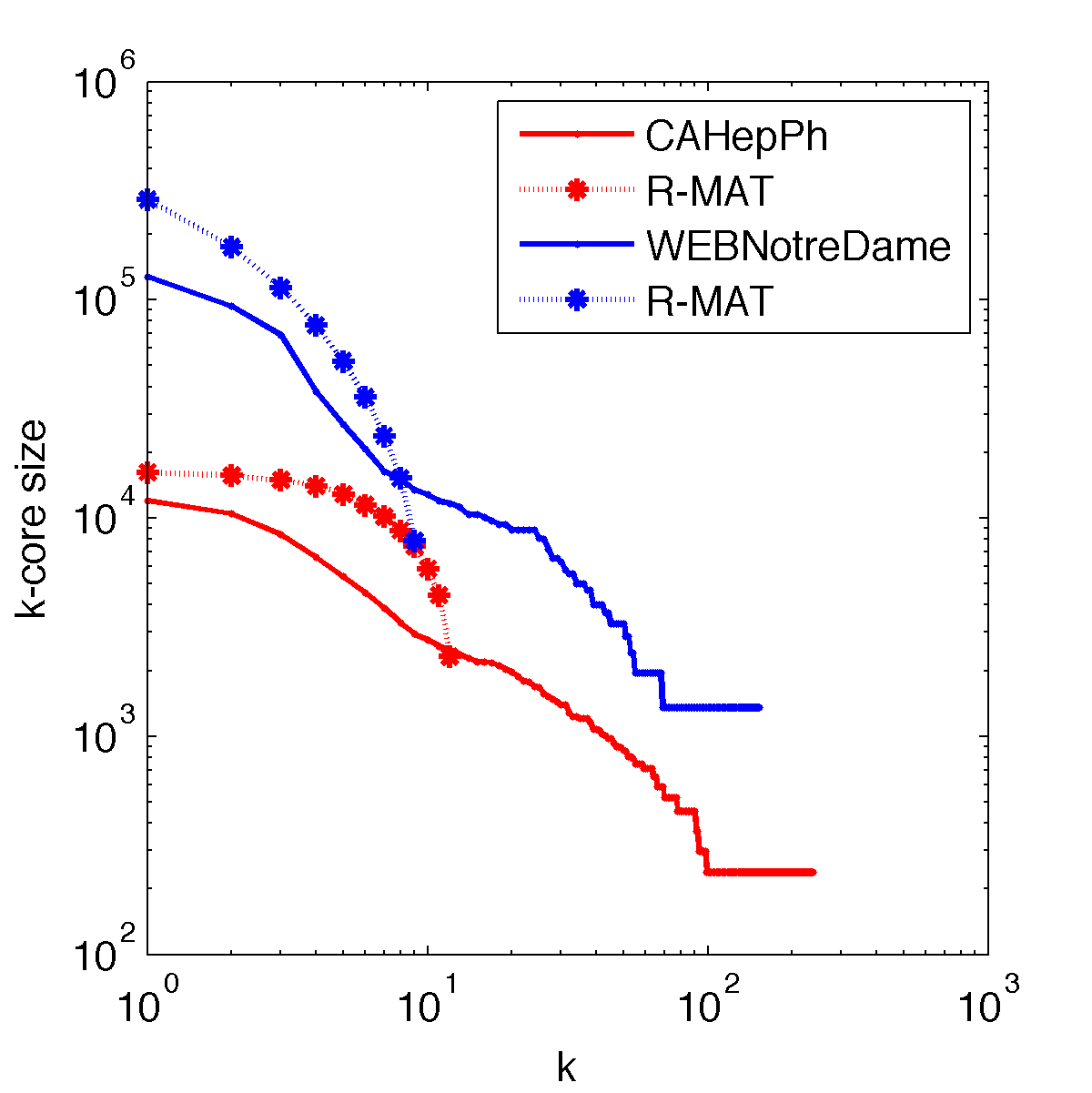}
  \caption{Core decompositions of real graphs and their SKG model. Observe that
  the max core  of SKG is an order of magnitude smaller. }
  \label{fig:core-examples}
\end{figure}

If we wish to use SKG to model real
networks, then it is imperative to understand the behavior
of max core numbers for SKG. Indeed, in \Tab{core}, we see that our observation
is not just an artifact of our examples. SKG consistently have very 
low max core number. Only for the peer-to-peer
Gnutella graphs does SKG match the real data, and this is specifically
for the case where the max core number is extremely small.
For the undirected graph (the first three co-authorship networks),
we have computed the undirected cores. The corresponding
SKG is generated by copying
the upper triangular part in the lower half to get a symmetric matrix (an undirected
graph). The remaining graphs are directed, and we simply remove the
direction on the edges and compute the total core.
Our observations hold for in and out cores as well\confer{(given in full version)}, and for  a wide range of data.
This is an indication
that SKG is not generating sufficiently dense subgraphs. 
\begin{table}[h] 
\caption{Core sizes in real graphs and SKG version} 
\centering %
\begin{tabular}{l c c} %
\hline 
Graph & Real max core & SKG max core\\ [0.5ex]
\hline %
CAGrQc & 43 & 4\\
CAHepPh & 238 & 16\\
CAHepTh & 31 & 5\\
CITHepPh & 30 & 19\\
CITHepTh & 37 & 19\\
P2PGnutella25 & 5 & 5\\
P2PGnutella30 & 7 & 6\\
SOCEpinions & 67 & 43\\
WEBNotreDame & 155 & 31\\%
\hline %
\end{tabular}
\label{tab:core}
\end{table}

We focus our attention on the max core number of SKG. How does this number
change with the various parameters?
The following summarizes our observations.

\begin{empirical} \label{emp:core} For SKG with symmetric $T$, we have the following observations.
\begin{asparaenum}
	\item The max core number increases with $\sigma$. By and large, if $\sigma < 0.1$, 
	max core numbers are extremely tiny.
	\item Max core numbers grow with $\ell$ only when the values of $\sigma$ are sufficiently large.
	Even then, the growth is much slower than the size of the graph. For smaller $\sigma$,
	max core numbers exhibit essentially negligible growth.
	\item Max core numbers increase essentially linearly with $\Delta$.
\end{asparaenum}

\vspace{2ex}
 Large max core numbers require larger values of $\sigma$. 
As mentioned in \Sec{vertices}, increasing
$\sigma$ increases the number of isolated vertices. 
Hence, there is an inherent tension between increasing the max core number and decreasing 
the number of isolated vertices. 
\end{empirical}

For the sake of consistency, we performed the following experiments on the max core
after taking a symmetric version of the SKG graph. 
Our results look the same for in and out cores as well.
In \Fig{core-beta}, we show how increasing $\sigma$ increases the max core number.
We fix the values of $\ell = 16$ and $m = 6 \times 2^{16}$.
(There is nothing special about these values. Indeed the results are basically
identical, regardless of this choice.) Then, we fix $t_1$ (or $t_2$) to some value, and slowly
increase $\sigma$ by increasing $t_2$ (resp. $t_1$). We see that regardless of the 
fixed values of $t_1$ (or $t_2$), the max core consistently increases. But as long
as $\sigma < 0.1$, max core numbers remain almost the same.

\begin{figure*}[t]
  \centering
  \subfloat[Varying $\sigma$]{\label{fig:core-beta}
  \includegraphics[width=.45\textwidth]{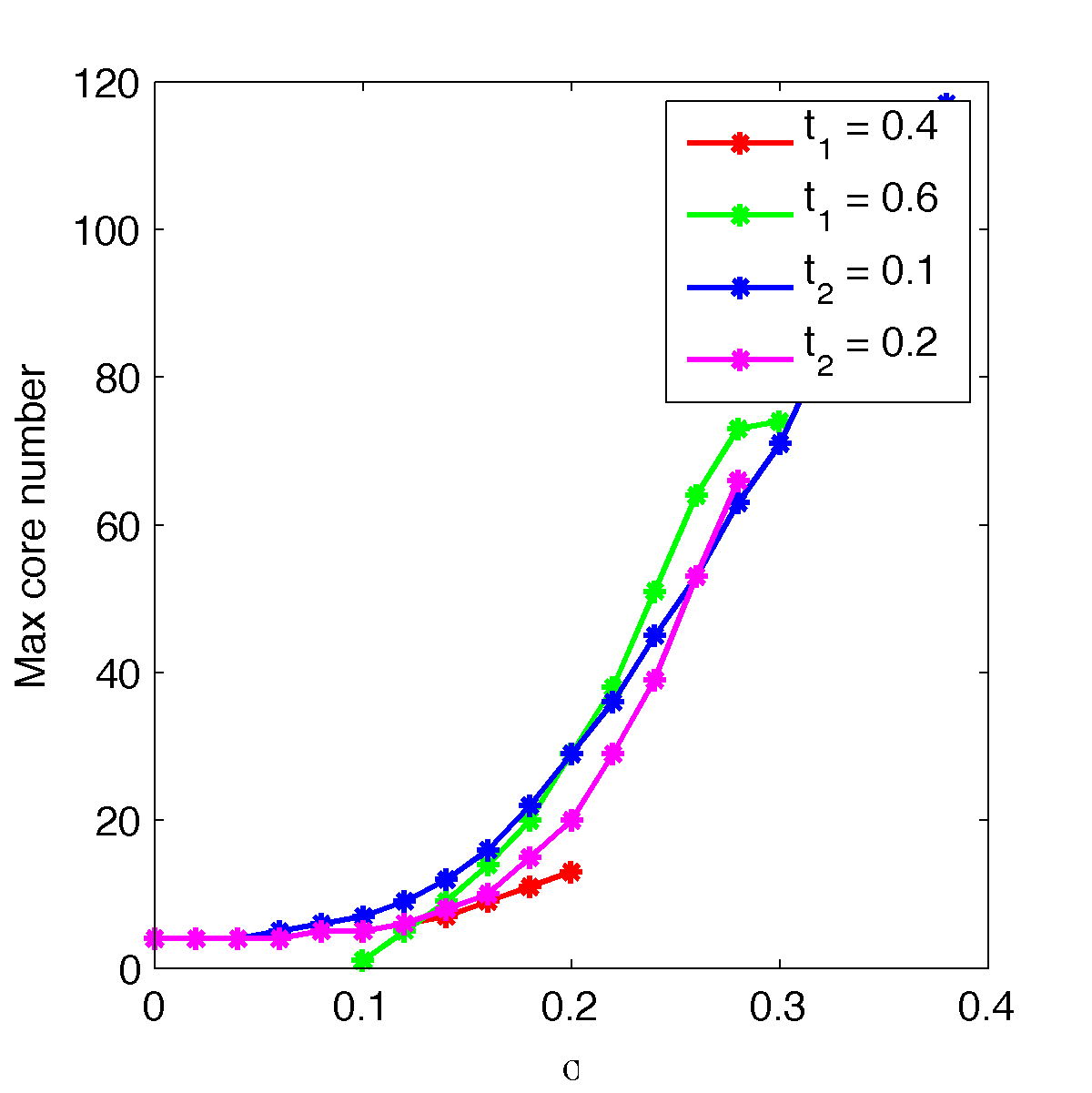}
  }
  \subfloat[Varying $\ell$]{\label{fig:core-k}
  \includegraphics[width=.45\textwidth]{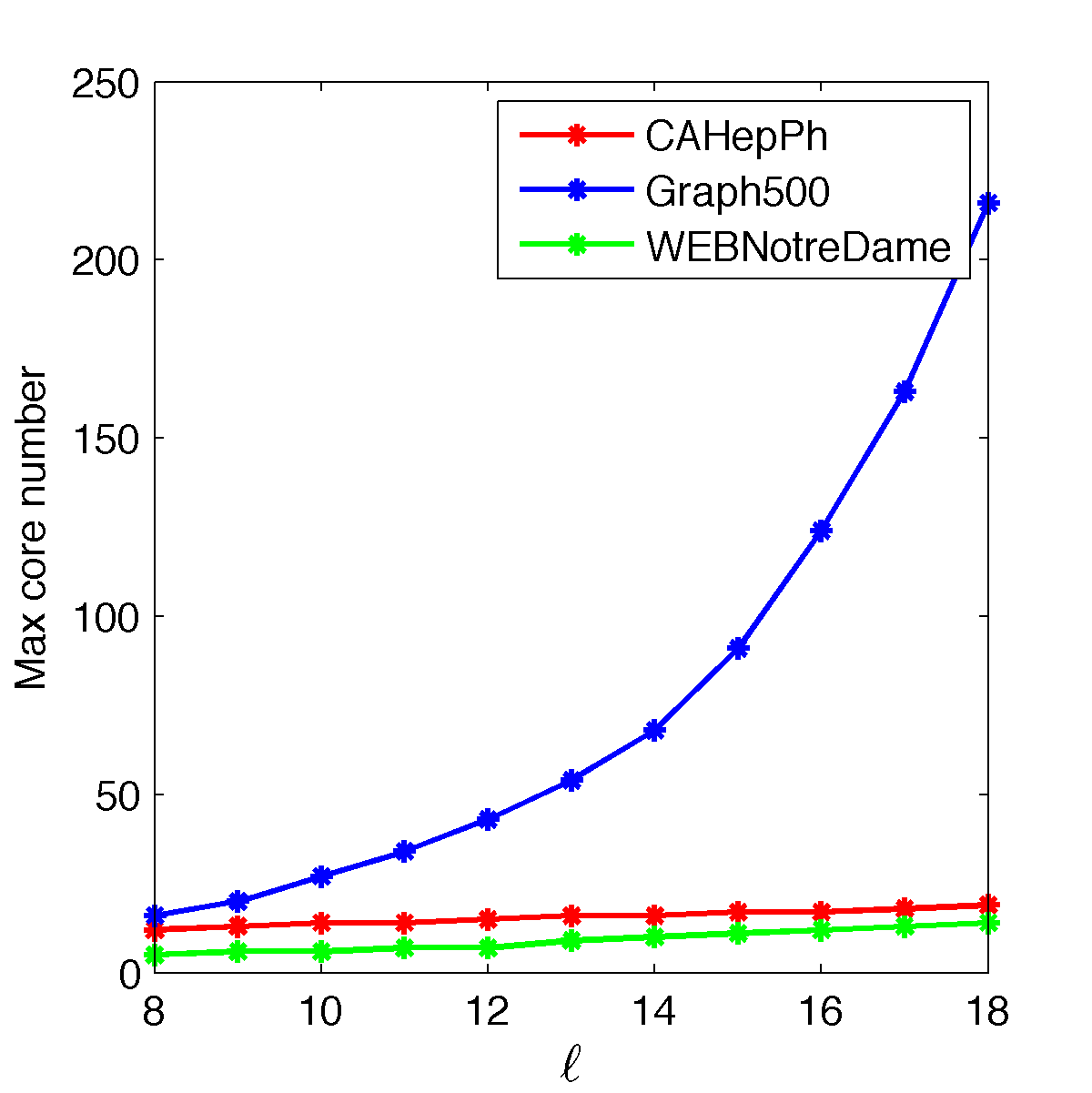}
  }

  \subfloat[Varying $\Delta$]{\label{fig:core-avg}
  \includegraphics[width=.45\textwidth]{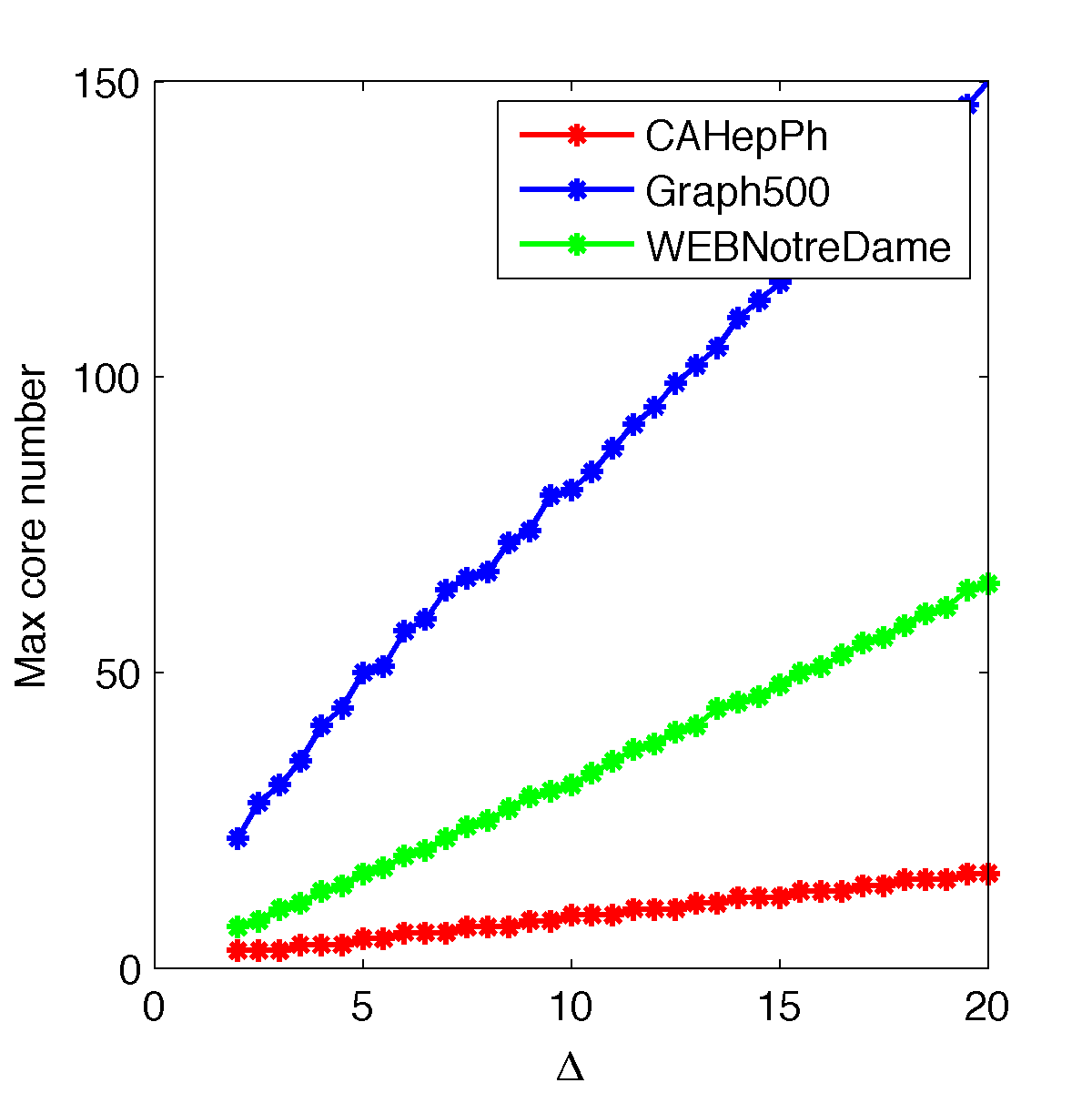}
  }
  \caption{We plot the max core number against various parameters. In the first picture, we plot
  the max core number of an (symmetric) SKG graph with increasing $\sigma$. Next, we show
  how the max core number increases with $\ell$, the number of levels. Observe the major role that the matrix
  $\sigma$ plays. For Graph500, $\sigma$ is much larger than the other parameter sets.
  Finally, we show that regardless of the parameters, the max core number increases linearly
  with $\Delta$.}
  \label{fig:core}
\end{figure*}

In \Fig{core-k}, we fix matrix $T$ and average degree $\Delta$, and only 
vary $\ell$. 
For WEBNotreDame\footnote{Even though the matrix $T$
is not symmetric, we can still define $\sigma$. Also, the off diagonal values
are $0.20$ and $0.21$, so they are almost equal.}, we have $\sigma = 0.18$
and for CA-HEP-Ph, we have $\sigma = 0.11$.
For both cases, increasing $\ell$ barely increases the max core
number. Despite increasing the graph  size  by $8$ orders of magnitude,
the max core number only doubles. Contrast this with the Graph500 setting, where 
 $\sigma = 0.26$, and we see a steady increase with larger $\ell$. This
is a predictable pattern we notice for many different parameter settings:
larger $\sigma$ leads to larger max core numbers as $\ell$ goes up.
Finally, in \Fig{core-avg}, we see that the max core number is basically
linear in $\Delta$.

\subsection{Effect of noise on cores} \label{sec:noise-core}

Our general intuition is that NSKG mainly redistributes
edges of SKG to get a smooth degree distribution, but does not have major effects
on the overall structure of the graph. This is somewhat validated by our studies on
isolated vertices and reinforced by looking at $k$-cores. In \Fig{noisycore}, we plot
the core decompositions of SKG and two versions on NSKG ($b = 0.05$ and $b=0.1$). 
We observe that there are little changes in these decompositions, although there
is a smoothening of the curve for Graph500 parameters. The problem of tiny cores of SKG
is not mitigated by the addition of noise.

 \begin{figure*}[t]
  \centering
  \subfloat[GRAPH500]{
  \includegraphics[width=.45\textwidth]{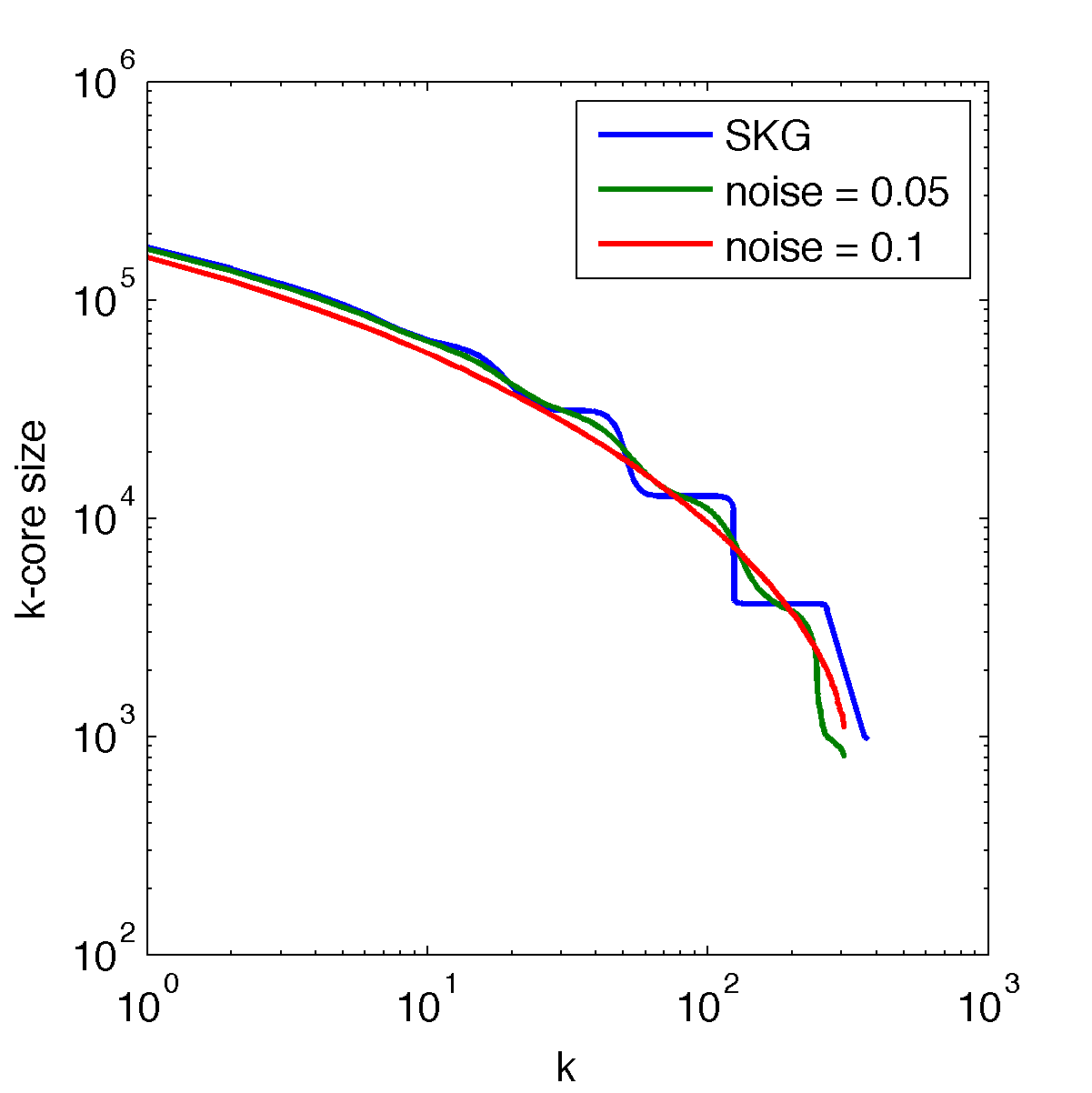}
  }
  \subfloat[WebNotreDame]{
  \includegraphics[width=.45\textwidth]{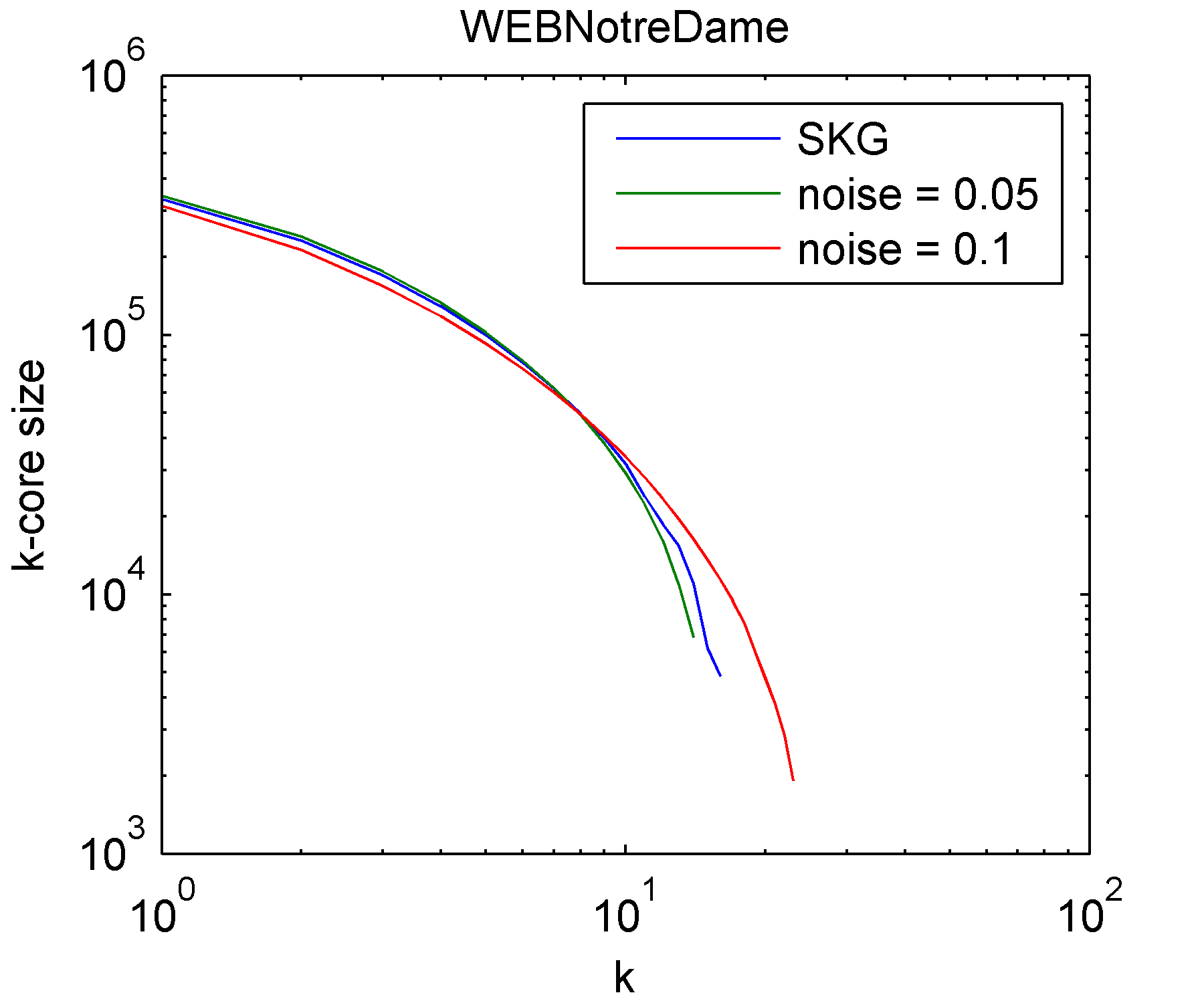}
  }
  
  \subfloat[CAHepPH]{
  \includegraphics[width=.45\textwidth]{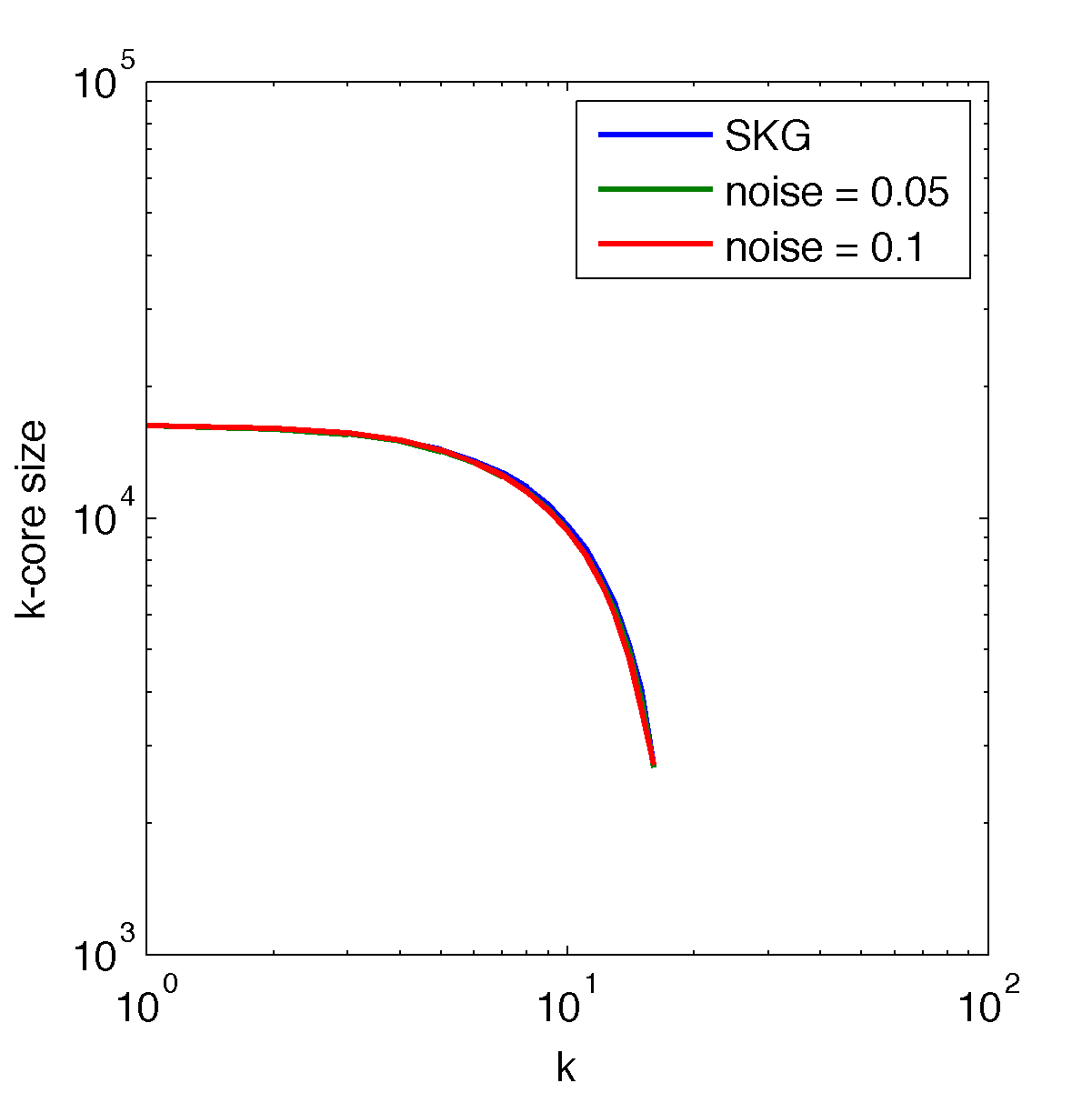}
  }
  \caption{We plot the core decomposition of SKG and NSKG (with 2 settings of noise) for the different
  parameters. Observe that there is only a minor change in core sizes with noise.}
  \label{fig:noisycore}
\end{figure*}

\section{Conclusions}
\label{sec:conclusions}
For a true understanding of a model, a careful theoretical and empirical study of
its properties  in relation to its parameters is imperative. This not
only provides insight into why certain properties arise, but also suggests
ways for enhancement.  One strength of the SKG model  is its amenability to rigorous analysis, which we exploit in this paper.  

We prove strong theorems about the degree distribution, and more significantly
show how adding noise can give a true lognormal distribution by eliminating the oscillations in degree distributions.
Our proposed method of adding noise requires only $\ell$ random numbers all together, and is hence cost effective. 
We want to stress that our major contribution is in providing \emph{both} the theory and
matching empirical evidence.  
The formula for expected number of isolated vertices provides an efficient alternative to methods  for computing 
the full  degree distribution.
Besides requiring fewer operations to compute and being less prone to numerical errors, 
the formula transparently relates the expected number of isolated vertices to  the SKG parameters.
Our studies on core numbers  establish a connection between the model parameters  and the cores of the  resulting graphs.  In particular, we show that  commonly used  SKG parameters  generate tiny  cores,  and the model's ability to generate large cores is  limited.

\begin{ack}
We are grateful to David Gleich for the MATLAB BGL library as well as many helpful discussions. 
We thank Todd Plantenga for creating large SKG and NSKG instances, and for generating \Fig{noisy_degdist_graph500}.
We also thank Jon Berry for checking our Graph500 predictions against real data, 
and also David Bader and Richard Murphy for discussions about the Graph500 benchmark.
We acknowledge the inspiration of Jennifer Neville and Blair Sullivan, who inspired us with their different work on SKG during recent visits to Sandia.
\end{ack}

\end{document}